\NewDocumentCommand{\mref}{m}{\quinn_mref:n {#1}}
\definecolor {processblue}{cmyk}{0.96,0,0,0}
\newtheorem{theorem}{Theorem}
\newtheorem{corollary}{Corollary}[theorem]
\newtheorem{lemma}[theorem]{Lemma}
\newtheorem{prop}{Proposition}
\theoremstyle{definition}
      \theoremstyle{plain}
\newcommand{\equald}{\,{\buildrel d \over =}\,}
\newcommand{\tr}{\textcolor{red}}
\newcommand{\pr}{\mathbb{P}}
\newcommand{\ep}{\mathbb{E}}
\newcommand{\rmd}{\mathrm{D}}
\newcommand{\rmdd}{\mathrm{DD}}
\newcommand{\rms}{\mathrm{S}}
\newcommand{\rmt}{\mathrm{Tot}}
\newcommand{\mic}{\textcolor{black}}
\newenvironment{redsect}{\par\color{black}}{\par}
\def \birthone{\alpha}
\def \deathone{\beta}
\def \growthone{\lambda}
\def \pl{l}
\def \survone{S_1}
\def \rvec{\Phi}
\def \walk{\omega}
\title{Competing evolutionary paths in growing populations\\ with applications to multidrug resistance  }
\date{}                     
\author[1]{Michael D. Nicholson}
\author[2]{Tibor Antal}
\affil[1]{ School of Physics and Astronomy, University of Edinburgh}
\affil[2]{School of Mathematics, University of Edinburgh}
\begin{document}


	\maketitle
	
	\section*{Abstract}
	Investigating the emergence of a particular cell type is a recurring theme in models of growing cellular populations. The evolution of resistance to therapy is a classic example. Common questions are: when does the cell type first occur, and via which sequence of steps is it most likely to emerge?  For growing populations, these questions can be formulated in a general framework of branching processes spreading through a graph from a root to a target vertex. Cells have a particular fitness value on each vertex and can transition along edges at specific rates. Vertices represents cell states, say \mic{genotypes }or physical locations, while possible transitions are acquiring a mutation or cell migration. We focus on the setting where cells at the root vertex have the highest fitness and transition rates are small. Simple formulas are derived for the time to reach the target vertex and for the probability that it is reached along a given path in the graph. We demonstrate our results on \mic{several scenarios relevant to the emergence of drug resistance}, including: the orderings of resistance-conferring mutations in bacteria and the impact of imperfect drug penetration in cancer. 
	
	\tableofcontents

\section{Introduction}\label{sec_intro}
	\footnotetext[1]{mdnicholson5@gmail.com}
		\footnotetext[2]{Tibor.Antal@ed.ac.uk}
%

The timing and manner in which a particular phenotype arises in a population is a central question of theoretical biology \cite{Berestycki:2016,Beerenwinkel:2007,Weinreich:2006,Weissman:2009,Durrett:2009,Gillespie:1983, Zagorski:2016, Gokhale:2009,DeVisser:2014,Tadrowski:2018,Hermsen:2010,Nowak:2002,Nichol:2015}. A typical scenario is to consider an initially monomorphic, wild type population, composed of cells that can acquire mutations, for example single site substitutions on the genome. The phenotype of interest comes to exist after a cell has accrued a specific set of mutations. The interpretations of this event are application dependent, but examples are the genesis of cancer instigated by mutations in a pair of tumour suppressor genes, or the emergence of multidrug resistance via alterations to the genes coding for the target proteins. Regardless of context, the questions of when, and how, the phenotype emerges are of significant interest.

\mic{It is commonly assumed that the population under consideration is a fixed size}. However, with the aim of characterising disease progression, an increasing body of research has been developed to examine the evolutionary dynamics of a growing population. These studies have provided insights on a range of applications, including; cancer genetics \cite{Williams:2016,Bozic:2016,Cheek:2018,Gao:2016}, metastasis formation \cite{Haeno:2007,Haeno:2012,Nicholson:2016,Dewanji:2011,Komarova:2005,Reiter:2018}, drug resistance \cite{Bozic:2013,Kessler:2014,Komarova:2005,Ford:2013}, phylogenetics \cite{Wilkinson:2009}, and the impact of poor drug penetration \cite{Greulich:2012,Hermsen:2012,Fu:2015,Moreno-Gamez:2015}.



Here we continue in the same vein by considering a stochastically growing cellular population, where cells can transition in such a fashion \mic{so} as to alter their, and their offsprings, reproductive capabilities. Such a transition might be due to the acquisition of a (epi)genetic alteration or migration into a new environment. As before, suppose we have a cellular state of interest, for example; a given genotype, a spatial location, or a combination of both. Will this state ever be reached? If it is, when is it reached? And by which sequence of intermediate states? 

To make the discussion clear, let us consider an example application: the emergence of multidrug resistant bacteria.  Suppose an infection begins with a single pathogenic bacterium which is sensitive to two antibiotic therapies, drug $A$ and drug $B$. In the absence of either drug, the initial bacterium initiates a growing colony. Within the colony, when a cell replicates one of the daughter cells can acquire a mutation yielding resistance to either of the drugs. Here our questions are: how long does it take for multidrug resistance to emerge? En route, is resistance to drug $A$ or drug $B$ more likely to arise first? \mic{An ability to answer such questions is key to understanding pre-existing resistance, a common cause of therapy failure in various settings \cite{Bozic:2017}} The scenario is illustrated in Fig.\ \ref{fig_example}. There each vertex represents a cellular type, in this case its resistance profile. The edges represent cell transitions via mutation upon replication.

\mic{
In this article we focus on the setting where the intermediate states have reduced reproductive ability (fitness) relative to the initial cells. This is primarily motivated by the commonly observed cost of resistance \cite{Andersson:2010,Hughes:2015,Enriquez-Navas:2016}, whereby cellular populations that are resistant to a given drug grow more slowly than their sensitive counterparts. A second scenario where we expect a reduction in fitness is drug sensitive cells becoming exposed to toxins, which increases the cell's rate of dying (cytotoxic) or decreases the rate of replication (cytostatic).} As in many biological applications the relevant transition rates are small (\mic{some representative examples are}: measurements for the point mutation rate per cell division of $10^{-9}$ in cancer \cite{Tomasetti:2013} and $10^{-10}$ for bacteria \cite{Lee:2012}, while the dissemination rate of pancreatic cancer cells from the primary tumour was estimated as $10^{-7}$ per cell division in \cite{Haeno:2012}), we concentrate our efforts on the regime of low transition rates. Our main contribution is to provide simple, intuitive formulas that answer the questions posed in the small transition rate limit. These formulas show explicitly the contribution of the model parameters, e.g.\ transition rates and fitness reductions, which allow them to be probed for biological insight. This provides relationships which would be difficult to deduce from simulation alone.

We now move to detail the general framework we study, which will be seen to encompass models introduced in previous works as special cases \cite{Fu:2015,Komarova:2005,Ford:2013,Colijn:2011}. Our main results concerning when and how a particular cell type emerges are then presented, and we follow this by demonstrating our method on several applications.

\section{Results}\label{sec_results}
\subsection{General framework}\label{sec_model}
\begin{figure}[t!]
	\begin{subfigure}[t]{.45\textwidth}
		\centering
		\resizebox{5cm}{5cm}{
			\begin {tikzpicture}[-latex ,auto ,node distance =5.3 cm and 2cm ,on grid ,
			very thick ,
			state/.style ={ circle ,top color =white , bottom color = processblue!20 ,
				draw,processblue , text=blue ,minimum size =2.7cm,font={\Large}}]
			\node[state,text width=2cm,align=center,label={[label distance=0.12cm]90:\huge root}] (1)  at (0,0){Sensitive bacteria};
			\node[state,text width=2cm,align=center] (2)  [below right of = 1] {Resistant to drug $B$};
			\node[state,text width=2cm,align=center] (3)  [below left of = 1] {Resistant to drug $A$};
			\node[state,text width=2cm,align=center,label={[label distance=-4cm]90: \huge target}] (4)  [below right of = 3] {Resistant to drugs $A$ \& $B$};
			
			\path (1) edge[line width=1.8pt] node[above =0.15 cm,rotate=-45,font={\Large}] {Mutation} (2);
			\path (1) edge[line width=1.8pt] node[right =0.3 cm,above=.13cm,rotate=45,,font={\Large}] {Mutation} (3);
			\path (2) edge[line width=1.8pt] node[right =0.3 cm,above=.13cm,rotate=45,,font={\Large}]{Mutation} (4);
			\path (3) edge[line width=1.8pt] node[above =0.15 cm,rotate=-45,font={\Large}] {Mutation} (4);

		\end{tikzpicture}
	}
	\caption{}
	\label{fig_example}
\end{subfigure}
\,
\begin{subfigure}[t]{.45\textwidth}
	\centering
	\begin{tikzpicture}[-latex ,auto ,node distance =1.2 cm and 1.8cm ,on grid ,
	semithick ,
	state/.style ={ circle ,top color =white , bottom color = processblue!20 ,
		draw,processblue , text=blue , minimum width =.5 cm},scale=0.6, every node/.style={scale=0.6}]
	\tikzstyle{every node}=[font=\small]
	\node[state,label={[label distance=0.10cm]90:root}] (1)  at (0,3){$1$};
	\node[state] (2) [below left = of 1] {$2$};
	\node[state] (3) [below = of 1] {$3$};
	\node[state] (4) [below right = of 1] {$4$};
	\node[state] (5) [below left = of 3] {$5$};
	\node[state] (6) [below = of 3] {$6$};
	\node[state] (7) [below right = of 3] {$7$};
	\node[state,label={[label distance=-1.45cm]90:target}] (8) [below = of 6] {$8$};
	
	\path (1) edge[blue] node[above =0.15 cm] {} (2);
	\path (1) edge node[above =0.15 cm] {} (3);
	\path (1) edge[red] node[above =0.15 cm] {} (4);
	
	\path (3) edge node[above =0.15 cm] {} (2);
	
	\path (2) edge[blue] node[above =0.15 cm] {} (5);
	\path (2) edge node[above =0.15 cm] {} (7);
	\path (3) edge node[above =0.15 cm] {} (6);
	\path (4) edge[red] node[above =0.15 cm] {} (8);
	\path (5) edge[blue] node[above =0.15 cm] {} (8);
	\path (6) edge[black] node[above =0.15 cm] {} (8);
	\path (7) edge node[above =0.15 cm] {} (8);
	
	\end{tikzpicture}
	\caption{}
	\label{fig_graph}
\end{subfigure}
\caption{(a) Motivating example: consider a population of growing drug sensitive bacteria. Each vertex represents a cell type and cells can mutate to acquire resistance. Starting from one sensitive bacterium that can replicate, die or mutate, how long does it take to traverse the graph until multidrug resistance emerges? Is resistance to drug $A$ then $B$ more likely than the converse? (b) For the general case, the growing population is composed of different cell types that can be associated to vertices on a graph. Cells at vertex $x$ can transition to vertex $y$ if the edge $(x,y)$ exists. Starting with cells at the root vertex, how long until the target is populated? Is the blue or red path more likely to initiate the target population? Here $N=8$.}
\end{figure}
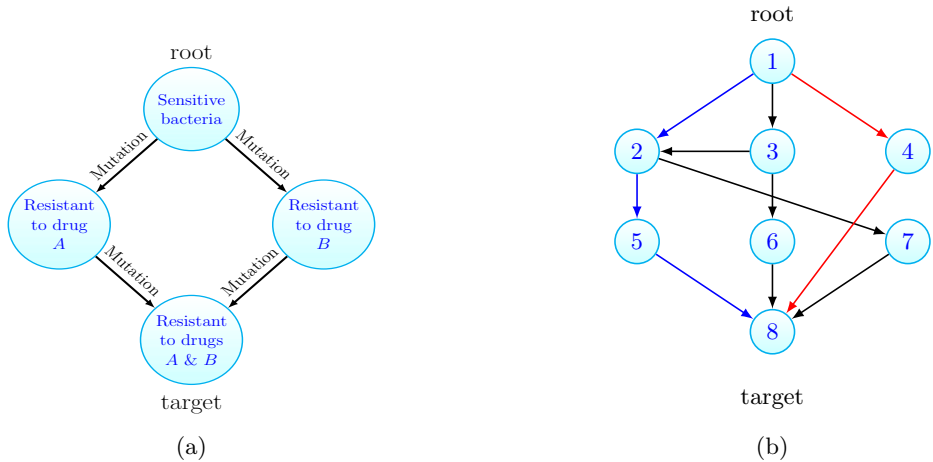


Our interest will always be in the emergence of a particular cell type in a growing population which is modelled as a specific form of a multitype branching process \cite{Athreya:2004}.  It is convenient to picture the population on a finite, simple, directed graph $G=(V,E)$, see Fig.\ \ref{fig_graph}, where each vertex of the graph represents a cell type or state. The number of types in the system is denoted $N$ and so we let $V=\{1,2,\ldots,N\}$. Thus $E$ is a subset \mic{of} the set of ordered pairs $\{(i,j):i,j\in V,i\neq j\}$. For any given cell type there is an associated vertex in $V$. Henceforth we will refer to cells of type $x$ as cells residing at vertex $x$. We will be concerned with the timing, and fashion, that vertex $N$, which we refer to as the target, is populated, when we initiate with cells at vertex 1, which we denote the root. In the example of multidrug resistance, illustrated in Fig.\ \ref{fig_example}, the cells at the root are sensitive bacteria while the target population is resistant to both drugs.

Let $(x)$ represent a cell at vertex $x$ and $\varnothing$ symbolise a dead cell. Then, with all cells behaving independently, our cell level dynamics can be graphically represented as:
\begin{align*}\label{model_illust}
(x) \rightarrow 
\begin{cases}
(x),(x) \quad &\mbox{at rate } \alpha(x)
\\
\varnothing \quad &\mbox{at rate } \beta(x)
\\
(x),(y) \quad &\mbox{at rate } \nu(x,y) \mbox{ if }  (x,y)\in E.
\end{cases}
\end{align*}
That is: cells \mic{divide} at rate $\alpha(x)$, die at rate $\beta(x)$, and transition to a cell at vertex $y$ at rate $\nu(x,y)$ if the edge $(x,y)$ exists. We will denote the fitness (growth rate) of cells at vertex $x$ by $\lambda(x)=\alpha(x)-\beta(x)$, the difference of birth and death rates of vertex $x$ cells. The parameters associated with the vertex 1 population feature prominently and so for convenience we let $\alpha=\alpha(1),\,\beta=\beta(1)$ and $\lambda=\lambda(1)$. As mentioned above, in this article we focus on the setting where the vertex 1 population has the largest fitness, which is also positive.  That is, we assume that $\growthone>0$ and for $2\leq x \leq N-1$, $\lambda(x) < \growthone$. We do not specify the fitness of the target population (cells at vertex $N$).

 \mic{
	A common variant when modeling transitions (or mutations) is to specify that cells of type $x$ divide at rate $\alpha'(x)$, and then with probability $\nu'(x,y)$ a transition occurs to vertex $y$. For a fixed value of parameters, this formulation of transitions is equivalent to that given above upon letting $\nu(x,y)=\alpha'(x)\nu'(x,y)$, $\alpha(x)=\alpha'(x)(1-\sum_{y: (x,y)\in E}\nu'(x,y))$. 
}
 Also in the model as currently stated, transitioned cells have deterministically fixed new birth and death rates. However including a finite distribution of fitness effects, when a transitioned cell is assigned random birth and death rates, is actually covered by our model as given thus far. We discuss this point further in Section \ref{sec_cyc}.

At a population level, the number of cells at vertex $x$ at time $t$ will be denoted $Z_{x}(t)$. We shall always assume that the system is initiated with only vertex 1 cells, at a quantity $z$, that is $Z_{x}(0)=z\delta_{x,1}$ with the Kronecker delta function $\delta_{x,y}$. 

We have two primary questions. The first is, having initiated the system with $z$ vertex 1 cells, how long does it take for the population at the target vertex to arise? That is we concern ourself with the distribution of the target hitting time, defined as
\begin{equation}\label{def_time}
T=\inf\{t\geq 0:\, Z_{N}(t)>0\}. 
\end{equation}

Now assuming that the target vertex is populated by a founding cell, we ask from which path through the graph $G$ did this founding cell come? This gives rise to a distribution over the set of paths (or walks) from the root vertex to the target vertex, which we aim to characterise. This second question is more precisely formulated in Section \ref{sec_results_path}. Throughout we will assume that we wait for the first cell of vertex $N$ to arise, however it may be that the population founded by the initial cell goes extinct. If instead one wishes to wait for the first `successful' cell, that is the first cell of at vertex $N$ to exist whose progeny survives, then all the results presented below hold so long as the mapping $\nu(x,N)\mapsto\nu(x,N)\lambda(N)/\alpha(N)$ for all target adjacent vertices $x$, is applied.

Our results are most clearly understood when $G$ is acyclic, which amounts to excluding reverse transitions. Therefore in the following presentation this will be assumed. The setting when this assumption is relaxed will be discussed in Section \ref{sec_cyc}.


Consider any path from the root to the target vertex. Intuitively, if the transition rates encountered along this path are larger, the target population will be seeded from this path quicker. Conversely, the smaller the growth rates along this path are, the slower the target will be reached. 
\mic{It will transpire that this intuition is indeed correct. The key quantities that show how the time to the target, and which path is taken, depend on these competing factors will be seen to be the path weights, which we now introduce.}

%

Let the set of paths between the root and the target be be denoted $\mathcal{P}_{1,N}$. Any $p\in \mathcal{P}_{1,N}$, will be a sequence $p=(p_1,p_2,\ldots,p_\pl,p_{\pl+1})$, where each $1\leq p_i\leq N$, $\pl$ is the path length (the number of edges in $p$), all $p_i$ are distinct \mic{(as presently $G$ is acyclic)}, and $p_1=1,\,p_{\pl+1}=N$. Along the path $p$, let us call the difference between the growth rate at the root and the vertices encountered the fitness costs. Then we define the weight of path $p$ as
\begin{equation}\label{def_pathweight}
w(p)=\nu(p_1,p_2)\prod_{i=2}^{l}\frac{\nu(p_{i},p_{i+1})}{\growthone-\lambda(p_i)},
\end{equation}
that is the product of the transition rates along the path divided by the fitness costs along the path. Throughout the empty product is set to 1. Further, we let the total weight of the target be 
\begin{equation}\label{def_totalweight}
\phi_N = \sum_{p\in \mathcal{P}_{1,N}}w(p).
\end{equation}
We will take the case of a path graph as a running example. A path graph, as illustrated in Fig.\ \ref{fig_lineargraph1}, is the scenario in which the only edges are $(i,i+1)$, for $1\leq i\leq N-1$. There is only one path $p$ between the root and the target and so in this case, $\phi_N=w(p)=\nu(1,2)\prod_{i=2}^{N-1}\frac{\nu(i,i+1)}{\growthone-\lambda(i)}$.

\begin{figure}

\begin{center}

	\begin{tikzpicture}[-latex ,auto ,node distance =3.5 cm and 2cm ,on grid ,
	semithick ,
	state/.style ={ circle ,top color =white , bottom color = processblue!20 ,
		draw,processblue , text=blue , minimum width =.11 cm}]	
	
	\node[state,label={[label distance=0.12cm]90:\large root}] (v1) at (-5,0) {1};
	
	\node[state](v2) at (-2.5,0) {2};
	\node[state] (v3) at (0,0) {3};
	\node[state,label={[label distance=0.12cm]90:\large target}] (v4) at (2.5,0) {4};

	\path (v1) edge node{$\nu(1,2)$} (v2);
	\path  (v2) edge node{$\nu(2,3)$} (v3);
	\path (v3) edge node{$\nu(3,4)$} (v4);

	\end{tikzpicture}
	\captionof{figure}{A path graph of length 3, with $N=4$.}\label{fig_lineargraph1}
\end{center}
\end{figure}
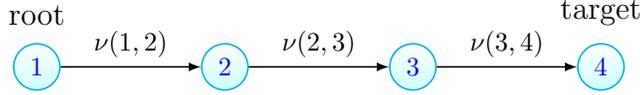

\subsection{Time until target vertex is populated}\label{sec_times}
How long do we wait until the target vertex is populated? We answer this question by characterising the distribution of $T$, defined in \eqref{def_time}, when the target seeding transition rates (the transition rates associated with edges into the target vertex) are small. The key tool required is the long-time population numbers at the initial and intermediate vertices, which is discussed further in Section \ref{sec_methods}. Using this we are able to prove the following theorem regarding the target hitting time, \mic{whose proof can be found in the supplementary material (see Sections \ref{sec_SItimes} and \ref{sec_SIacyclic})}.

\begin{theorem}\label{thm_timesacyclic} 
	As the target seeding transition rates tend to 0, we have 
	\begin{equation}\label{eqn_thmtim}
	\pr(T-\mu>t)\rightarrow\left(\frac{\growthone/\birthone}{1+e^{\lambda t}}+\deathone/\birthone\right)^z
	\end{equation}
	where 
	$$
	\mu=\frac{1}{\growthone}\log\frac{\growthone^2}{\birthone \phi_{N}}.
	$$
\end{theorem}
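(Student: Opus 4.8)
The plan is to condition on the entire trajectory of the non-target populations $\{Z_x(s):1\le x\le N-1,\ s\ge 0\}$ and to exploit that, given this trajectory, the production of target cells is a Cox (doubly-stochastic Poisson) process. Each cell at a target-adjacent vertex $x$ seeds vertex $N$ independently at rate $\nu(x,N)$, and since a seeding event leaves its parent in place and the seeding rates are vanishingly small, the instantaneous target-creation rate is $R(s)=\sum_{x:(x,N)\in E}\nu(x,N)Z_x(s)$ with no appreciable back-reaction on the source. Hence, conditionally on the trajectory,
\[
\pr(T>t\mid \text{trajectory}) = \exp\!\left(-\int_0^t R(s)\,ds\right),
\]
and the theorem will follow by determining the large-time behaviour of $\int_0^t R(s)\,ds$ and averaging over the residual randomness.

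The two population inputs I would take from the long-time asymptotics of Section \ref{sec_methods}. First, since a cell's own transitions do not remove it, the vertex $1$ population evolves exactly as a supercritical birth--death process with rates $\birthone,\deathone$, so $Z_1(t)e^{-\growthone t}\to W_z$ almost surely, where $W_z=\sum_{i=1}^z W^{(i)}$ is a sum of $z$ i.i.d.\ single-lineage martingale limits. Each $W^{(i)}$ has an atom $\deathone/\birthone$ at $0$ (extinction) and is otherwise exponential; matching $\ep W^{(i)}=1$ fixes its rate at $\growthone/\birthone$, so that $\ep e^{-uW^{(i)}}=\deathone/\birthone+(\growthone/\birthone)(\growthone/\birthone)/(\growthone/\birthone+u)$. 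Second, each intermediate vertex $x$ grows under its own birth--death dynamics with $\lambda(x)<\growthone$ while being fed by transitions from upstream; because its intrinsic rate is strictly slower, its population is dominated by recent immigration and concentrates, $Z_x(t)e^{-\growthone t}\to v_x W_z$ with deterministic coefficients solving $v_1=1$ and $v_x=(\growthone-\lambda(x))^{-1}\sum_{y:(y,x)\in E}\nu(y,x)v_y$. Unfolding this recursion along root-to-$x$ paths reproduces exactly the path weights of \eqref{def_pathweight}, and appending the final seeding step gives $\sum_{x:(x,N)\in E}\nu(x,N)v_x=\phi_N$; hence $R(s)e^{-\growthone s}\to \phi_N W_z$.

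With these in hand the remainder is bookkeeping. As the integrand is dominated by its upper end, $\int_0^t R(s)\,ds\sim (\phi_N W_z/\growthone)e^{\growthone t}$ for large $t$, so $\pr(T>t\mid W_z)\approx \exp(-(\phi_N W_z/\growthone)e^{\growthone t})$. Writing $t=\mu+\tau$ and using $e^{\growthone\mu}=\growthone^2/(\birthone\phi_N)$ collapses the prefactor to $\growthone W_z/\birthone$, whence $\pr(T-\mu>\tau\mid W_z)\approx \exp(-(\growthone/\birthone)W_z e^{\growthone\tau})$. Averaging with the Laplace transform above, evaluated at $u=(\growthone/\birthone)e^{\growthone\tau}$, makes the inner ratio equal $1/(1+e^{\growthone\tau})$, and independence of the $z$ founders raises the single-lineage expression to the $z$-th power, yielding \eqref{eqn_thmtim}. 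The centering $\mu$ is designed precisely to cancel $\phi_N$, and the atom at $0$ accounts for the event that all founding lineages die out, for which $T=\infty$.

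The main obstacle is making the intermediate-population step rigorous: showing that $Z_x(t)e^{-\growthone t}$ genuinely concentrates on $v_x W_z$ rather than retaining clone-level randomness, and, crucially, justifying the interchange of the two limits. Only the target seeding rates tend to $0$; this is what forces $\mu\to\infty$, so that $T$ occurs at late times where the intermediate asymptotics have already settled, legitimising the substitution of $\phi_N W_z e^{\growthone s}$ for $R(s)$ inside the conditional survival function. Controlling the error between $\int_0^t R$ and its asymptotic form uniformly enough to pass to the limit, together with bounding the negligible contributions of early times and of the seeding back-reaction, is where the careful estimates of Sections \ref{sec_SItimes} and \ref{sec_SIacyclic} will be required.
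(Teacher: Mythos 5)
Your proposal follows essentially the same route as the paper's own proof: you condition on the trajectory of the non-target populations to get the exact Cox-process representation (the paper's Eq.~\eqref{eqn_condtimegen}), substitute the almost-sure growth asymptotics $e^{-\growthone t}Z_x(t)\to \Phi_x W$ of Theorem~\ref{thm_Zk1} (your recursion for $v_x$ reproduces $\Phi_x$, and $\sum_x \nu(x,N)\Phi_x=\phi_N$ is exactly Corollary~\ref{cor_acyclicgrowth}), and then average over $W$ via its Laplace transform just as in the proof of Theorem~\ref{thm_timesgeneral}. The technical obstacle you flag---interchanging the $\nu^*\to 0$ limit with the time integral---is precisely what the paper resolves by shifting time by $\growthone^{-1}\log(\sigma^{-1})$, which diverges as $\nu^*\to 0$, and applying dominated convergence to the shifted integrand.
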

Heuristically, the time until the target is populated is approximately $\mu$ plus some noise, where the distribution of the noise is given by the right hand side of \eqref{eqn_thmtim}. Note that this distribution depends only on parameters associated with vertex 1 cells, while all other parameters of the system are bundled into $\mu$. For practical purposes the theorem yields the following approximation
	\begin{equation}\label{eqn_timeapprox}
	\pr(T>t)\approx\left(\frac{\growthone/\birthone}{1+e^{\lambda t}\phi_N \alpha/\lambda^2}+\deathone/\birthone\right)^z
	\end{equation}
	\mic{This approximation will be valid if the target seeding transition rates are small. We do not provide estimates on the approximation error, however simulations (for example see Fig.\ \ref{fig_growthtime}b) demonstrate this approximation holds even for only moderately small transition rates.}

Notice that it is possible that the target is never populated ($T=\infty$). In the supplementary material (Proposition \ref{prop_setdif}) we show that if the vertex 1 population survives forever then the target hitting time is finite with probability one. Furthermore in many relevant cases, namely a large initial population, low death rate, or small transition rates leaving vertex 1, that the target is eventually populated is essentially equivalent to the vertex 1 population surviving. \mic{The probability that the vertex 1 population survives is obtained in the supplementary material (Eq. \eqref{eqn_survoneprob}), and this quantity thus} gives an approximation for the probability that the target population will ever arise, which is
\begin{equation}\label{eqn_proboccurs}
\pr(T<\infty)\approx 1-(\deathone/\birthone)^z.
\end{equation}
This also explains the $\beta/\alpha$ term in \eqref{eqn_timeapprox}, which arises due to lineages stemming from the initial vertex 1 cells going extinct. The corresponding approximate distribution for the target hitting time, when we condition that the vertex 1 population survives is given in the supplementary material (Eq.\ \eqref{eqn_timeapprox_cond}). \mic{Differentiating the conditional hitting time distribution yields the density which is compared with simulations in Fig.\ \ref{fig_growthtime}c.}

Let us now discuss the time centring term $\mu$, as given in Theorem \ref{thm_timesacyclic}. In the case of $\beta=0,\,z=1$, it can be easily seen (recall $\growthone=\birthone-\deathone$) that the median of our approximate distribution \eqref{eqn_timeapprox}, is exactly $\mu$. More generally, if we let $t_{1/2}$ be the median time to hit the target, then if the vertex 1 population survives, we have
\begin{equation}\label{eqn_medtime}
t_{1/2}\sim \mu-h(z).
\end{equation}
as the target seeding transition rates tend to 0.
The shift $h(z)$, whose precise form can be found in the supplementary material (Corollary \eqref{cor_medpath}), has the behaviour
\begin{align}\label{eqn_hasym}
h(1)=0,\quad h(z)&\sim \frac{1}{\growthone}\log\frac{z \growthone}{\birthone },\quad z \rightarrow \infty.
\end{align}
The shift exists as initiating the system with a larger number of cells at vertex 1, leads to the target being reached faster. In terms of notation $f\sim g$ means that $f/g\rightarrow 1$, when the limit under consideration is taken.

We now return to the running example of the path graph setting which was introduced at the end of Section \ref{sec_model}. With $z=1$, \eqref{eqn_medtime} yields that the median time for the target population to appear is
\begin{equation}\label{eqn_medpath}
t_{1/2}\approx \mu= \frac{1}{\growthone}\log\frac{\growthone^2 }{\birthone  \nu(1,2)}+\frac{1}{\growthone}\sum_{i=2}^{N-1}\log\frac{\growthone-\lambda(i)}{\nu(i,i+1)},
\end{equation}
for small $\nu(N-1,N)$. The first \mic{summand} on the right hand side comes from waiting for the first transition from the vertex 1 population, while the second is due transitions between the remaining vertices. The first \mic{summand} is distinct as the vertex 1 growth is the main cause of stochasticity, as discussed in Section \ref{sec_methods}. The population growth and target hitting time is illustrated for the path graph case in Fig.\ \ref{fig_growthtime} .

\begin{figure}
	\centering
	\includegraphics[width=.75\linewidth,height=12cm]{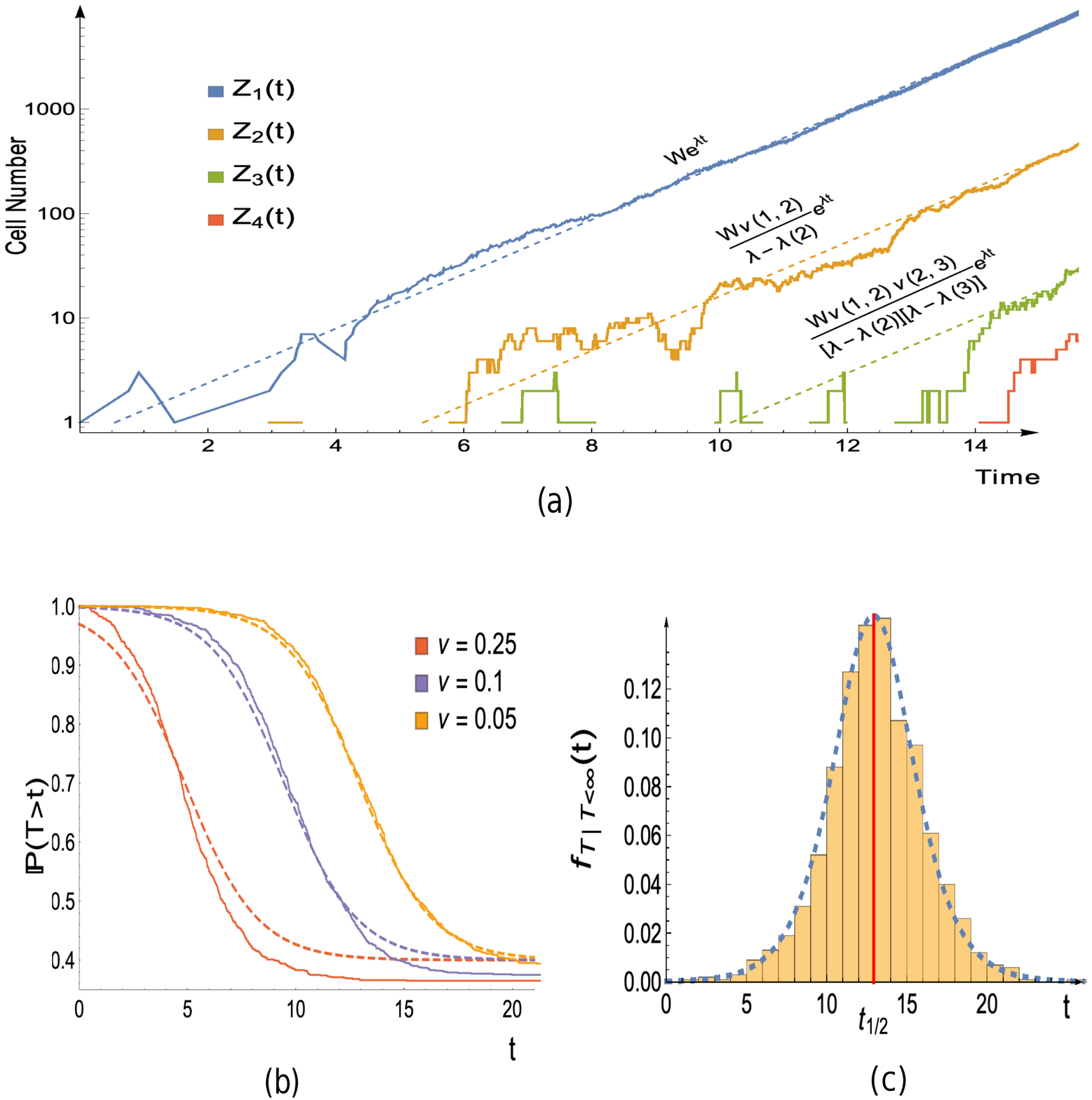}
			\caption{\protect\rule{0ex}{0ex}\mic{ Simulation versus theory in the case of a path graph with 4 vertices, as in Fig.\ \ref{fig_lineargraph1}.  (a) Joined lines show a single realisation of the population growth obtained via simulation. Dashed lines are the asymptotic growth of the populations, which follows the functions given immediately above the dashed lines. These functions are a consequence of Theorem \ref{thm_Zk1}, which can be found in Section \ref{sec_methods}. $W$ is a random variable whose value will differ between simulations. Its value here is estimated by the realised value of $e^{-\growthone t_f}Z_{1}(t_f)$, with $t_f=14.5$. (b) The distribution of the hitting time for vertex 4: Joined lines show the empirical distribution obtained from 1000 realisations, dashed lines display the theoretical approximation \eqref{eqn_timeapprox}.  The error in the approximation decreases as $\nu\rightarrow 0$, but still shows good agreement for moderate values of $\nu$. (c) Probability density function for hitting time of vertex 4, conditioned that it is finite:	bars display normalised histogram constructed from 1000 realisations, while dashed line shows approximation which follows from differentiating the conditional variant of \eqref{eqn_timeapprox} (precisely given in \eqref{eqn_pdfcond} of the supplementary material).  The approximate median time for the type 4 population to arise, as given \eqref{eqn_medpath},  is indicated by the red vertical line. Parameters: $\alpha=\alpha(2)=\alpha(3)=\alpha(4)=1,\beta=0.4,\beta(2)=\beta(3)=1.3,\beta(4)=0.6, z=1$; for (a) and (c), $ \nu(1,2)=\nu(2,3)=\nu(3,4)=0.05$; for (b), legend displays value of $\nu=\nu(1,2)=\nu(2,3)=\nu(3,4)$ used.}}
	\label{fig_growthtime}
\end{figure}

\subsection{Path distribution}\label{sec_results_path}
We now move to our second question: which path leads to the target population arising?  Our naive expectation is that paths with larger weights will be more likely. This simple conjecture turns out to be true. To show this we first introduce some notation.

To any existing cell, say at vertex $x$, we may define the cell's vertex lineage. This tracks the vertices of cells in the ancestral lineage of the cell under consideration and is a sequence of the form $(1,\ldots, x)$. For example a cell with vertex lineage $(1,3,4)$ is at vertex 4, and descended from a vertex 3 cell, who in turn descended from a vertex 1 cell. Any vertex lineage is a path in $G$ and we denote the number of cells with vertex lineage $q$ at time $t$ as $X(q,t)$. \mic{ Note that while there may be multiple ancestors at the same vertex in a given cell's ancestral lineage, for example several generations of vertex 3 cells before the first vertex 4 cell, we do not record these in the vertex lineage (more precisely $q_i\neq q_{i+1}$).}

Now for any root to target path $p\in \mathcal{P}_{1,N}$, that is a path of the form $p=(1,p_2,p_3,\ldots ,N)$, the first time that $p$ is traversed can be defined as 
\begin{equation}\label{def_pathtime}
T(p)=\inf\{t\geq 0: X(p,t)>0\}.
\end{equation}
Observe that path $p$ populating the target first is equivalent to $T(p)=T$.
The question of which path initiated the target only makes sense if the target is eventually populated. To ensure that this occurs we condition on a surviving vertex 1 population, and hence let 
$$
\pr_1(\cdot)=\pr(\cdot\,|\mbox{vertex 1 population survives}).
$$
\mic{Recall that the vertex 1 population surviving is used as a proxy for $T$ being finite (the target is eventually populated), as discussed in the paragraph following \eqref{eqn_timeapprox}.} We can now state the answer to our second question, which is a special case of Theorem \ref{thm_paths} presented below. The probability that the target is populated via path $p$ is simply the path weight of $p$, suitably normalised, that is
\begin{equation}\label{eqn_pathsthm}
\pr_1(T(p)=T)\approx\frac{w(p)}{\sum_{q\in \mathcal{P}_{1,N}}w(q)}=\frac{w(p)}{\phi_{N}}
\end{equation}
for small target seeding transition rates. One may ask not only if a particular path populated the target, but whether the target was initiated from a given set of paths, for an example see Section \ref{sec_twodrugs}. In this case the probability that the target is populated via a particular set of paths is given by summing the normalised path weights over each path in the set.

\begin{figure}[t]
	\centering
	\begin{subfigure}[t]{.45\textwidth}
		\centering
		\resizebox{7cm}{4cm}{
			\begin {tikzpicture}[-latex ,auto ,node distance =3.1 cm and 2cm ,on grid ,
			semithick ,
			state/.style ={ circle ,top color =white , bottom color = processblue!20 ,
				draw,processblue , text=blue ,minimum size =1 cm}]
			\node[state,label={[label distance=0.12cm]90:\normalsize root}] (1)  at (0,0){1};
			\node[state] (2) at (2,2) {2};
			\node[state,label={[label distance=0.12cm]90:\normalsize target}] (3) at (4,0) {3};
					
			\path (1) edge[thick,dashed,red] node[above =0.15 cm,rotate=45,font={\small}] {$\nu$} (2);
			\path (2) edge[thick,dashed,red] node[right =0.3 cm,above=.08 cm,rotate=-45,font={\small}] {$\nu$} (3);
			\path (1) edge node[right =0.15 cm,above=.01 cm,font={\small}] {$\nu^2$} (3);
			\draw[-,line width=.4mm,dashed,red] (3.2,3) to (3.8,3);
			\draw[-] (3.2,2.5) to (3.725,2.5);
		\node[] at (4.3,3) {$p^{(1)}$};
		\node[] at (4.3,2.5) {$p^{(2)}$};

		\end{tikzpicture}
	}
	\caption{}
	\label{fig_valleyfaster}
\end{subfigure}\quad
\begin{subfigure}[t]{.45\textwidth}
	\centering
	\includegraphics[width=7cm,height=5cm]{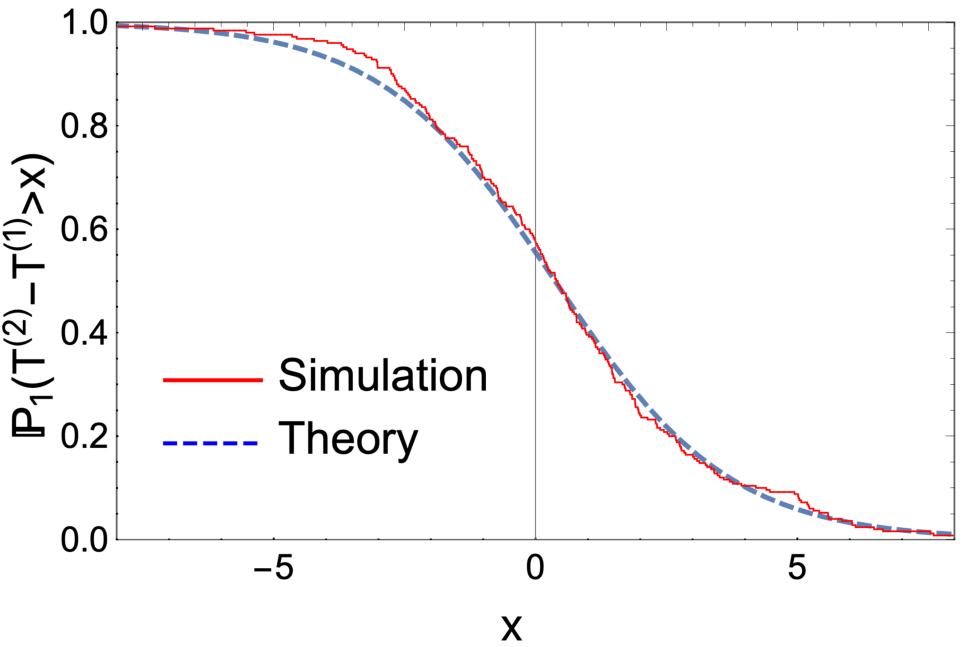}
	\caption{}
	\label{fig_sim}
\end{subfigure}
\caption{(a) Crossing valleys can be faster. Despite the red-dashed path containing a fitness valley \mic{(cells at vertex 2 have a reduced growth rate compared to the root population)}, the target population is more likely to arise via this path if $\growthone-\lambda(2)<1$, \mic{as derived in \eqref{eqn_valleyfaster}. This is confirmed by the simulations displayed in (b), where $\growthone - \lambda(2)=0.8<1$ and the Monte-Carlo estimate for $\pr_1(T^{(2)}>T^{(1)})$ (the probability path 1 seeds the target first) is 0.58, while the analytic result yields 0.56}.  (b) Stochastic simulations of system in (a) vs theory \eqref{eqn_valleyfaster}. Parameters $\birthone = 0.9,\,\deathone=0.3,\,\growthone=0.6,\,\alpha(2)=0.2,\,\beta(2)=0.4,\,\lambda(2)=-0.2,\,\nu=0.1$, runs = 250.}
\label{fig_valleys}
\end{figure}

If the vertex 1 population avoids extinction then $T(p)$ will be finite for each root to target path $p$. Therefore instead of only asking which path populated the target first, one might be interested in whether a particular path is traversed significantly faster. For example, does multidrug resistance obtained via one of the paths in Fig.\ \ref{fig_example} occur $t$ days before resistance from the other path. In order to also consider this case, for a root to target path $p$, we define
\begin{equation}\label{def_notptime}
T(\neg p)=\min\{T(q):q\in \mathcal{P}_{1,N}\backslash\{p\}\},
\end{equation}
that is the first time the target is reached along any path except $p$.

We can now quantify the probability of reaching the target via path $p$ more than $t$ time units before any other path. To avoid discussing a technical assumption, which excludes pathological scenarios, the theorem is stated in approximate form. The more precise version can be found in the supplementary material (\mic{Theorem \ref{thm_paths2} in Section \ref{sec_SI_pathproof}}). 
\begin{theorem}\label{thm_paths} 
	For small target seeding transition rates, and $t\in \mathbb{R}$,
	\begin{equation}
	\pr_1(T(\neg p)-T(p)>t)\approx \frac{w(p)}{w(p)+e^{\growthone t}\sum_{q\in \mathcal{P}_{1,N}\backslash\{p\}}w(q)}.
	\end{equation}
\end{theorem}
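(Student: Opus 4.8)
The plan is to reduce the statement to a single computation performed conditionally on the long-time growth of the root population. Let $W=\lim_{t\to\infty}e^{-\growthone t}Z_1(t)$ denote the martingale limit of the vertex~$1$ population, whose existence and the asymptotic $Z_1(t)\sim We^{\growthone t}$ are supplied by Theorem~\ref{thm_Zk1}. The essential structural input, which I would extract from the analysis underlying Theorem~\ref{thm_timesacyclic}, is the following conditional representation: as the target seeding transition rates tend to $0$, conditionally on $W$ the first-passage times $\{T(q):q\in\mathcal{P}_{1,N}\}$ become independent, with
$$
\pr\bigl(T(q)>s\mid W\bigr)\;\longrightarrow\;\exp\!\left(-w(q)\,\frac{\birthone}{\growthone^2}\,W\,e^{\growthone s}\right).
$$
The heuristic is that, given $W$, every intermediate vertex-lineage population concentrates around a deterministic profile proportional to $We^{\growthone s}$, so the only surviving randomness is the scalar $W$; the target-seeding transitions, being the rarest events in the small-rate limit, then form conditionally independent Poisson processes whose intensities are exactly $w(q)\frac{\birthone}{\growthone^2}We^{\growthone s}$, the path weights being precisely what track the deterministic profiles along each route. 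As a consistency check, summing these intensities over all $q$ gives total intensity $\phi_N\frac{\birthone}{\growthone^2}We^{\growthone s}$, and $T=\min_q T(q)$ then recovers Theorem~\ref{thm_timesacyclic} after integrating against the law of $W$, whose Laplace transform is $\bigl(\deathone/\birthone+(\growthone/\birthone)/(1+u)\bigr)^z$.

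Granting this representation, the proof is a direct calculation. Write $a=w(p)\frac{\birthone}{\growthone^2}W$ and $b=\bigl(\phi_N-w(p)\bigr)\frac{\birthone}{\growthone^2}W=\sum_{q\neq p}w(q)\frac{\birthone}{\growthone^2}W$, so that, conditionally on $W$, the time $T(p)$ has density $a\growthone e^{\growthone s}e^{-ae^{\growthone s}}$, while $T(\neg p)=\min_{q\neq p}T(q)$ satisfies $\pr(T(\neg p)>s\mid W)=e^{-be^{\growthone s}}$ independently of $T(p)$. Conditioning on $T(p)=s$ and using independence,
$$
\pr\bigl(T(\neg p)-T(p)>t\mid W\bigr)=\int_{-\infty}^{\infty} a\growthone e^{\growthone s}\,e^{-ae^{\growthone s}}\,e^{-b e^{\growthone t}e^{\growthone s}}\,ds .
$$
The substitution $v=e^{\growthone s}$ collapses this to $\int_0^\infty a\,e^{-(a+be^{\growthone t})v}\,dv=a/(a+be^{\growthone t})$, and the common factor $\frac{\birthone}{\growthone^2}W$ cancels, leaving
$$
\pr\bigl(T(\neg p)-T(p)>t\mid W\bigr)=\frac{w(p)}{w(p)+e^{\growthone t}\sum_{q\neq p}w(q)},
$$
which is notably independent of $W$.

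Because this conditional probability does not depend on $W$, I integrate it against the law of $W$ restricted to the survival event $\{W>0\}$, which is exactly the conditioning defining $\pr_1$ since the vertex~$1$ population survives if and only if $W>0$, and obtain the same constant. This establishes the theorem; setting $t=0$ recovers the path-selection formula \eqref{eqn_pathsthm}, namely $\pr_1(T(p)=T)=w(p)/\phi_N$, as a special case.

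The main obstacle is establishing the conditional-independence/Poisson representation of the first paragraph with controlled error as the seeding rates vanish. Two points require care. First, distinct paths may share an initial segment and are therefore strongly correlated through their common intermediate populations; I must argue that these shared-prefix correlations are absorbed entirely into the deterministic concentration given $W$, so that only the final, rarest target-seeding transitions govern the joint law and those genuinely decouple into independent Poisson processes. Second, the concentration of the intermediate (sub-root-fitness) populations around their $We^{\growthone s}$ profiles is itself a limit theorem that must be made quantitative. This is also where the technical non-degeneracy hypothesis suppressed in the present statement enters, with the precise assumptions and the rigorous argument being the content of Theorem~\ref{thm_paths2}. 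By contrast, the integral above is routine once the representation is in hand.
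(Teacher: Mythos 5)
Your proposal is correct in outline and, at its core, is the same argument as the paper's (Theorem \ref{thm_paths2}): conditional Poisson structure for the target seedings, path weights as the rates after the exponential time change, the minimum-of-(conditionally)-exponential-variables calculus, the observation that the resulting probability is free of the conditioning variable, and integration over the survival event. The one real difference is what you condition on. You condition on the scalar $W$, so your independence of the $T(q)$ is only an asymptotic claim, and the shared-prefix correlations you flag as the main obstacle genuinely require a separate argument. The paper instead conditions on the full trajectories $(\mathcal{X}(s))_{s\geq 0}$ of the lineage counts $X(p^{(i)}_{1:j},\cdot)$ along all root-to-target paths: because the target is a sink, given these trajectories the seedings along distinct final edges are \emph{exactly} independent inhomogeneous Poisson processes, with no limit needed for the independence; all of the asymptotics are then concentrated in the almost-sure convergence of each conditional marginal, which is precisely the first statement of Theorem \ref{thm_timesgeneral} applied to each path-graph lineage process via Corollary \ref{cor_pathgraph}. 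Your conditional-on-$W$ representation is in fact a consequence of the paper's version (the limiting conditional laws are $\sigma(W)$-measurable, so one may take $\ep[\,\cdot\,|W]$ and pass to the limit), so your reduction is sound, but conditioning on trajectories is what makes your ``main obstacle'' evaporate rather than demand quantitative concentration estimates. Two small repairs: your normalisation of $W$ is internally inconsistent --- the Laplace transform you quote and the constant $\birthone/\growthone^2$ in your rates belong to $W\growthone/\birthone$, with $W$ as in Theorem \ref{thm_Zk1}, for which the correct conditional rate is $w(q)We^{\growthone s}/\growthone$ --- though this washes out since the constant cancels in the final ratio; and the precise statement must be phrased, as in Theorem \ref{thm_paths2}, in terms of the relative limits $\kappa^{(i)}$ of the target-seeding rates, since when those rates vanish at different relative speeds it is the $\kappa^{(i)}$, not the raw weights, that survive in the limit.
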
 
Note that letting $t=0$ gives \eqref{eqn_pathsthm}. In the running example of a path graph we cannot ask which path seeded the target vertex. However we briefly illustrate the usefulness of the above theorem by considering the question of whether the target vertex is populated via a fitness valley \mic{(where by fitness valley we mean a path such that cells at the internal vertices have smaller growth rates than the root population)} .

Let us consider the case depicted in Fig.\ \ref{fig_valleyfaster} where two paths exist to the target, a direct path and an indirect path. Here $N=3$, with paths $p^{(1)}=(1,2,3),\,p^{(2)}=(1,3)$ and $\nu=\nu(1,2)=\nu(2,3),\,\nu(1,3)=\nu^2$. Naively it may be expected that as the product of transition rates along both these paths are equal, and the indirect path ($p^{(1)}$) contains a vertex with a fitness cost, the direct path ($p^{(2)}$) is more likely. However, for small $\nu$, Theorem \ref{thm_paths} informs us that the target is populated via the indirect path $t$ time units before the direct path, with probability
\begin{equation}\label{eqn_valleyfaster}
\pr_1(T^{(2)}-T^{(1)}>t) \approx [1+e^{\growthone t}(\lambda-\lambda(2))]^{-1}
\end{equation}
where $T^{(i)}=T(p^{(i)})$. Thus the indirect path is more probable, that is $\pr_1(T^{(1)}<T^{(2)})>\pr_1(T^{(2)}<T^{(1)})$, if $\growthone-1<\lambda(2)$.

Further applications of the results presented thus far will be given in Section \ref{sec_applications}. Before this, we discuss the case when $G$ is cyclic, and some extensions to the initial model for which our results still hold.

\subsection{Extensions }\label{sec_cyc}

\begin{figure}[b!]
	\begin{subfigure}[t]{.45\textwidth}
		\centering
		
		\resizebox{5cm}{3cm}{
			\begin {tikzpicture}[-latex ,auto ,node distance =2.1 cm and 1cm ,on grid ,
			semithick ,
			state/.style ={ circle ,top color =white , bottom color = processblue!20 ,
				draw,processblue , text=blue ,minimum size =.5 cm}]
			\node[state,label={[label distance=0.12cm]90:\normalsize root}] (1)  at (0,0){1};
			\node[state] (2) [right of =1] {2};
			\node[state] (3) [above of = 2] {3};
			\node[state,label={[label distance=0.12cm]90:\normalsize target}] (4) [right of = 2] {4};
			
			\path (2) edge[thick,dashed,red] node[right =0.4 cm,above=.08 cm,rotate=-90,font={\normalsize}] {$\nu_2$} (3);
			\path (1) edge node[right =0.15 cm,above=.01 cm,font={\normalsize}] {$\nu_1$} (2);
			\path (2) edge node[right =0.15 cm,above=.01 cm,font={\normalsize}] {$\nu_1$} (4);
			\path (3) edge[thick,dashed,red] node[left =0.4 cm,above=.08 cm,rotate=90,font={\normalsize}] {$\nu_2$} (2);
			
		\end{tikzpicture}
	}
	\caption{ }
	\label{fig_walkfaster}
\end{subfigure}
\begin{subfigure}[t]{.45\textwidth}
	\centering
	\resizebox{7cm}{4cm}{
		
		\begin{tikzpicture}[-latex ,auto ,node distance =2.8 cm and 2cm ,on grid ,
		semithick ,
		state/.style ={ circle ,top color =white , bottom color = processblue!20 ,
			draw,processblue , text=blue , minimum width =.7 cm}]	
		
		\node[state,label={[label distance=0.12cm]97:\large root}] (v1) at (-2.5,0) {1};
		
		\node[state,label={[label distance=0.12cm]90:\large $\lambda_1(2)$}](v2) [above right of = v1] {2};

		\node[state,label={[label distance=0.12cm]270:\large $\lambda_2(2)$}](v24) [below right of = v1] {3};
		\node[state]  (v3) [above right of = v24] {4};
		\node[state,label={[label distance=0.12cm]85:\large target}] (v4) [ right of = v3]  {5};

		\path (v1) edge node[above =0.0 cm,rotate=45] {$\nu(1,2)\pi_{2}(1)$} (v2);
		\path (v1) edge node[below =0.0 cm,rotate=-45] {$\nu(1,3)\pi_{2}(2)$} (v24);
		
		\path  (v2) edge node[above=0.0 cm,rotate=-45] {$\nu(2,3)$} (v3);
		\path  (v24) edge node[below=0 cm, rotate=45] {$\nu(2,3)$} (v3);
		\path (v3) edge node{$\nu(3,4)$} (v4);
		\end{tikzpicture}
	}
	\caption{}
	\label{fig_distfit}
\end{subfigure}
\caption{\mic{ Example systems for which our extended results, Section \ref{sec_cyc}, are useful.} (a) Walks containing cycles can seed the target. The walk of length 4 that traverses the red edges with transition rate $\nu_2$ can still be faster than the length 2 path. See \protect\eqref{eqn_walkfaster}. (b) Incorporating a distribution of fitnesses. Take the path graph from Fig.\ \ref{fig_lineargraph1}. Suppose that instead of a fixed $\lambda(2)$, after the first transition cells have fitness $\lambda_1(2)$ or $\lambda_2(2)$ with probabilities $\pi_{2}(1),\,\pi_{2}(2)$, with $\lambda_i(2)<\growthone$. 	Then instead of the original path graph, we consider the enlarged graph shown here, where vertex 2 is replaced by two vertices with associated growth rates $\lambda_1(2),\,\lambda_2(2)$ . The transition rates to these new vertices are $\nu(1,2)\pi_{2}(i)$, where $\nu(1,2)$ is the transition rate between vertices 1 and 2 in the original setting.  All results can then be applied to the new enlarged graph. }
\end{figure}
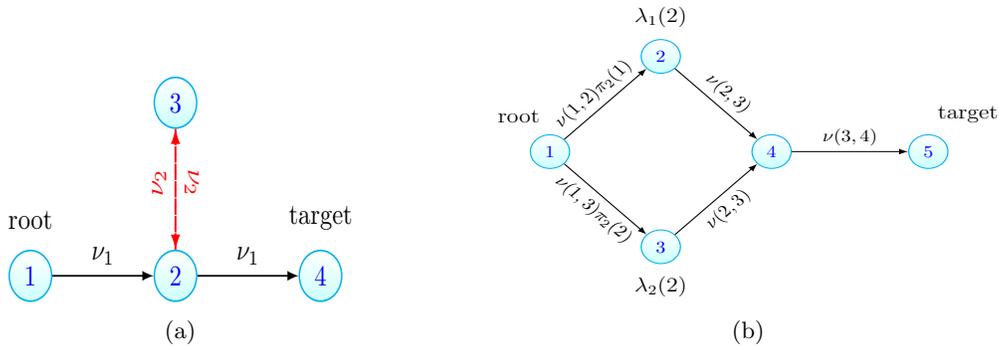

\subsubsection{Cyclic graphs}
In the above exposition we have assumed that $G$ is acyclic, that is no path exists which contains cycles. We now briefly summarise how the results are altered if $G$ is cyclic, with details given in the supplementary material. We will however maintain that the target is a sink vertex (has no outgoing edges) and the root is a source vertex (has no incoming edges). \mic{The condition on the root is perhaps unnatural, as we now allow back transitions between vertices except to the root, but is required for technical reasons which are discussed in the supplementary material (see the remarks following the proofs of Theorem \ref{thm_Zk2} and Proposition \ref{prop_acyctocyc}). We aim to drop this condition in future work.}.

First we introduce the $N-1\times N-1$ matrix $A$ with entries 
\begin{equation}a_{ij}=
\begin{cases}
\lambda(i)\quad j=i
\\
\nu(j,i)\quad \mbox{if } (j,i)\in E
\\
0\quad \mbox{otherwise}
\end{cases}
\end{equation}
and we denote the largest real eigenvalue of $A$ as $\lambda^*$.

In order that a modified version of Theorem \ref{thm_timesacyclic} holds it is required that $\lambda^*=\lambda$ and $\lambda^*$ is a simple eigenvalue. This condition is guaranteed for acyclic graphs, and will also be satisfied if the transition rates throughout the graph are small enough (see Lemma \ref{lem_sufcond} in the supplementary material). Assuming this eigenvalue condition, Theorem \ref{thm_timesacyclic} holds but with $\phi_N$ replaced by summing particular elements of the $\growthone$-associated eigenvector of $A$ (see Theorem \ref{thm_timesgeneral}). Theorem \ref{thm_Zk1}, which concerns the population growth at the initial and intermediate vertices and is presented below in Section \ref{sec_methods}, is similarly modified under the same condition (see Theorem \ref{thm_Zk2}).

Turning to Theorem \ref{thm_paths}, we can now consider walks that contain cycles as opposed to only paths in the graph $G$. If we consider a particular set of root to target walks of a finite length, analogously to Theorem \ref{thm_paths}, we can ask whether the target is reached via one particular walk more than $t$ units before any other walk in the specified set. A near identical result to Theorem \ref{thm_paths} is obtained (see Proposition \ref{prop_acyctocyc} in the supplementary material). However the path weights should now be replaced by the walk weights (defined analogously to the path weight; the product of the transition rates along the walk divided by the fitness costs along the walk). In particular for two given walks of finite length we can still approximate the probability the target population arises via one of the walks versus the other.
%

For example we can use this to explore when a walk, containing a cycle, will seed the target before a path. Consider the graph and edge parameters displayed in Fig.\ \ref{fig_walkfaster}. Let $p=(1,2,4)$ and $\walk=(1,2,3,2,4)$.  Despite the extra length, by using the modified Theorem \ref{thm_paths}, for small $\nu_1$ we see
\begin{equation}\label{eqn_walkfaster}
\pr_1(T(\walk)<T(p))\approx \left(1+\frac{(\growthone-\lambda(2)) (\growthone-\lambda(3))}{\nu_2^2}\right)^{-1}.
\end{equation}
Hence the target can be seeded via the walk, if $\nu_2$ is large enough. On the other hand, by using the modified Theorem \ref{thm_paths} it can be seen that when all the transition rates are small in the system, the target population will arise from a path and not a walk containing cycles (see Proposition \ref{prop_nowalks}) This further justifies our focus on the acyclic case. 


\subsubsection{Distribution of fitness effects}
Let us now expand on the comments in Section \ref{sec_model} where we claimed that having a finite distribution of fitness effects is covered by our model. \mic{This may be of interest, for example, when a cell acquires drug resistance via a mutation, but the cost of resistance is stochastic}. More generally, suppose we wish to consider cells transitioning from vertex $x$ to vertex $y$ but require that the new growth rate of cells to be $\lambda_i(y)$ with probability $\pi_{y}(i)$, where $\pi_{y}(i)$ is a finite distribution and every $\lambda_i(y)<\lambda$. Then we can simply replace the vertex $y$ with new vertices associated to each of the $\lambda_i(y)$ and have cells transition from vertex $x$ to the new vertices at rate $\nu(x,y)\pi_y(i)$. We may freely apply the results of Section \ref{sec_results} to the graph containing the new vertices and edges. See figure \ref{fig_distfit} for an example.

%

 \subsubsection{Alternative offspring distribution}\label{sec_genoff}
We have focused on the setting where cells can divide, die or transition, however Theorem \ref{thm_paths} would also hold for a slightly more general branching process. Keeping the transition process as it is, we can have a vertex $x$ cell \mic{being replaced by} $j$ vertex $x$ cells at a rate $\alpha(x)\sigma_{x,j}$ where for each $x$ between $1$ and $N-1$, $(\sigma_{x,j})_{j\geq 0}$ is a probability distribution with finite second moment and mean $\bar{\sigma}_x$. We now have no need for $\beta(x)$ as cell death is contained in $\sigma_{x,0}$. This scenario may be relevant to the case of viral dynamics, where an infected cell can release multiple virions upon bursting \cite{Pearson:2011}. In this setting if we redefine $\lambda(x) =\alpha(x)(\bar{\sigma}_x-1)$ then the path distribution will be unchanged. This is due to the fact that the key result of \cite{Janson:2004} used in the proof of Theorem \ref{thm_Zk1} holds then also. However Theorem \ref{thm_timesacyclic} is no longer true in this setting. This is due to a lack of understanding of the large time behaviour of the root population (see the remarks following the proof of Theorem \ref{thm_Zk2} in the supplementary material).

\section{Discussion}\label{sec_discussion}
Before discussing related work and summarising this study, we demonstrate how to apply the results of Section \ref{sec_results} on some applications. The approximate formulas given in Section \ref{sec_results} will be our key tools, and we now briefly remark on these.

The results of Section \ref{sec_results} hold when the final transition rates in each path tends to 0. 
Therefore in our applications all transition rates, in particular the mutation and migration rates discussed below, will be taken to be small and statements are to be interpreted as approximations that hold true in this limiting regime. 
%
\mic{We also note that in our model, when transitions represent migrations, they also occur at cell divisions, $(x)\rightarrow (x),(y)$, instead of at any time, as $(x)\rightarrow (y)$. This might appear strange from a biological point of view, but the former formulation for transitions has been chosen, as it simplifies the mathematical treatment. More importantly, we expect that for small transition rates these formulations lead to very similar target hitting times. Indeed, simulations support this claim, as presented in Fig.\ \ref{fig_alt_trans}. In all applications considered we shall neglect the role of back transitions. This is for simplicity and is in keeping with the previous works that we compare with. The effect of (finitely many) back transitions could be included by using our results on cyclic graphs.}

\subsection{Applications}\label{sec_applications}
First we consider the impact of imperfect drug penetration on the emergence of resistance. Two recent publications have explored how resistance spreads in this setting, and have shown that poor penetration can accelerate resistance \cite{Fu:2015,Moreno-Gamez:2015}. We are able to recover and extend \mic{some of} their findings in a rigorous fashion. Next, the ordering by which resistance-causing mutations accrue is investigated, firstly in the setting of cancer and then in bacterial infections. In the case of bacterial infections we examine how the risk of multidrug resistance depends on mutation rates, recovering simulation results in \cite{Ford:2013}. 
\mic{A particular aim of this section is to illustrate how to apply the results of Section \ref{sec_results}, and so, for the readers' convenience, we have collected the key formulas in Table \ref{table_keyformulas}.}

\renewcommand{\arraystretch}{2}

\begin{table}[h!]
	\centering
		\caption{\mic{Key formulas needed for Section \ref{sec_applications}. These approximations will be valid for small transition rates. The approximation for $t_{1/2}$ assumes a large initial number of cells at the root, $z\gg 1$. For small $z$ see Corollary \ref{cor_medpath} in the supplementary material.}}.
	\begin{tabular}{|| l | l | l ||} 
		\hline
		Description & Formula & Reference \\ [0.5ex] 
		\hline
		Weight of path $p$ & 	$w(p)=\nu(p_1,p_2)\prod_{i=2}^{l}\frac{\nu(p_i,p_{i+1})}{\growthone-\lambda(p_i)}$ & Eq. \eqref{def_pathweight}
		\\ 
		Total weight of target & 
		$\phi_N = \sum_{p\in \mathcal{P}_{1,N}}w(p)$ & Eq. \eqref{def_totalweight} \\
		Distribution of target hitting time & 	$
		\pr(T>t)\approx\left(\frac{\growthone/\birthone}{1+e^{\lambda t}\phi_N \alpha/\lambda^2}+\deathone/\birthone\right)^z$ & Eq. \eqref{eqn_timeapprox} \\
		Median of target hitting time & 
		$t_{1/2}\approx \frac{1}{\growthone}\log \frac{\growthone}{z\phi_{N}}$ & Eq. \eqref{eqn_medtime} \\
		Probability target populated via path $p$ & $
		\pr_1(T(p)=T)\approx \frac{w(p)}{\phi_{N}}$ & Eq. \eqref{eqn_pathsthm} \\ [1ex] 
		\hline
	\end{tabular}
	\label{table_keyformulas}
\end{table}

\subsubsection{Imperfect drug penetration: monotherapy}\label{sec_singledrug}
Our first two applications concern imperfect drug penetration and so the language introduced here will hold in the next section also. \mic{The first scenario we examine is exactly the same as that considered in \cite{Fu:2015}.}

We consider a pathogenic population,  with cancer cells or bacteria as example populations, being treated with one or more drugs with imperfect penetration profiles. The imperfect penetration results in low drug concentration in spatial locations, such that cells in these locations may still proliferate. We term the cumulation of these spatial regions the \textit{sanctuary}. Sanctuaries have been observed in both bacterial infections \cite{Deresinski:2009,Warner:2006} and tumours \cite{Primeau:2005}. We consider resistance to have arisen when cells with sufficiently many mutations such that all therapies are ineffective come to exist in areas where the drugs have penetrated. The possibility of resistance occurring prior to treatment is excluded and we consider only the case when resistance arises from the sanctuary. Further, we do not consider cells acquiring mutations via gene transfer.

Throughout this section and the next we shall be comparing paths composed of mutation and migration events. Growing cells in the sanctuary will be the vertex 1 cells from Section \ref{sec_model} and thus have birth and death rates $\alpha$, $\beta$, with growth rate $\growthone$. The lack of exposure to drug motivates $\growthone>0$. We shall assume all cells have birth rate $\alpha$. Across all cell types, i.e. regardless of location or drug presence, the per cell mutation and migration rates will be denoted $\nu$ and $m$ respectively.  We fix the initial number of cells in the sanctuary to be $z$.  \mic{We will often suppose that $z$ is dependent on how penetrative a particular drug is (as this will affect the sanctuary size)}.



\mic{ In the case of monotherapy,} there are two paths from which resistance might arise. Firstly, upon replication a cell in the sanctuary may produce a mutant able to grow in the presence of the drug. Then a mutated cell in the sanctuary can migrate to the drug compartment. We term this the mutation-migration path. Analogously we have the migration-mutation path which involves a susceptible cell migrating from the sanctuary, from whose progeny a resistant mutant emerges.  We suppose a resistance-conferring mutation carries a fitness cost which will be denoted $s$. Resistance costs may not always exist but are well documented and often assumed in viruses, bacteria and cancer \cite{Andersson:2010,Hughes:2015,Enriquez-Navas:2016}.  Due to the fitness cost of resistance the death rate of mutated cells in the sanctuary is increased to $\beta+s$,  while the death rate of migrated susceptible cells in the drug compartment is $\beta+d$, with $s,d>0$. 

The first quantity to determine is the probability of escaping resistance, and its dependence on the initial number of cells in the sanctuary. The approximation of \eqref{eqn_proboccurs} shows that the probability resistance never occurs decreases exponentially with the initial number of cells in the sanctuary as $\pr(\text{resistance never occurs})\approx (\beta/\alpha)^{z}$. For the rest of the discussion we will assume that resistance does occur. Further we suppose $z\gg 1$, so that we may use the approximate form for $h(z)$ given in \eqref{eqn_hasym}. The case when $z$ is small could be treated similarly using the exact expression for $h(z)$.

We now ask when, and via which path, will resistant cells come to exist in the drug compartment? From Section \ref{sec_results} we know the key quantities required to answer these questions are the path weights. For this system these are given by
\begin{equation}\label{eqn_mono_pathweights}
w(\mbox{mutation-migration path})=\frac{\nu m}{s},\quad w(\mbox{migration-mutation path})=\frac{\nu m}{d}.
\end{equation}
If we now let $t_{1/2}$ be the median time resistance occurs, then 
using the path weights with \eqref{eqn_medtime}, we immediately have
\begin{equation}\label{eqn_sdmed}
t_{1/2} \approx \frac{1}{\growthone}\log\frac{\growthone^{2}}{\alpha}-\frac{1}{\growthone}\log \left(\frac{\nu m}{s}+\frac{\nu m}{d}\right)+\frac{1}{\growthone}\log\frac{\alpha}{z \growthone}
\end{equation}
Of note is that the sanctuary size, mutation and migration rate all have equivalent impact. This simple result can now be explored under different biological scenarios.

For example, suppose we are currently using drug $A$ with fitness cost $d^{(A)}$ and penetration profile leaving $z^{(A)}$ initial cells in the sanctuary. We consider instead using a second drug which results in sanctuary size $z^{(B)}$ and has fitness cost $d^{(B)}$. Let the median resistance times under the first and second drug be denoted $t^{(A)}_{1/2}$ and $t^{(B)}_{1/2}$ respectively. Then, using \eqref{eqn_sdmed} and examining $t^{(B)}_{1/2}>t^{(A)}_{1/2}$, we find we should switch to drug $B$, if

\begin{equation}\label{ineq_singledrug}
\frac{z^{(A)}}{z^{(B)}}>\frac{1+ s/ d^{(B)}}{1+s/d^{(A)}}.
\end{equation}
This inequality is illustrated in Fig.\ \ref{fig_sd}.

\begin{figure}
	\centering
	\begin{subfigure}[t]{.45\textwidth}
		\centering
		\resizebox{8cm}{6cm}{
		\begin {tikzpicture}[-latex ,auto ,node distance =3.5 cm and 2cm ,on grid ,
	semithick ,
	state/.style ={ circle ,top color =white , bottom color = processblue!20 ,
		draw,processblue , text=blue ,minimum size =1.3 cm,font={\Large}}]
			\node[state] (1)  at (0,0){0};
			\node[state] (2) [right of =  1] {s};
			\node[state] (3) [below of =  1] {d};
			\node[state] (4) [right of =  3] {s};
	
			\node[rotate=45,above,font={\Large}] (16)  at (.5,1.5){\makebox[.4cm][b] Sensitive};
			\node[rotate=45,above,font={\Large}]  (17)  at (3.6,1.5){\makebox[.4cm][b] Resistant};

			\node[align=left]  at (-2,0){\Large Sanctuary};
			\node[align=left] at (-2,-3.1){\Large Drug };

			
			
			%
			\path (1) edge node[above =0.15 cm] {Mutation-$\nu$} (2);
			\path (2) edge node[right =0.3 cm,above=.08 cm,rotate=-90] {Migration-$m$} (4);
			\path (1) edge node[right =0.3 cm,above=.08 cm,rotate=-90] {Migration-$m$} (3);
			\path (3) edge node[above =0.15 cm] {Mutation-$\nu$} (4);
		\end{tikzpicture}
	}
	\caption{}
	\label{fig_singledruggraph}
	\end{subfigure}\quad
\begin{subfigure}[t]{.45\textwidth}
	\centering
	\includegraphics[width=6cm,height=6cm]{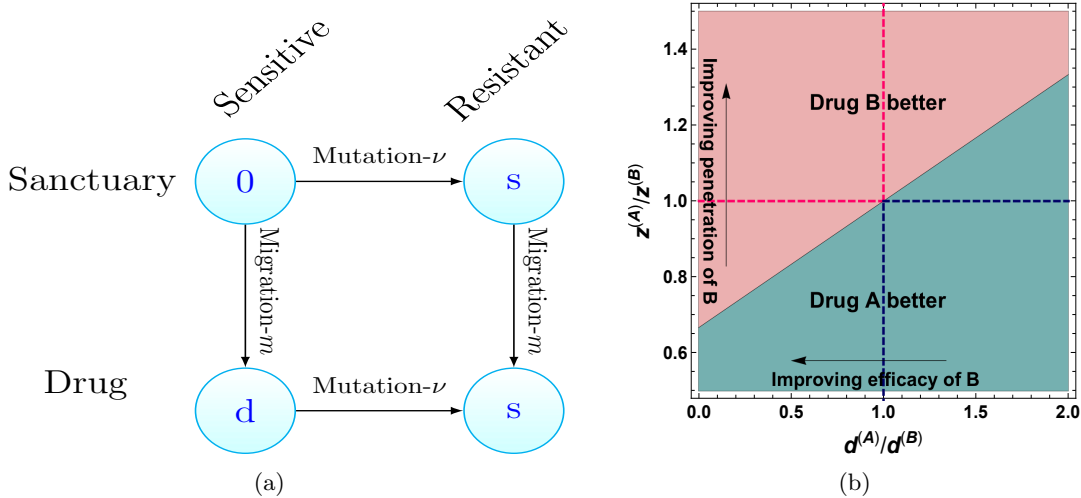}
	\caption{}
\label{fig_sd}
\end{subfigure}
\caption{(a) In the case of monotherapy, there are two paths by which drug resistance cells can arise in the region containing the drug. Horizontal edges represent changes in genotype (gaining resistance) whereas vertical edges represent changes in spatial location. The values of vertices are the fitness cost cells experience in that state and $\mu,\, m$ are the per cell mutation and migration rates. (b) Comparison of two drugs with differing efficacies and penetration profiles using \eqref{ineq_singledrug}. Arrows pertain to characteristics of drug $B$ relative to drug $A$. Top left and bottom right quadrants illustrate region where drugs are trivially better, i.e. have superior penetration and efficacy. Parameters: $d^{(A)}=10^{-2},\,s=5\times 10^{-3}$.}
\end{figure}

Note as $d^{(B)}\rightarrow \infty$, implying drug $B$ is far more effective,  the condition tends to $z^{(A)}/z^{(B)}>(1+s/d^{(A)})^{-1}$. This is due to the fact that when $d^{(B)}$ is large resistance will always arise via the mutation-migration path (which we show below), so this provides an upper bound to the gain that can be obtained from increasing the drug efficacy. The above is under the assumption resistance will occur but note the probability of no resistance would decrease exponentially with larger sanctuary size.

Having discussed the time resistance will occur we now ask by which of the two paths it will occur. Let $T^{\nu m}$ be the first time resistance arises via the mutation-migration path and $T^{m \nu}$ be defined analogously for the migration-mutation path. Then Theorem \ref{thm_paths} gives us
\begin{equation}
\pr(T^{\nu m}< T^{m \nu}) \approx \frac{d}{s+d}.
\end{equation}
Hence resistance arising via the mutation-migration path is more probable, that is the above probability is greater than $1/2$, if $d>s$. As most drugs are designed to at least halt cell growth, while many resistant cell lines can still proliferate, we expect this condition to be satisfied for most drugs. Note that neither $m$ nor $\nu$ appear in the above expression as the numerator of both path weights is equal, see \eqref{eqn_mono_pathweights}, and hence cancels.

A similar question was asked in \cite{Fu:2015}. There they considered an approximate stochastic process and sought the parameter regime such that $T^{m \nu}$ stochastically dominated $T^{\nu m}$ under the approximate process. For the full process considered here, we can ask the same question but on the limit of the centred (in the sense of Theorem \ref{thm_timesacyclic}) resistance times. The resulting condition for stochastic dominance is again $d> s$. The difference in the condition given here ($d>s$) and that stated in equation (28) in \cite{Fu:2015} exists as there $\nu$ is defined as the probability of mutation per birth event. Accounting for this leads to the same condition.

%
%
%

\subsubsection{Imperfect drug penetration: combination therapy}\label{sec_twodrugs}

There are at least two reasons for extending our analysis to include two drugs, say drugs $A$ and $B$, being used in tandem. Firstly, combination therapies are widely implemented in HIV and bacterial infections, and there is a growing appreciation for their use in cancer \cite{Cihlar:2016,Tamma:2012,Bozic:2013}. Secondly, interesting new phenomena emerges, especially when a region still exists with only one drug present. \mic{The work presented in this section is inspired by the model and questions of \cite{Moreno-Gamez:2015}.}

We will only consider the case where the penetration profile for drug $B$ is a subset of drug $A$, and thus a single drug region exists only for drug $A$. This allows us to clearly demonstrate the impact of unequal penetration profiles. In this setting there are 12 possible states that a cell may inhabit, defined by which of the drugs the cell is exposed to and the resistance profile it has obtained. These states are illustrated in Fig.\ \ref{fig_twodrugstates}. We further assume that the mutation rate bringing resistance to either drug is $\nu$. This is for simplicity only, and dealing with differing mutation rates is straightforward. \mic{Mutations that confer resistance will again have a fitness cost $s$, and sensitive cells coming into contact with drugs invokes a fitness cost $d$. These fitness costs increase the cellular death rate additively. The increase to the death rate for a cell in any given state is displayed as the vertex labels in Fig.\ref{fig_twodrugstates}. Drug interactions, mutation specific fitness costs, and epistasis are neglected but may easily be included.}

 To investigate the effect of the single drug region, we introduce further notation. Let $n_{\rm{Tot}}$ be the number of cells that may reside in our system of interest (e.g.\ the entire patient or tumour) in the absence of drug. Further, let the regions where both drugs act, the single drug (drug $A$) acts and the sanctuary have their respective cell capacities denoted by $n_{\rm{DD}},\,n_{\rm{D}},$ and $n_{\rm{S}}$. We assume that all these capacities are large, and hence we may still investigate the system with our branching process model. These capacities enter the dynamics of the process in the following manner. When a cell leaves the region it is in, at rate $m$, it now may migrate to one of two regions. Of these two regions, we specify that it migrates to a particular region at a frequency proportional to the regions capacity. For example a cell in the sanctuary migrates to the double drug region at rate $m n_{\rm{DD}}/(n_{\rm{D}}+n_{\rm{DD}})$. Resistance mutations and the presence of the drug has the same additive effect as in Section \ref{sec_singledrug}.

\begin{figure}
\centering
\begin{subfigure}[t]{.45\textwidth}
	\centering
	\resizebox{8cm}{6cm}{
	\begin {tikzpicture}[-latex ,auto ,node distance =3.1 cm and 2cm ,on grid ,
semithick ,
state/.style ={ circle ,top color =white , bottom color = processblue!20 ,
	draw,processblue , text=blue ,minimum size =1.3 cm,font={\Large}}]
\node[state] (1)  at (0,0){0};
\node[state] (2) [right of =  1] {s};
\node[state] (3) [right of =  2] {s};
\node[state] (4) [right of =  3] {2s};
\node[state] (5) [below of =  1]{d};
\node[state] (6) [right of = 5] {s};
\node[state] (7) [right of =  6] {s+d};
\node[state] (8) [right of =  7] {2s};

\node[state] (9) [below of =  5]{2d};
\node[state] (10) [right of = 9] {s+d};
\node[state] (11) [right of =  10] {s+d};
\node[state] (12) [right of =  11] {2s};

\node (13) [above left of = 1] {};
\node (14) [above right of = 4] {};
\node (15) [below left of = 9] {};

\node[rotate=45,above,font={\LARGE}] (16)  at (.5,2){\makebox[.4cm][b] Sensitive};
\node[rotate=45,above,font={\LARGE}]  (17)  at (3.6,2){\makebox[1.5cm][b] A-Resistant};
\node[rotate=45,above,font={\LARGE}] (17)  at (3.6+3.1,2){\makebox[1.5cm][b]B-Resistant};
\node[rotate=45,above,font={\LARGE}] (17)  at (3.6+2*3.1,2){\makebox[1.5cm][b] AB-Resistant};

\node[align=left]  at (-2.4,0){\LARGE Sanctuary};
\node[align=left] at (-2.4,-3.1){\LARGE Drug A};
\node[align=left] at (-2.4,-6.2){\LARGE Drug A \& B};


%
\path (1) edge[red,thick,dashed] node[above =0.15 cm] {} (2);
\path (3) edge node[above =0.15 cm] {} (4);
\draw [->] (1) to[bend left] node[auto] {} (3);
\draw [->] (2) to[bend right] node[auto] {} (4);

\path (1) edge node[above =0.15 cm] {} (5);
\path (2) edge[red,thick,dashed] node[above =0.15 cm] {} (6);
\path (3) edge node[above =0.15 cm] {} (7);
\path (4) edge node[above =0.15 cm] {} (8);

\draw [->] (1) to[bend right] node[auto] {} (9);
\draw [->] (2) to[bend right] node[auto] {} (10);
\draw [->] (3) to[bend right] node[auto] {} (11);
\draw [->] (4) to[bend right] node[auto] {} (12);

\path (5) edge node[above =0.15 cm] {} (6);
\path (7) edge node[above =0.15 cm] {} (8);
\draw [->] (5) to[bend left] node[auto] {} (7);
\draw [->,red,thick,dashed] (6) to[bend right] node[auto] {} (8);

\path (5) edge node[above =0.15 cm] {} (9);
\path (6) edge node[above =0.15 cm] {} (10);
\path (7) edge node[above =0.15 cm] {} (11);
\path (8) edge[red,thick,dashed] node[above =0.15 cm] {} (12);

\path (9) edge node[above =0.15 cm] {} (10);
\path (11) edge node[above =0.15 cm] {} (12);
\draw [->] (9) to[bend left] node[auto] {} (11);
\draw [->] (10) to[bend right] node[auto] {} (12);

\end{tikzpicture}
}
	\caption{}
	\label{fig_twodrugstates}
\end{subfigure}\hspace{1cm}
\begin{subfigure}[t]{.45\textwidth}
	\centering
	\includegraphics[width=8cm,height=6cm]{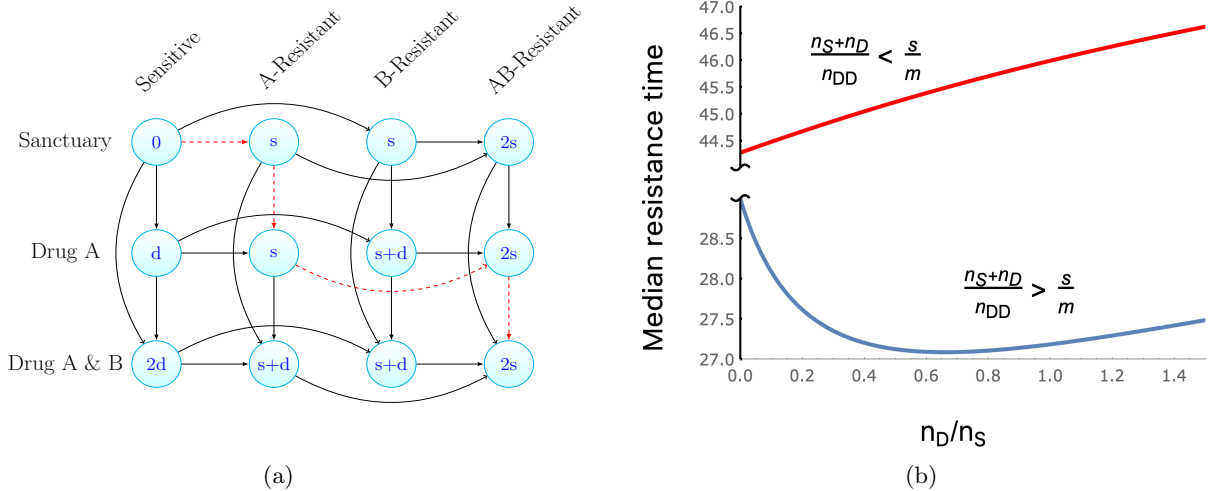}
	\caption{ }
	\label{fig_twodrugstime}
\end{subfigure}
\caption{(a) Illustration of the paths to multidrug resistant cells arising in the region containing both drugs $A$ and $B$, as in Fig.\ \ref{fig_singledruggraph}. Note a spatial location exists containing only drug $A$. \mic{The red, dashed path is an example of stepwise evolution, where multidrug resistant cells come to exist in the sanctuary via the region containing only a single drug. The vertex labels display the fitness cost (increase to the cellular death rate) of cells in that state.}	(b) \mic{Exploration of the median time to resistance as the size of the drug $A$ region increases relative to the sanctuary size. Here we plot $t_{1/2}$, as given in Table \ref{table_keyformulas}, for the system displayed in (a). We observe that, for a subset of the parameter space, the existence of a single drug region can decrease the time to resistance, despite diminishing the sanctuary size. If both drugs penetrate the majority of the system and the drugs have a greater effect than the cost of resistance, this subset of parameter space is identified by the inequality \eqref{cond_decreasetime}. Within this parameter subset, resistance is fastest at a particular size of single drug compartment, given by  \eqref{eqn_mintime}}. Parameters; (top) $\nu=10^{-6},\,m=0.05,\,d=0.9,\,s=10^{-3},\,\growthone=0.4,\birthone=0.3,\,n_{\rm{Tot}}=10^7,\,n_{\rm{DD}}=0.9n_{\rm{Tot}},\,\gamma=10^{-3}$; (bottom) all same except $s=10^{-2}$. }
\end{figure}

 Observe that there are 18 possible paths by which sensitive cells in the sanctuary can eventually produce fully drug resistant cells in the double drug compartment. To see this it is convenient to separate the paths into those that pass through the single drug compartment, which following \cite{Moreno-Gamez:2015} we term stepwise evolution, and those that involve migration from the sanctuary directly to the double drug compartment, which we collectively call direct evolution. For the direct evolution paths there are three transitions needed, mutations yielding resistance to drug $A$ and $B$ and then the migration to the double drug region. From the number of permutations of these events, this yields 6 direct evolution paths. In the stepwise evolution case four transitions are needed, both mutations and then the two requisite migration events. Notice that the migration from the sanctuary to the single drug region must precede the migration from the single drug region to the double drug region. This leads to 12 stepwise evolution paths. An exemplar stepwise evolution path is a mutation to drug $A$,  followed by migration to the single drug region, then a mutation to drug $B$, and finally a migration to the double drug region. The edges of this path are red and dashed in Fig.\ \ref{fig_twodrugstates}. 
 

  \mic{Suppose, momentarily, that the penetration profiles of drugs $A$ and $B$ exactly match (but are still imperfect). In this case, resistance in the double drug region can be obtained only via direct evolution. If we now increase the penetration of drug $A$ only, 12 new possible paths to resistance open up (the stepwise evolution paths of the previous paragraph), accelerating the time until resistance. However, the increased penetration also leads to a reduction in the sanctuary size, which should delay resistance. Hence a trade-off may exist with respect to the penetration profile of drug $A$. }
  To study this we fix $n_{\rm{Tot}}= n_{\rm{S}}+n_{\rm{D}}+n_{\rm{DD}}$ and specify that the initial number of cells in the sanctuary is related to its size via $z=\gamma n_{\rm{S}}$ with $0<\gamma\ll 1$. Further let $t_{1/2}(n_{\rm{D}}/n_{\rm{S}})$ be the median time to resistance with fixed $n_{\rm{D}}/n_{\rm{S}}$. Using the approximation provided by \eqref{eqn_medtime} we can explore the behaviour of $t_{1/2}(n_{\rm{D}}/n_{\rm{S}})$. As can be seen in Fig.\ \ref{fig_twodrugstime} it is possible, \mic{in certain regimes of the parameter space}, that the existence of the single drug compartment accelerates resistance (\mic{by reducing $t_{1/2}$)}. We now seek the conditions under which this is the case. \mic{Brief derivations of the presented formulas below \mref{cond_decreasetime,eqn_mintime,eqn_twodrug_prob} are given in the supplementary material (Section \ref{sec_deriv_twodrug})} ,

 As mentioned previously, it is reasonable to expect the efficacy of either drug is greater than the cost to resistance. Also we may hope that both drugs penetrate the majority of the target system. This motivates taking $s\ll d$ and $(n_{\rm{S}}+n_{\rm{D}})\ll n_{\rm{Tot}}$. Under this limit, by examining the sign of $\frac{d}{d (n_{\rm{D}}/n_{\rm{S}})}t_{1/2}(0)$, we find a simple condition for when increasing the single drug compartment \mic{accelerates resistance}, namely 
\begin{equation}\label{cond_decreasetime}
\frac{n_{\rm{S}}+n_{\rm{D}}}{n_{\rm{Tot}}}>\frac{ s}{m}.
\end{equation}
However, when \eqref{cond_decreasetime} holds, as $n_{\rm{D}}/n_{\rm{S}}$ increases, eventually a minimal resistance time is achieved. This represents a worse cast scenario yielding fastest possible resistance. For $s\ll d$, this minimal time occurs when
\begin{equation}\label{eqn_mintime}
\frac{n_{\rm{D}}}{n_{\rm{S}}}\approx 1-\frac{2 s  n_{\rm{Tot}}}{ m (n_{\rm{S}}+n_{\rm{D}})}.
\end{equation}

We now examine whether resistance arises via stepwise evolution or direct evolution. Let $T^{\rm{SE}}$ be the first time resistance occurs via stepwise evolution, and $T^{\rm{DE}}$ defined analogously for direct evolution. Then using Theorem \ref{thm_paths}, again with $s\ll d$, we find
\begin{align}\label{eqn_twodrug_prob}
\pr(T^{\rm{SE}}<T^{\rm{DE}})\approx \left[1+\frac{ s (n_{\rm{DD}}+n_{\rm{S}})}{ 
	mn_{\rm{D}} } \right]^{-1}.
\end{align}
Therefore the larger the proportional size of the single drug region, the more likely resistance will arise via stepwise evolution. A qualitatively similar result was derived in \cite{Moreno-Gamez:2015} (e-page 2878) by comparing the most likely stepwise evolution path with the most likely directed evolution path.

Furthermore, in \cite{Moreno-Gamez:2015} a minimal resistance time, in the sense of \eqref{eqn_mintime}, was observed via simulations for the mutation$_A$-migration-mutation$_B$-migration path (the red and dashed path in Fig.\ \ref{fig_twodrugstates}). Using Theorem \ref{thm_paths} we can show that this is the most probable (in the sense of the mode of the path distribution, see \eqref{eqn_pathsthm}) stepwise evolution path whenever $d>s$. Further, by differentiating the median time along this path, we see a minimal time occurs at $n_{\rm{D}}\approx n_{\rm{S}}$, which recovers the simulation result in \cite{Moreno-Gamez:2015} (see their Fig.\ 3). That this path is the most probable stepwise evolution path also gives insight into the recurring $s/m$ term in \eqref{cond_decreasetime} and \eqref{eqn_twodrug_prob}. Relative to direct evolution this path has an extra migration step, so higher migration should \mic{increase the probability of seeding the target via this path}. However fitness costs along this paths are incurred only via the cost of resistance, and a smaller $s$ diminishes these fitness costs. \mic{While the above is motivated by and compared with the study \cite{Moreno-Gamez:2015}, in contrast to that work, we do not consider a size-dependent branching process (the population was constrained by a large carrying capacity in  \cite{Moreno-Gamez:2015}). Despite this we are able to derive similar qualitative, and indeed quantitative (e.g. the recovery of the simulation result discussed above), results. These features may be general properties of such models or are perhaps due to the size of the carrying capacity in \cite{Moreno-Gamez:2015}.}


\subsubsection{Path to resistance in chronic myeloid leukemia}
Let us discuss the example application of \cite{Komarova:2005}, concerning the mechanism by which resistance to the targeted therapy imatinib occurs in chronic myeloid leukemia (CML). The two main resistance mechanisms are via gene amplification of the BCR-ABL fusion gene, which drives the CML, or a point mutation resulting in a modification to the target protein of imatinib. Approximately 100 point mutations have been identified conferring resistance, and a rough estimate for the probability of each of these occurring during a cell division is $10^{-7}$ \cite{Leder:2011}. Thus the probability of a point mutation conferring resistance is approximately $10^{-5}$ per division. The analogous quantity for resistance-causing gene amplifications, in a similar system, was found to be approximately $10^{-4}$ \cite{Tlsty:1989}.  With the birth rate of leukemic stem cells as $\alpha=0.008$ day$^{-1}$ \cite{Michor:2005} we have the rate of point mutations as $\nu_{pm}=8\times10^{-8}$ day$^{-1}$, while the rate of resistance due to gene amplifications is $\nu_{ga}=8\times10^{-7}$ day$^{-1}$. Therefore, using this information alone, the probability of resistance arising from a point mutation first is estimated to be $1/11$. However, in the majority of cases, the primary mechanism of resistance is found to be point mutations \cite{Bixby:2010}.

 Within our framework there are two possible explanations we can explore, both related to the reproductive capabilities of resistant cells. The first is simply that more lineages of amplified cells go extinct. The paragraph directly following \eqref{def_time} indicates how to treat this case. Indeed, if $\rho_{pm}$ is the average survival probability for a population initiated by a single cell with a point mutation and $\rho_{ga}$ the analogous quantity for gene amplification, then we find resistance arising first from a point mutation is more probable if $\rho_{pm}>10\rho_{ga}$. This is possible as point mutations may be deleterious or advantageous \cite{Leder:2011} whereas gene amplifications appear deleterious \cite{Tipping:2001}. 

The second explanation, that suggested in \cite{Komarova:2005}, is that several gene amplifications are required to attain resistance. Suppose a gene amplification increases the death rate of cells by $s$ per day. Then if two gene amplifications are required, ignoring the survival probability from before, the probability that resistance arises via point mutation is now 
$$
\frac{\nu_{pm}}{\nu_{pm}+\nu_{ga}^2/s}=(1+8\times10^{-6}/s)^{-1}.
$$
Resistance via points mutations is now more likely if \mic{$s>10^{-6}$}. Note that Fig.\ \ref{fig_valleyfaster} serves as a schematic for this scenario, reinterpreting the fitness valley path as resistance arising via gene amplifications and the direct path relating to resistance via point mutations. Further explanations for the primacy of point mutation mediated resistance exist and the assumptions above may not hold. But when they do hold, our results offer a clear framework to investigate such questions.


\subsubsection{Antibacterial multidrug resistance}
A timely problem is that posed by the emergence of multidrug resistant bacteria. We now explore some specific examples of our general framework that pertain to this issue. \mic{In particular we consider the accumulation of resistance-conferring mutations as a bacterial colony grows in the absence of drugs}. As before we assume resistance incurs a fitness cost and thus the wild-type \mic{(unmutated cells)} has the highest growth rate. Using Theorem \ref{thm_paths} and measured fitness costs and mutation rates we can predict the most likely path to multidrug resistance.

 We consider the acquisition of resistance to both rifampicin and streptomycin in \textit{P. aeruginosa}. Recently the fitness costs associated with resistance to rifampicin \cite{Hall:2011} and streptomycin \cite{Ward:2009} have been reported. Fitness costs were determined by the maximum growth rate of the bacteria undergoing exponential growth on \mic{nutrient}-rich plates. From \cite{Ward:2009} the ratio of streptomycin resistant bacteria to the wild-type's growth rate was 0.71 (here the growth rate of the wild-type is estimated from Fig.\ 2.a in \cite{Ward:2009}). The same value for resistance to rifampicin, in \cite{Hall:2011}, was 0.88. Let $T^{RS}$ be the time when multidrug resistance emerges assuming rifampicin resistance is acquired first, and $T^{SR}$ defined analogously for streptomycin resistance first. Then from Theorem \ref{thm_paths}, we have that
$$
\pr(T^{RS}<T^{SR})\approx \frac{1-0.71}{(1-0.88)+(1-0.71)}=0.7.
$$
Hence, under this model, we would predict that rifampicin resistance emerges first en route to multidrug resistance. We note that above we have taken the average fitness costs associated with the differing antibiotics. Fluctuations have not been taken into account. 


Instead of considering the the acquisition of resistance to different drugs within a bacterial strain, we can also examine the impact of differing mutation rates across strains. In \cite{Ford:2013} the heightened occurrence of multidrug resistance in lineage 2 vs lineage 4 from \textit{M. tuberculosis} was investigated. Considering in particular resistance to rifampicin and isoniazid, via simulations they deduced that the heightened occurrence of resistance was due to elevated mutation rates in lineage 2 strains. The model simulated can be considered a discrete time counterpart to ours in the case of two paths of length 2 (see Fig.\ \ref{fig_example}). One of the main quantities of interest was the relative probability of resistance, when the only difference between the strains was the mutation rates.  For a particular strain, initiated with one wild-type cell with wild-type birth and growth rates, $\birthone$, $\growthone$, suppose we have mutation rates $\nu_{\rm{R}},\nu_{\rm{I}}$ to rifampicin and isoniazid, each incurring the same fitness cost $s$. Then from Theorem \ref{thm_timesacyclic}, we have
\begin{equation}\label{eqn_probdble}
\pr(T\leq t)\approx 1-\left(\frac{\growthone/\birthone}{1+e^{\growthone(t-\mu)}}+\deathone/\birthone\right),\quad \mu=\frac{1}{\growthone}\log\frac{\growthone^{2}}{\birthone}+\frac{1}{\growthone}\log \frac{s}{2\nu_{\rm{R}}\nu_{\rm{I}}}.
\end{equation}
For lineage-$i$ ($i=2,4$) let $\nu_{\rm{R}}^{(i)}$ and $\nu_{\rm{I}}^{(i)}$ be the mutation rates to rifampicin and isoniazid. As in \cite{Ford:2013}, keeping all other parameters equal between the lineages we have the relative probability of resistance
\begin{equation}
\frac{\pr(T^{(2)}\leq t)}{\pr(T^{(4)}\leq t)}\approx\frac{\nu_{\rm{R}}^{(2)}\nu_{\rm{I}}^{(2)}}{\nu_{\rm{R}}^{(4)}\nu_{\rm{I}}^{(4)}}\frac{(1+2\nu_{\rm{R}}^{(4)}\nu_{\rm{I}}^{(4)}e^{\growthone t}\birthone/\growthone^2 s)}{(1+2\nu_{\rm{R}}^{(2)}\nu_{\rm{I}}^{(2)}e^{\growthone t}\birthone/\growthone^2 s)}.
\end{equation}
For times such that $\nu_{\rm{R}}^{(i)}\nu_{\rm{I}}^{(i)}e^{\growthone t} \ll 1,\,i=2,4$, we see the relative probability is simply $\nu_{\rm{R}}^{(2)}\nu_{\rm{I}}^{(2)}/\nu_{\rm{R}}^{(4)}\nu_{\rm{I}}^{(4)}$. Taking the mutation rates measured in \cite{Ford:2013}, this gives the relative probability of 22.06. This agrees with the simulation results in \cite{Ford:2013} who reported an approximately 22 fold increase. \mic{The same problem was treated in \cite{Colijn:2011} but for multidrug resistance to two drugs in a fixed order (i.e. resistance to rifampicin followed by isoniazid). Staying in the regime $\nu_{\rm{R}}^{(i)}\nu_{\rm{I}}^{(i)}e^{\growthone t} \ll 1$, if in \eqref{eqn_probdble}, we replace the 2 by 1 (as only one path to resistance), and alter both $\growthone^2/\birthone$ by $\growthone$, and $e^{\growthone t}$ with $M$, our expression is the same to leading order as that given in \cite{Colijn:2011} (equation 2) for the probability of resistance in a population of size $M$. This difference between fixed population size and fixed time results is the same as given in \cite{Cheek:2018,Durrett:2010}.}


\subsection{Concluding remarks}

\begin{redsect}
	\subsubsection{Related work}
	Many previous studies have considered the probability of particular cell type emerging in a growing population, typically by a fixed time after the process starts or when the total population reaches a given size. The majority deal with the target population being one or two transitions away from the initial population (the root vertex in this paper) \cite{Luria:1943,Iwasa:2006,Bozic:2013,Colijn:2011,Haeno:2007, Cheek:2018,Antal:2011,Denes:1996}.  In particular we single out the pioneering work of Luria and Delbr\"{u}ck \cite{Luria:1943}, which demonstrated the spontaneous nature of mutations by combining an appropriate mathematical model with bacterial experiments on phage resistance. The original model of \cite{Luria:1943}, and its various incarnations \cite{Lea:1949,Bartlett:1955} have been extensively studied \cite{Mandelbrot:1974,Mohle:2005,Ma:1992,Antal:2011,Kessler:2013,Keller:2015,Cheek:2018}. Its fully stochastic formulation, which is identical to our model for two vertices, is due to Bartlett \cite{Bartlett:1955}. This full model admits an explicit solution \cite{Antal:2011}, and the model's asymptotic behavior has been recently explored \cite{Cheek:2018,Kessler:2013}. One of the simplest quantities of interest is the probability of no resistant cells at a fixed time, often called $p_0$. It is a closely related quantity to our target hitting time described in \eqref{eqn_timeapprox}, that is no resistant cell arises by a fixed time. Understanding $p_0$ provides a method to infer the per cell-rate at which resistance-conferring mutations are acquired, often termed the $p_0$-method \cite{Rosche:2000}.

	Some notable exceptions which consider the target population being greater than two transitions away are \cite{Bozic:2010,Komarova:2005,Bauer:2015,Durrett:2010,Durrett:2011,Moreno-Gamez:2015,Chaumont:2018} (ref. \cite{Moreno-Gamez:2015} is discussed above in Section \ref{sec_twodrugs}). In \cite{Komarova:2005}, the same model as that presented here is numerically explored when all vertices have the same fitness \mic{($\alpha(x)=\alpha$, $\beta(x)=\beta$ for all vertices $x$)}, and the implications on multidrug therapy failure is emphasised. \mic{An efficient numerical method to compute the distribution of target hitting time and path probabilities via the iteration of generating functions is given in \cite{Bauer:2015} with a focus on cancer initiation. Both \cite{Bozic:2010,Durrett:2010} are motivated by the accumulation of driver mutations in cancer, and so each transition leads to a fitness increase. For \cite{Bozic:2010} the mean time of the $k$th driver mutation derived and compared to genetic data. The distribution of the time until the $k$th driver is sought in \cite{Durrett:2010}, whose methods are the closest to those used in this paper. There the authors employed an appealing approximation of the model studied in this paper in the path graph case (the approximation is that the seeding rate into vertex $x+1$ from vertex $x$, which is exactly $\nu(x,x+1)Z_{x}(t)$, is approximated by $\nu(x,x+1) W_x^* e^{\lambda(x) t}$, for some sensibly chosen random variables $W^*_x$). Notably, again for small transition rates, the functional form of the distribution of the target hitting time is the same as that given in Theorem \ref{thm_timesacyclic}, however with a different median. The altered median derived in \cite{Durrett:2010} demonstrates that when transitions lead to a fitness increase, the growth rates associated with the intermediate vertex populations have a far greater effect on the target hitting time when compared to the setting considered in this paper. The same type of approximate model was further used in \cite{Durrett:2011} to investigate the target hitting time when transitions bring a fitness increase which is a random variable. Target hitting times on a path graph for a branching process with general offspring distribution, were also discussed in the very recent paper \cite{Chaumont:2018}, but their main explicit results exclude cell death, and hold when successive transition rates along the path are assumed to be getting (infinitely) stronger.}

	\mic{
		The model and questions considered in this paper also arise frequently in the evolutionary emergence or evolutionary escape literature \cite{Iwasa:2003,Iwasa:2004,Serra:2007,Sagitov:2009,Alexander:2013}, with the notable distinction that in the evolutionary escape setting the root and intermediate vertex populations are destined to go extinct ($\lambda(x)< 0$ for $1\leq x\leq N-1$). This scenario is of interest when, for example, a homogeneous, pathogenic population (all residing at the root vertex in the language used in the present study) is treated with a therapy and must acquire sufficiently many mutations away from the root so as to become resistant (populate the target vertex). As in the setup of the present paper, the target hitting time is strongly controlled by the growth of the root population, $Z_1(t)$, which has positive growth rate, $\growthone>0$, the target hitting times in the evolutionary escape setting are distinct from those given in Section \ref{sec_times}. However for the escape probability (which is the probability of reaching the target), if there are multiple paths from the root to the target, the contribution of each path to the escape probability (termed the path value in  \cite{Iwasa:2003,Iwasa:2004}) has an expression strikingly similar to the path weights discussed here for small transition rates (compare \eqref{def_pathweight} with Eqs 6a-c in \cite{Iwasa:2004}). We might conjecture that for a specific path, the path value, as given in \cite{Iwasa:2004}, is the unnormalised probability of reaching the target via the specified path, as we have demonstrated is the case with the path weights in Theorem \ref{thm_paths} (this is implied in \cite{Iwasa:2004} Sec 2.5). Further connections surrounding the path distribution in these differing regimes is an interesting avenue for future work.
	}

	\subsubsection{Summary}
\end{redsect}

In this article we have considered a continuous time birth-death process with transitions. The time until a population of a particular target state arises was investigated. This target state is accessible via one or more transitions from the initial population. Which sequence of transitions leads to the target population arising was explored. This gave rise to a probability distribution over the paths composed of transitions between differing vertices. Motivated by applications, the setting when the initial population was most fit was focused on. The important factor in determining whether the target was reached via a specific path was seen to be the weight of the path. The weight of the path is composed of the transition rates and fitness costs associated with the intermediate vertices along the considered path.

\mic{Our primary contributions are the simple, explicit formulas for the target hitting time and the distribution of paths to the target, which are valid for small transition rates. When compared with previously given expressions for hitting times in branching processes, due to the reduction in complexity, we believe our results are easily interpretable and widely applicable. Further, to the best of our knowledge, the path distribution presented is the first analytic characterisation of evolutionary paths through a multitype branching process. The utility of our formulas was demonstrated on a variety of scenarios pertaining to the emergence of drug resistance. The biological relations revealed would have been difficult to deduce by a simulation based approach. }

\mic{A shortcoming of our model is that the potential for unlimited growth and the lack of interactions between the cells is unrealistic. However in systems with large carrying capacities and abundant resources, our framework provides a convenient approximation and has been widely used. One possible route for assessing the importance of these neglected factors is to embed our questions in a similar setup to that used in the recent work \cite{Bovier:2018}, which included competition between cell types and a carrying capacity which the population could fluctuate around. The initial conditions in \cite{Bovier:2018} were such that the model does not represent a growing population, so we do not compare their findings with the results of Section \ref{sec_results}. Embedding our questions in a variant of the model of \cite{Bovier:2018}, such that the population is growing, might offer insight on the effect of these neglected aspects. Whenever these factors are deemed unimportant, or as a starting point for more complex models, this article provides quick and accessible results which may be used to guide and develop biological insight.
}


\section{Materials and Methods}\label{sec_methods}

\subsection{Population growth}
The main ingredient in proving Theorem \ref{thm_timesacyclic}, and consequently Theorem \ref{thm_paths}, is the long-time behaviour of the population of the initial and intermediate vertices. In this section we discuss this asymptotic behaviour. 

\mic{
Firstly we extend the definition of the total path weight given in Section \ref{sec_model}. Let the set of paths between the root and vertex $x$ be denoted $\mathcal{P}_{1,x}$. Then using the definition of path weight \eqref{def_pathweight}, we let the total weight for vertex $x$ be $\phi_x=\sum_{p\in \mathcal{P}_{1,x}}w(p)$ for $2\leq x \leq N$. Further, let us denote the ratio of the total weight to fitness cost for vertex $x$ as
\begin{align}
\Phi_{x}=\
\frac{\phi_x}{\lambda-\lambda(x)} 
\end{align}
for $2\leq x \leq N-1$. To allow us to conveniently describe the growth at vertex 1 also, we set $\Phi_1=1$
}

The target population can only be founded by transitions from cells residing at the neighbouring vertices (vertices connected to the target by an edge). Therefore, understanding the population growth of cells at these vertices is needed to discuss the timing and manner in which cells at the target arise. At large times, this understanding is provided by the following theorem, concerning the population at the initial and intermediate vertices \mic{(it is useful to recall that $Z_x(t)$ is the number of cells at vertex $x$ at time $t$ and that we initiate with $z$ cells at vertex 1)}. It follows from a more general result due to Janson \cite{Janson:2004}, which we have tailored to our particular setting. 
\mic{
\begin{theorem}\label{thm_Zk1} With probability one
	\begin{equation}
	\lim_{t\rightarrow \infty}e^{-\growthone t}(Z_x (t))_{x=1}^{N-1}= W (\Phi_x)_{x=1}^{N-1} .
	\end{equation}
	Here $W$ is distributed as the sum of $K$ independent exponential random variables with parameter $\growthone/\birthone$, where $K$ is binomial with $z$ trials and success probability $\growthone/\birthone$.
\end{theorem}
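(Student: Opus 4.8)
The plan is to recognise $(Z_x(t))_{x=1}^{N-1}$ as a multitype continuous-time Markov branching process and to extract its large-time behaviour from the almost-sure convergence theorem of \cite{Janson:2004}, as the statement already hints. The whole claim then reduces to three ingredients: (i) identifying the first-moment matrix $M$ of the process and its dominant eigenvalue; (ii) computing the associated right eigenvector and verifying that its entries are exactly the $\Phi_x$; and (iii) identifying the scalar limit $W$ with the martingale limit of the vertex-$1$ population and computing its law. I would organise the argument around these three steps.

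For (i)–(ii), the mean matrix has entries $M_{xx}=\lambda(x)$ and $M_{xy}=\nu(y,x)$ whenever $(y,x)\in E$, so that $\tfrac{d}{dt}\ep[Z(t)]=M\,\ep[Z(t)]$. Since $G$ is acyclic, listing the vertices in a topological order makes $M$ triangular, and its eigenvalues are the diagonal entries $\growthone=\lambda(1)>\lambda(2),\dots,\lambda(N-1)$; thus $\growthone$ is the dominant eigenvalue, and because every intermediate vertex satisfies $\lambda(x)<\growthone$ strictly it is simple. Janson's theorem then yields $e^{-\growthone t}Z(t)\to W v$ almost surely, with $v$ the right $\growthone$-eigenvector normalised by $v_1=1$ and $W$ the scalar limit dictated by the left eigenvector. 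Solving $Mv=\growthone v$ gives the recursion $v_x=(\growthone-\lambda(x))^{-1}\sum_{y:(y,x)\in E}\nu(y,x)\,v_y$ with $v_1=1$. I would check $v_x=\Phi_x$ by matching this against the path-weight recursion: decomposing each root-to-$x$ path as a root-to-$y$ path followed by the edge $(y,x)$ and using \eqref{def_pathweight} gives $\phi_x=\sum_{y:(y,x)\in E}\nu(y,x)\Phi_y$ (the convention $\Phi_1=1$ absorbing the direct edges $(1,x)$), which is exactly the eigenvector recursion after dividing by $\growthone-\lambda(x)$. The left eigenvector is $u=e_1$ (vertex $1$ is a source and $Z$ starts only there, so $M_{1y}=\growthone\,\delta_{1y}$), whence the Janson limit is pinned to $W=\lim_{t\to\infty}e^{-\growthone t}Z_1(t)$, and $\langle u,v\rangle=v_1=1$ makes the normalisation consistent.

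For (iii), vertex $1$ has no incoming edges and a transition out of it leaves the parent vertex-$1$ cell in place, so $Z_1$ is an autonomous linear birth-death process with rates $\birthone,\deathone$ started from $z$ cells. Classical theory gives the a.s.\ limit $W=\lim e^{-\growthone t}Z_1(t)$, and passing to the limit in the single-cell generating function yields the Laplace transform $\ep[e^{-sW}]=(\growthone+\deathone s)/(\growthone+\birthone s)$, which is the mixture of an atom of mass $\deathone/\birthone$ at $0$ and an $\mathrm{Exp}(\growthone/\birthone)$ law carrying the remaining mass $\growthone/\birthone$. Summing $z$ independent copies (one per initial cell) gives precisely the stated compound law: a $\mathrm{Binomial}(z,\growthone/\birthone)$ number $K$ of independent $\mathrm{Exp}(\growthone/\birthone)$ variables, since $K$ counts the surviving lineages and each contributes an exponential.

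The main obstacle is (i)–(ii) applied to a \emph{reducible} mean matrix: one must ensure that the strictly sub-dominant intermediate types carry no independent randomness at the $e^{\growthone t}$ scale but are instead slaved to the single source limit $W$, so that one scalar controls every coordinate simultaneously. This is exactly what Janson's result supplies, and checking its moment hypotheses (finite offspring variances, which hold here) is routine. Were one to seek a self-contained proof instead, the hard step would be an induction along the topological order: writing $Z_x(t)$ as a sum over upstream seeding events of independent birth-death subprocesses of growth rate $\lambda(x)<\growthone$, and showing this randomly-seeded sum, normalised by $e^{-\growthone t}$, concentrates almost surely. There the deterministic asymptotic $\int_0^t e^{\growthone s}e^{\lambda(x)(t-s)}\,ds\sim e^{\growthone t}/(\growthone-\lambda(x))$ produces the fitness-cost denominator, while second-moment estimates are needed to suppress the fluctuations of the subprocesses and to propagate the \emph{same} $W$ through the seeding intensity $\nu(y,x)Z_y(s)$.
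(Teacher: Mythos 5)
Your proposal is correct, and its skeleton coincides with the paper's: invoke Janson's almost-sure convergence theorem for the multitype process (the paper's Theorem \ref{thm_Zk2}), note that the root being a source makes $e_1$ a left eigenvector and $Z_1$ an autonomous birth--death process, and pin the a priori unknown scalar in Janson's limit to $W=\lim_{t\to\infty}e^{-\growthone t}Z_1(t)$, whose binomial--Erlang law is classical (the paper cites it as Lemma \ref{lemma_limBD}; you rederive it via the Laplace transform $(\growthone+\deathone s)/(\growthone+\birthone s)$, which is a correct, if optional, addition). Where you genuinely diverge is in identifying the right eigenvector with $(\Phi_x)$. You do this algebraically for a general acyclic graph: the last-edge decomposition of root-to-$x$ paths gives $\phi_x=\sum_{y:(y,x)\in E}\nu(y,x)\Phi_y$, which after division by $\growthone-\lambda(x)$ is exactly the eigen-recursion, and simplicity of $\growthone$ (via topological triangularisation, the paper's Lemma \ref{lem_sufcond}, condition 1) plus $v_1=\Phi_1=1$ forces $v=\Phi$. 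The paper instead computes the eigenvector explicitly only for path graphs (Corollary \ref{cor_pathgraph}) and reaches general acyclic graphs probabilistically, decomposing $Z_x(t)=\sum_{p\in\mathcal{P}_{1,x}}X(p,t)$ into vertex-lineage processes, each of which is itself a path-graph branching process, and summing the lineage limits $e^{-\growthone t}X(p,t)\to Ww(p)$ (Corollaries \ref{cor_Xlim} and \ref{cor_acyclicgrowth}). Your route is shorter and self-contained for this theorem; the paper's route earns its extra length because the lineage-level limit of Corollary \ref{cor_Xlim} is precisely the ingredient reused later to prove the path-distribution result (Theorem \ref{thm_paths}), which your eigenvector argument would not by itself provide.
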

}
\mic{A short proof, demonstrating how Theorem \ref{thm_Zk1} can be deduced from \cite{Janson:2004} is given in Section \ref{sec_SIgrowth} of the supplementary material (see Theorem \ref{thm_Zk2})}.
The distribution of $W$ may be called a binomial-Erlang mixture, being an Erlang distribution with a binomially distributed shape parameter. For the running example of a path graph, we see that asymptotically $Z_{i+1}(t)$ is related to $Z_{i}(t)$ by a multiplicative factor of $\frac{\nu(i,i+1)}{\growthone-\lambda(i+1)}$. This is demonstrated in Fig.\ \ref{fig_growthtime}a. We note that in the case of $N=3$, Theorem \ref{thm_Zk1} was previously given in \cite{Cheek:2018} (Theorem  3.2).

Using Theorem \ref{thm_Zk1} and that the immigration rate into the target vertex from any target neighbouring vertex $x$ is $Z_{x}(t)\nu(x,N)$ we are able to prove the results of Section \ref{sec_results}. Full proofs are provided in the supplementary material.

\section*{Acknowledgments}
We thank David Cheek, Stefano Avanzini and the Edinburgh physics population dynamics group for helpful discussions and feedback. MDN acknowledges support from EPSRC via a studentship.


\section{Supplementary Material}
The outline of this supplement is as follows. First we recap the model, introducing notation that will be needed. Next we prove versions of Theorem \ref{thm_Zk1} and Theorem \ref{thm_timesacyclic}, which do not require $G$ to be acyclic. After this we specialise to the acyclic case. Staying with the acyclic case we prove Theorem \ref{thm_paths}. Finally, how the path distribution is altered when cycles are permitted is considered.

\subsection{Model recap}\label{sec_SImodel}

We recap our framework. Our model is a specific form of multitype branching process \cite{Athreya:2004} in which each population will evolve according to a Markovian linear birth-death process with transitions. We will assume each population is comprised of cells. As a conceptual framework we associate the multitype branching process with a simple, finite, rooted, directed, graph $G=(V,E)$ containing $N$ vertices ($N=|V|$). Labels of the vertices take values in $\{1,\ldots ,N\}$ and $E$ is a subset of the set of ordered pairs $\{(i,j):i,j\in V, i\neq j\}$. We shall often refer to vertex 1 as the root and vertex $N$ as the target. Each vertex is reachable from the root and the target is reachable from any other vertex. 	We further assume the root is a source vertex and the target vertex is a sink. Letting the set of incoming neighbours for vertex $i$ be $\mathcal{N}^{-}(i)=\{k \in V:(k,i)\in E\}$ and the set of outgoing neighbours $\mathcal{N}^{+}(i)=\{k \in V:(i,k)\in E\}$, the previous assumption is $\mathcal{N}^{-}(1)=\mathcal{N}^{+}(N)=\emptyset$. Each type in the branching process is uniquely mapped to a vertex, and so the number of types is $N$. Hence any cell may be described by its type or the vertex associated to that type.

Take any cell at vertex $x$. We assume this cell divides (replaced by two identical copies) at rate $\alpha(x)$, dies (removed from system) at rate $\beta(x)$ and transitions to a cell at vertex $y$ (replaced by an identical copy and a cell at vertex $y$) at rate $\nu(x,y)$ if edge $(x,y)$ is contained in $E$. The growth rate of cells at vertex $x$ will be denoted $\lambda(x)=\alpha(x)-\beta(x)$. The parameters associated with the vertex 1 population feature prominently and so for convenience we let $\alpha=\alpha(1),\,\beta=\beta(1)$ and $\lambda=\lambda(1)$. All cells are independent. We will focus on the setting where the vertex 1 population is the most fit and has positive growth rate. Therefore, we henceforth assume that $\growthone>0$ and for $2\leq x \leq N-1$, $\lambda(x) < \growthone$. We do not specify the relative fitness of the vertex $N$ population. The cell level dynamics may summarised as 
\begin{align*}(x) \rightarrow 
\begin{cases}
(x),(x) \quad &\mbox{rate } \alpha(x)
\\
\varnothing \quad &\mbox{rate } \beta(x)
\\
(x),(y) \quad &\mbox{rate } \nu(x,y) \mbox{ if }  (x,y)\in E
\end{cases}
\end{align*}
where $(x)$ represents a cell at vertex $x$ and $\varnothing$ symbolises a dead cell. At a population level, the number of cells at vertex $x$ at time $t$ will be denoted $Z_{x}(t)$. We shall always assume $Z_{x}(0)=z\delta_{x,1}$, where $\delta_{x,y}$ is the Kronecker delta function.  The population growth of the initial and intermediate vertices is crucial and so the notation $\mathcal{Z}(t)=(Z_x(t))_{1\leq x\leq N-1}$ will be used.

\begin{redsect}
We also recap some relevant definitions from the main text that will be used repeatedly.
First, the target hitting time is defined to be
\begin{equation}\label{defSI_time}
T=\inf\{t\geq 0:\, Z_{N}(t)>0\}. 
\end{equation}
Now let the set of paths between the root and any vertex $x\in V$ be be denoted $\mathcal{P}_{1,x}$. Then we define the weight of the path $p\in \mathcal{P}_{1,x}$ to be
\begin{equation}\label{defSI_pathweight}
w(p)=\nu(p_1,p_2)\prod_{i=2}^{l}\frac{\nu(p_i,p_{i+1})}{\growthone-\lambda(p_i)},
\end{equation}
 Throughout the empty product is set to 1. Further, we let the total weight of the the vertex $x$  be 
\begin{equation}\label{defSI_totalweight}
\phi_x= \sum_{p\in \mathcal{P}_{1,x}}w(p).
\end{equation}
\end{redsect}

\subsection{Population growth}\label{sec_SIgrowth}

In this section we prove Theorem \ref{thm_Zk1}, but $G$ is not assumed to be acyclic. Specialising to the acyclic case is in Section \ref{sec_SIacyclic}. 

As the root vertex is a source, and hence no type transitions into vertex 1 cells, $Z_1(t)$ follows a linear birth-death process. Its asymptotic behaviour is described by the following classic result \cite{Athreya:2004,Durrett:2015}.
\begin{lemma}\label{lemma_limBD} 
	\begin{equation}
	\lim_{t\rightarrow \infty}Z_{1}(t)e^{-\growthone t} =W \quad a.s.
	\end{equation}
		Here $W$ is distributed as the sum of $K$ independent exponential random variables with parameter $\growthone/\birthone$, where $K$ is binomial with $z$ trials and success probability $\growthone/\birthone$.	
\end{lemma}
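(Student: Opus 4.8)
The plan is to exploit the independence of cells to reduce the statement to a single-ancestor birth-death process, to establish almost sure convergence by a martingale argument, and then to identify the limiting law by an explicit generating-function computation. Concretely, since the $z$ initial vertex-1 cells evolve independently and no cell ever transitions into vertex 1 (the root is a source), we may write $Z_1(t)=\sum_{i=1}^z Z_1^{(i)}(t)$, where the $Z_1^{(i)}$ are i.i.d.\ copies of a linear birth-death process with birth rate $\birthone$ and death rate $\deathone$ started from a single cell. It therefore suffices to treat the $z=1$ case and then assemble the general result as a sum of $z$ independent limits.

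For a single ancestor I would note that $\ep[Z_1^{(1)}(t)]=e^{\growthone t}$ and, by the Markov/branching property, $\ep[Z_1^{(1)}(t)\mid \mathcal F_s]=Z_1^{(1)}(s)e^{\growthone(t-s)}$ for $t>s$, where $(\mathcal F_s)$ is the natural filtration. Hence $M(t):=e^{-\growthone t}Z_1^{(1)}(t)$ is a non-negative martingale and converges almost surely to a finite limit $W_1$ by the martingale convergence theorem. To ensure the limit is non-degenerate (with no mass escaping to infinity, so that $\ep W_1=1$) I would verify an $L^2$ bound: the second-moment formula $\operatorname{Var}(Z_1^{(1)}(t))=\tfrac{\birthone+\deathone}{\growthone}e^{\growthone t}(e^{\growthone t}-1)$ shows that $\operatorname{Var}(M(t))$ stays bounded, so $M$ is bounded in $L^2$ and uniformly integrable.

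The substantive step is identifying the law of $W_1$. Here I would use the classical closed form for the probability generating function $F(s,t)=\ep[s^{Z_1^{(1)}(t)}]$ of a linear birth-death process to compute the Laplace transform of the limit,
\begin{equation}
\ep\!\left[e^{-uW_1}\right]=\lim_{t\to\infty}F\!\left(e^{-u e^{-\growthone t}},\,t\right)=\frac{\deathone}{\birthone}+\frac{(\growthone/\birthone)^2}{u+\growthone/\birthone},\qquad u\ge 0.
\end{equation}
This is exactly the transform of the mixture that places mass $\deathone/\birthone$ at the origin---the extinction probability---and, with the complementary probability $\growthone/\birthone$, is exponential with parameter $\growthone/\birthone$. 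Equivalently, the same limit can be read directly from the explicit modified-geometric transition probabilities $\pr(Z_1^{(1)}(t)=n)$ after the $e^{-\growthone t}$ rescaling.

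Finally I would assemble the general case: writing $W=\sum_{i=1}^z W_i$ with $W_i$ i.i.d.\ as above, each ancestral lineage survives independently with probability $\growthone/\birthone$, contributing an independent $\operatorname{Exp}(\growthone/\birthone)$ term in that event and $0$ otherwise. Thus the number of surviving lineages is $K\sim\operatorname{Binomial}(z,\growthone/\birthone)$, and conditionally on $K=k$ the limit $W$ is a sum of $k$ i.i.d.\ $\operatorname{Exp}(\growthone/\birthone)$ variables---precisely the claimed binomial-Erlang mixture. The main obstacle is this distributional identification: the convergence itself is a soft martingale argument, but pinning down the exact compound law requires the explicit generating function and careful control of the joint $t\to\infty$, $s\to1$ limit.
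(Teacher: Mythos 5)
Your proposal is correct. Note that the paper itself does not prove this lemma at all: it is stated as a ``classic result'' with citations to Athreya--Ney and Durrett, so there is no in-paper argument to compare against. What you have written is essentially the standard textbook proof that those references contain: reduce to $z=1$ by independence of the initial ancestors (valid here because the root is a source vertex, so $Z_1$ is an autonomous linear birth-death process), obtain a.s.\ convergence from the non-negative martingale $e^{-\growthone t}Z_1^{(1)}(t)$, and identify the limit law through the explicit generating function. Your Laplace transform computation checks out: with $s=e^{-ue^{-\growthone t}}$ the classical formula for $F(s,t)$ gives the limit $(\deathone u+\growthone)/(\birthone u+\growthone)$, which equals $\deathone/\birthone+(\growthone/\birthone)^2/(u+\growthone/\birthone)$, i.e.\ an atom of mass $\deathone/\birthone$ at zero plus an $\mathrm{Exp}(\growthone/\birthone)$ component; summing $z$ independent copies then yields exactly the binomial--Erlang mixture in the statement. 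One minor remark: the $L^2$/uniform-integrability step is not strictly needed to establish the lemma as stated, since a.s.\ convergence plus the transform limit already pins down the law (non-degeneracy included); it serves only as an independent confirmation that $\ep W_1=1$, so keeping or dropping it is a matter of taste.
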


$W$, as given in the above theorem, is positive if and only if the vertex 1 population survives (almost surely). It will be convenient to define this event and so we let
\begin{equation}\label{def_surv}
\survone =  \{Z_1 (t)> 0\, \mbox{ for all } t\geq 0\}.
\end{equation}
From the distribution of $W$, \mic{in particular using that $\pr(\survone)=\pr(W>0)=\pr(K>0)$ with $K$ as in Lemma \ref{lemma_limBD}}, we immediately have 
\begin{equation}\label{eqn_survoneprob}
\pr(\survone)=1-(\beta/\alpha)^z.
\end{equation}
To obtain a similar result to Lemma \ref{lemma_limBD} for $\mathcal{Z}(t)$, we recall the  $N-1\times N-1$ matrix $A$  from Section \ref{sec_cyc}, with entries
\begin{equation}a_{ij}=
\begin{cases}
\lambda(i)\quad j=i
\\
\nu(j,i)\quad \mbox{if } (j,i)\in E
\\
0\quad \mbox{otherwise.}
\end{cases}
\end{equation}
Note that the off diagonal elements of the transpose of $A$ are positive whenever the corresponding element of the adjacency matrix of $G$ is one.

We will be using a result from \cite{Janson:2004} which holds when the largest real eigenvalue of $A$, $\lambda^*$, is simple (i.e. has algebraic multiplicity 1) and equal to $\growthone$. Below we give two sufficient conditions for this to be true. The second is motivated by applications when typically all transition rates are small.
\begin{lemma}\label{lem_sufcond}
	Two sufficient conditions such that $\lambda^*=\growthone$ and is simple are: 
	1) $G$ is acyclic.
	2) For all $2\leq i \leq N-1$, $\lambda(i)+\sum_{k\in \mathcal{N}^{-}(i)}\nu(k,i) < \growthone$.
\end{lemma}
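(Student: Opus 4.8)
The plan is to exploit the source condition on the root to reduce the eigenvalue problem to a strictly smaller matrix, and then dispatch the two hypotheses by elementary linear algebra. Since $\mathcal{N}^{-}(1)=\emptyset$, there are no edges into vertex $1$, so $a_{1j}=\nu(j,1)=0$ for every $j\neq 1$, while $a_{11}=\growthone$. Hence the first row of $A$ is $(\growthone,0,\dots,0)$ and $A$ is block lower-triangular,
\[
A=\begin{pmatrix}\growthone & 0^{\top} \\ b & C\end{pmatrix},
\]
where $b=(a_{i1})_{i=2}^{N-1}$ collects the outgoing transition rates of the root and $C=(a_{ij})_{i,j=2}^{N-1}$ is the principal submatrix indexed by the intermediate vertices $\{2,\dots,N-1\}$. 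Consequently $\det(A-xI)=(\growthone-x)\det(C-xI)$, so $\operatorname{spec}(A)=\{\growthone\}\cup\operatorname{spec}(C)$. The whole statement therefore reduces to showing that, under either hypothesis, every eigenvalue of $C$ has real part strictly less than $\growthone$. This immediately makes $\growthone$ the eigenvalue of largest real part, hence the largest real eigenvalue, so $\lambda^{*}=\growthone$; and since then $\growthone\notin\operatorname{spec}(C)$, the factorisation shows its algebraic multiplicity in $A$ is exactly one, i.e.\ $\lambda^{*}$ is simple.

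For the acyclic case (condition 1), I would note that the subgraph induced on $\{2,\dots,N-1\}$ inherits acyclicity, so its vertices admit a topological ordering; relabelling $C$ according to this ordering makes $C$ triangular, whence its eigenvalues are precisely its diagonal entries $\lambda(2),\dots,\lambda(N-1)$. By the standing fitness assumption $\lambda(i)<\growthone$ for $2\le i\le N-1$, every eigenvalue of $C$ is real and strictly below $\growthone$, as required.

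For condition 2 I would invoke Gershgorin's circle theorem applied to $C$. Since $c_{ii}=\lambda(i)$ is real and the off-diagonal entries $c_{ij}=\nu(j,i)\ge 0$ are nonnegative, the $i$-th Gershgorin disc is centred at $\lambda(i)$ with radius $\sum_{j\ge 2:\,(j,i)\in E}\nu(j,i)\le \sum_{k\in\mathcal{N}^{-}(i)}\nu(k,i)$, the bound merely restoring the possibly-missing contribution of an edge from vertex $1$. Hence every eigenvalue $\mu$ of $C$ satisfies $\operatorname{Re}\mu\le \max_{2\le i\le N-1}\bigl(\lambda(i)+\sum_{k\in\mathcal{N}^{-}(i)}\nu(k,i)\bigr)<\growthone$, where the strict inequality is exactly hypothesis 2. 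This completes the reduction in the second case.

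The points needing care are bookkeeping rather than genuine difficulty: confirming that the source condition annihilates the entire first \emph{row} of $A$ (not the first column, which carries the outgoing rates $b$), and that the Gershgorin radius for $C$ omits the edge into vertex $1$ and so is dominated by the full incoming rate $\sum_{k\in\mathcal{N}^{-}(i)}\nu(k,i)$ appearing in hypothesis 2. I would also remark in passing that $A$ is a Metzler matrix (nonnegative off-diagonal entries), so its spectral abscissa is automatically attained at a real eigenvalue and the phrase ``largest real eigenvalue'' is unambiguous; the argument above, however, identifies $\lambda^{*}=\growthone$ and its simplicity without relying on that fact.
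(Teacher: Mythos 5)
Your proof is correct, and while it uses the same two essential tools as the paper (a topological sort for condition 1, Gershgorin's theorem for condition 2), it organizes them around a different and cleaner decomposition. Your unifying step is to use the source condition $\mathcal{N}^{-}(1)=\emptyset$ to write $A$ as the $2\times 2$ block lower-triangular matrix $\begin{pmatrix}\growthone & 0^{\top}\\ b & C\end{pmatrix}$, so that $\operatorname{spec}(A)=\{\growthone\}\cup\operatorname{spec}(C)$ and both conditions reduce to the single task of bounding $\operatorname{spec}(C)$; simplicity of $\growthone$ then falls out of the determinant factorisation. The paper treats the two conditions separately and, for condition 2, takes a heavier route: it first exhibits $\mathbf{e}_1^{\mathrm{T}}$ as a left eigenvector to show $\growthone\in\operatorname{spec}(A)$, then invokes the Frobenius normal form (a permutation to block-triangular form with \emph{irreducible} diagonal blocks, citing Varga) and applies Gershgorin to each irreducible block. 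Your version shows this machinery is unnecessary: Gershgorin needs no irreducibility, so applying it directly to the single (possibly reducible) block $C$ suffices, and the row-sum bookkeeping you do — the disc for row $i$ of $C$ has radius at most $\sum_{k\in\mathcal{N}^{-}(i)}\nu(k,i)$ since dropping the column for vertex $1$ only shrinks the radius — is exactly right. For condition 1 the difference is cosmetic (you sort only the induced subgraph on $\{2,\dots,N-1\}$, the paper sorts the whole graph), and your closing remark that the Metzler structure makes ``largest real eigenvalue'' unambiguous, while not needed, is a nice observation the paper omits. In short: same ingredients, but your single block split at the root buys a more elementary and more uniform argument than the paper's irreducible-block decomposition.
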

\begin{proof}
	1) As $G$ is directed and acyclic, a relabelling of the vertices exists (often called the topological sorting, see  \cite{Cormen:2009} chapter 22.4) such that the underlying adjacency matrix of the relabelled graph is upper triangular. We use this relabelling with vertex 1 as first in the labelling, $\pi$. Thus $\pi$ is a permutation of $\{1,\ldots ,N\}$ with $\pi(1)=1$. The permutation matrix associated with $\pi$ and its transpose are
	\begin{equation}
	P_1= \begin{bmatrix}
	\mathbf{e}_{\pi(1)}^{\mathrm{T}} \\
	\vdots \\
	\mathbf{e}_{\pi(n)}^{\mathrm{T}}
	\end{bmatrix}
	,\quad 
	P_1^{\mathrm{T}}= \begin{bmatrix}
	\mathbf{e}_{\pi(1)}, \cdots, \mathbf{e}_{\pi(N)}
	\end{bmatrix}.
	\end{equation}
	Under $\pi$ we have the new matrix
	\begin{equation}
	A'= P_1 A P_1^{-1}.
	\end{equation}
	Due to the relation between the adjaceny matrix of $G$ and $A$, we see $A'$ is lower triangular and has the $\lambda(i)$ as diagonal elements. Thus $\growthone$ is the largest eigenvalue of $A'$. As $A$ and $A'$ are similar matrices, and hence share eigenvalues, we can conclude the statement.

	2) By considering the left eigenvector $\mathbf{e}_{1}^{\mathrm{T}}$, $\growthone$ is indeed an eigenvalue. We now demonstrate all other eigenvalues have real part smaller than $\growthone$, which implies $\lambda$ is simple. Observe that the assumed condition implies the bound on the row sums
	\begin{equation}\label{eqn_rowbound}
	\sum_{j=1}^{N-1} a_{i,j}<\lambda.
	\end{equation}
	As $A$ is reducible (due to the root being a source vertex) there exists a permutation matrix $P_2$ such that
	\[P_2 A P_2^{-1}=
	\left(
	\begin{array}{ccccc}
	\growthone                                    \\
	R_{2,1}& R_{2,2}             &   & \text{\Large0}\\
	\vdots&               & \ddots               \\
	R_{M,1}& \ldots  &   & R_{M,M}            \\
	\end{array}
	\right)
	\]  
	with each submatrix $R_{i,i}$ square and irreducible and $2\leq M\leq N-1$ (see \cite{Varga:1962} section 2.3). The eigenvalues of $A$ comprise $\growthone$ and the eigenvalues of the $R_{i,i}$. For any of the $R_{i,i}$ let $r^{(i)}_j$ be sum of the $j$th row of $R_{i,i}$. By Gershgorin's Disc Theorem (\cite{Varga:1962} Theorem 1.5) the real part of the eigenvalues of $R_{i,i}$ are bounded by $\max_{j}\{r^{(i)}_j\}$. Further, the bound \eqref{eqn_rowbound} is preserved under permutations and so each $r^{(i)}_j<\lambda$. As this holds for each $R_{i,i}$ the claim is shown.
\end{proof}


Henceforth we assume either of the conditions given in Lemma \ref{lem_sufcond} hold and thus $\lambda^* = \growthone$ and is simple. Due to this assumption, results of \cite{Janson:2004} show that the vertex 1 cells drives the entire population growth. To state this result, we introduce the vector $\psi$ with 
\begin{equation}\label{def_psivec}
\psi_{i}=\alpha(i)+\beta(i)+\sum_{k\in \mathcal{N}^{+}(i)}\nu(i,k).
\end{equation}
Now let $\tilde u,\tilde v$ be the left and right eigenvectors of $A$ corresponding to $\growthone$, scaled such that
\begin{equation}\label{cond_evect_norm}
\psi\cdot \tilde v=1,\quad \tilde u\cdot \tilde v = 1.
\end{equation}
Note, from the structure of $A$, $\tilde u_i>0$ if $i=1$ and 0 otherwise. 
Further let
\begin{equation}\label{def_phi}
\rvec=\tilde u_1 \tilde v
\end{equation}
Then the following result, which is the more general form of Theorem \ref{thm_Zk1}, holds.

\begin{theorem}\label{thm_Zk2} 
	With $W$ as in Lemma \ref{lemma_limBD},
	\begin{equation}
	\lim_{t\rightarrow \infty}e^{-\growthone t}\mathcal{Z}(t)= W (\rvec_x)_{x=1}^{N-1} \quad a.s.
	\end{equation}
\end{theorem}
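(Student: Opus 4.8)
The plan is to recognize $\mathcal{Z}(t)=(Z_x(t))_{1\le x\le N-1}$ as a supercritical multitype continuous-time Markov branching process and to read off its large-time behaviour from the general convergence theorem of Janson \cite{Janson:2004}. The first step is to confirm that the infinitesimal mean of $\mathcal{Z}$ is governed by $A$. Writing out the generator, the expected occupation numbers satisfy $\frac{d}{dt}\ep[\mathcal{Z}(t)]=A\,\ep[\mathcal{Z}(t)]$: a vertex $i$ cell contributes $\lambda(i)=\birthone(i)-\deathone(i)$ to its own mean growth (births minus deaths, a transition leaving $i$ not depleting $Z_i$ since the originating cell persists), while every incoming neighbour $k\in\mathcal{N}^{-}(i)$ feeds $Z_i$ at rate $\nu(k,i)$. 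This is precisely the entrywise description of $A$, so $A$ is the mean generator, and by Lemma \ref{lem_sufcond} its dominant real eigenvalue is $\lambda^*=\growthone$ and is simple.

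With this in hand I would invoke Janson's theorem. Because each elementary event (division, death, or transition) produces a bounded number of offspring, the integrability and second-moment hypotheses are automatic, and the simplicity of the dominant eigenvalue rules out polynomial $t^k e^{\growthone t}$ corrections. Janson's result then yields the almost sure convergence $e^{-\growthone t}\mathcal{Z}(t)\to W'\tilde v$ for a \emph{single} nonnegative scalar random variable $W'$ shared across all coordinates, where $\tilde v$ is the right $\growthone$-eigenvector of $A$ normalized by \eqref{cond_evect_norm}. The scalar $W'$ is the almost sure limit of the discounted linear functional $e^{-\growthone t}\,\tilde u\cdot\mathcal{Z}(t)$, which is a martingale precisely because $\tilde u A=\growthone\tilde u$, and the normalization $\tilde u\cdot\tilde v=1$ makes the limit $W'\tilde v$ consistent with this identification.

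The final step is to pin down $W'$ using the already established behaviour of the root. From the structure of $A$ (the root is a source, so no type transitions into vertex $1$) the left eigenvector is supported on vertex $1$ alone, $\tilde u=\tilde u_1\mathbf{e}_1$. Hence the martingale collapses to $e^{-\growthone t}\,\tilde u\cdot\mathcal{Z}(t)=\tilde u_1\, e^{-\growthone t}Z_1(t)$, and Lemma \ref{lemma_limBD} gives $W'=\tilde u_1 W$, with $W$ the binomial--Erlang limit of the vertex $1$ birth-death process. Substituting, $e^{-\growthone t}\mathcal{Z}(t)\to \tilde u_1 W\,\tilde v = W(\tilde u_1\tilde v)=W\rvec$, which is the claim. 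I would also note that $\rvec=\tilde u_1\tilde v$ is invariant under rescaling $\tilde v\mapsto c\tilde v$ (since then $\tilde u_1\mapsto \tilde u_1/c$), so the conclusion is independent of the chosen normalization, and that $\rvec_1=\tilde u_1\tilde v_1=\tilde u\cdot\tilde v=1$, recovering the convention $\rvec_1=1$ used in the main text.

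The main obstacle is the careful application of Janson's general theorem in this \emph{reducible} setting: one must verify that, although $A$ is reducible (vertex $1$ being a source forces a block-triangular structure), the dominant eigenvalue is contributed by the source block and is simple, so that the limit is genuinely a scalar multiple of the single eigenvector $\tilde v$ rather than a more intricate object involving generalized eigenvectors or several eigendirections growing at the common rate $e^{\growthone t}$. This is exactly what the hypothesis $\lambda^*=\growthone$ simple, guaranteed by Lemma \ref{lem_sufcond}, secures; once it is in place, the identification of the limiting constant through the vertex $1$ martingale and Lemma \ref{lemma_limBD} is routine.
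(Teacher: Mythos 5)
Your proposal is correct and follows essentially the same route as the paper's proof: invoke Janson's Theorem 3.1 to obtain $e^{-\growthone t}\mathcal{Z}(t)\rightarrow \hat W\tilde v$ almost surely for some nonnegative scalar $\hat W$, then identify $\hat W=\tilde u_1 W$ via Lemma \ref{lemma_limBD} and the fact that $\tilde u$ is supported on vertex $1$ with the normalization $\tilde u\cdot\tilde v=1$. Your additional verifications (that $A$ is the mean matrix, the martingale interpretation of the limit, and the reducibility discussion) are sound elaborations of hypotheses the paper handles via Lemma \ref{lem_sufcond} and the remarks surrounding it, but they do not change the argument.
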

\begin{proof}
	We demonstrate how to apply the result in \cite{Janson:2004} to our setting. From Theorem 3.1 in \cite{Janson:2004}, with probability one
	\begin{equation}\label{eqn_limitJanson}
	\lim_{t\rightarrow \infty}e^{-\growthone t} \mathcal{Z}(t)=\hat W \tilde v
	\end{equation}
	where $\hat W$ is a non-negative random variable, with as yet unknown distribution. However from Lemma \ref{lemma_limBD} almost surely $e^{-\lambda t}Z_1(t)\rightarrow W$, with the distribution of $W$ given in the lemma. This implies $\hat W=W/\tilde v_1=\tilde u_1 W$ with the last equality following from \eqref{cond_evect_norm} and that only the first entry of $\tilde u$ is positive.
\end{proof}
\mic{The proof of Theorem \ref{thm_Zk2}  indicates our primary reason for taking the root vertex to be a source (no incoming edges). If transitions were permitted into the vertex 1 population, but still $\lambda^*=\lambda$ and is simple, then \eqref{eqn_limitJanson} still applies but we do not know the distribution of $\hat W$. Additionally, regardless of whether the root is a source, if we consider the the more general offspring distribution discussed in Section \ref{sec_genoff}, we similarly would have no relevant knowledge of the distribution of $\hat W$ (the Laplace transform of $\hat W$ is known to satisfy an implicit integral equation \cite{Athreya:2004}, but we cannot see how to use this information here).}


\subsection{Time until target vertex is populated: general case}\label{sec_SItimes}
%
%

We now turn our attention to $T$, the time at which the target population arises, as defined in \eqref{defSI_time}. So that we can control all target seeding transition rates we let
$$
\nu^*=\max_{i\in \mathcal{N}^{-}(N)}\nu(i,N)
$$
i.e. the largest of the transition rates into the target. We further give the definition of $\mu$ in this more general setting as
\begin{equation}\label{def_mugeneral}
\mu=\frac{1}{\growthone}\log\frac{\growthone^2}{\birthone\sum_{i\in \mathcal{N}^{-}(N)}\nu(i,N)\rvec_i}.
\end{equation}
Then the following holds.
\begin{theorem}\label{thm_timesgeneral} 
	Let $W$ be as in Theorem \ref{thm_Zk2}, then with probability one
	\begin{equation}\label{eqn_thmtime1}
	\lim_{\nu^*\rightarrow 0}\pr(\growthone(T-\mu)>t|(\mathcal{Z}(t))_{t\geq 0}))=\exp\left(-e^{t}W\growthone/\birthone\right). 
	\end{equation}
	It follows that
	\begin{equation}
	\lim_{\nu^*\rightarrow 0}\pr(\growthone(T-\mu)>t)=\left(\frac{\growthone/\birthone}{1+e^t}+\deathone/\birthone\right)^z
	\end{equation}
	
\end{theorem}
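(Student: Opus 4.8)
The plan is to first condition on the entire trajectory $(\mathcal{Z}(u))_{u\geq 0}$ of the initial and intermediate populations, and exploit that vertex $N$ is a sink. A founding transition into the target, $(i)\to (i),(N)$, leaves $Z_i$ unchanged, so the target-seeding events do not feed back into $\mathcal{Z}$. Hence, conditional on the trajectory, these founding events form an inhomogeneous Poisson process with intensity $\Lambda(u)=\sum_{i\in\mathcal{N}^{-}(N)}\nu(i,N)Z_i(u)$, giving the exact identity
\begin{equation}
\pr(T>s\mid (\mathcal{Z}(u))_{u\geq 0})=\exp\left(-\int_0^s\Lambda(u)\,du\right).
\end{equation}
Setting $s=\mu+t/\growthone$ reduces the theorem to computing the limit of the cumulative intensity $\int_0^s\Lambda(u)\,du$ as $\nu^*\to 0$. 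Note that $W$, $\rvec_i$, and the whole trajectory $\mathcal{Z}$ are independent of the target seeding rates (which do not appear in the matrix $A$ defining $\rvec$), so the only $\nu^*$-dependence sits in the explicit prefactors $\nu(i,N)$.

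Next I would analyse this integral using Theorem \ref{thm_Zk2}, which gives $e^{-\growthone u}Z_i(u)\to W\rvec_i$ almost surely. The key point is that as $\nu^*\to 0$ the centring $\mu$ in \eqref{def_mugeneral} diverges, so $s\to\infty$ and the integral is dominated by large $u$ where $Z_i(u)\approx W\rvec_i e^{\growthone u}$. Writing $Z_i(u)=e^{\growthone u}(W\rvec_i+\varepsilon_i(u))$ with $\varepsilon_i(u)\to 0$, the leading term is $\sum_i\nu(i,N)W\rvec_i(e^{\growthone s}-1)/\growthone$. Since $e^{\growthone s}=e^{\growthone\mu}e^{t}=\frac{\growthone^2}{\birthone\sum_j\nu(j,N)\rvec_j}e^{t}$ by the definition of $\mu$, the sum $\sum_j\nu(j,N)\rvec_j$ cancels and this leading term collapses exactly to $e^{t}W\growthone/\birthone$. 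The remaining pieces carry a prefactor $\nu(i,N)\to 0$ and vanish; this yields $\int_0^s\Lambda(u)\,du\to e^{t}W\growthone/\birthone$ almost surely, and hence the conditional limit \eqref{eqn_thmtime1}.

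For the unconditional statement I would integrate the conditional limit over the law of $W$. Since the conditional probabilities are bounded by $1$, bounded convergence gives $\lim_{\nu^*\to 0}\pr(\growthone(T-\mu)>t)=\ep[\exp(-e^{t}W\growthone/\birthone)]$. It remains to evaluate this Laplace transform for the binomial-Erlang law of $W$ from Lemma \ref{lemma_limBD}. Each of the $K$ exponential summands has parameter $\growthone/\birthone$, so at the argument $e^{t}\growthone/\birthone$ its Laplace transform is $1/(1+e^{t})$; conditioning on $K$ and using the probability generating function of the binomial$(z,\growthone/\birthone)$ law evaluated at $1/(1+e^{t})$ gives
\begin{equation}
\ep\left[\exp(-e^{t}W\growthone/\birthone)\right]=\left(1-\frac{\growthone}{\birthone}+\frac{\growthone/\birthone}{1+e^{t}}\right)^z=\left(\frac{\growthone/\birthone}{1+e^{t}}+\frac{\deathone}{\birthone}\right)^z,
\end{equation}
where I used $1-\growthone/\birthone=\deathone/\birthone$. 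This is exactly the claimed expression, and it correctly produces the atom $(\deathone/\birthone)^z=\pr(T=\infty)$ as $t\to\infty$, consistent with \eqref{eqn_proboccurs}.

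The main obstacle is the second step: justifying the interchange of the $\nu^*\to 0$ limit with the large-$u$ asymptotics, given that the upper limit $s=\mu+t/\growthone$ itself diverges as $\nu^*\to 0$. This I would handle by splitting $\int_0^s$ at a fixed time $U$. The contribution of $[0,U]$ is a finite random quantity annihilated by the prefactor $\nu(i,N)\to 0$, while on $[U,s]$ one bounds $|\varepsilon_i(u)|<\delta$ and uses the uniform control $\nu(i,N)e^{\growthone s}=\frac{\growthone^2 e^{t}}{\birthone}\frac{\nu(i,N)}{\sum_j\nu(j,N)\rvec_j}\leq \frac{\growthone^2 e^{t}}{\birthone\rvec_i}$, which stays bounded because the denominator dominates its own $i$-th term. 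Letting $U\to\infty$ (so $\delta\to 0$) after $\nu^*\to 0$ closes the estimate and delivers the almost-sure convergence of the cumulative intensity.
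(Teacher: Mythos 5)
Your proposal is correct and takes essentially the same route as the paper's proof: both condition on the trajectory $(\mathcal{Z}(u))_{u\geq 0}$ to represent $T$ as the first arrival of an inhomogeneous Poisson process with intensity $\sum_{i\in\mathcal{N}^{-}(N)}\nu(i,N)Z_i(u)$, both invoke Theorem \ref{thm_Zk2} (together with the independence of the trajectory from the target seeding rates) to extract the almost-sure limit $e^{t}W\growthone/\birthone$ of the cumulative intensity, and both finish the unconditional statement with the same binomial-Erlang Laplace transform computation. The only difference is technical bookkeeping: where you split the integral at a fixed time $U$ and use the uniform bound $\nu(i,N)e^{\growthone s}\leq \growthone^2 e^{t}/(\birthone\rvec_i)$ before letting $U\to\infty$, the paper changes variables by the centring term and applies dominated convergence with dominating function $2We^{\growthone u}$ on $(-\infty,t]$ — your truncation argument is, if anything, slightly more careful about the early-time contribution.
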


\begin{proof}
	Let 
	$$
	\sigma = \sum_{i\in \mathcal{N}^{-}(N)}\nu(i,N)\rvec_i.
	$$
	Then, as for given $(\mathcal{Z}(t))_{t\geq 0}$ immigration into vertex $N$ occurs as a non-homogeneous Poisson process,
	\begin{equation}\label{eqn_condtimegen}
	\pr(T-\growthone^{-1}\log(\sigma^{-1})> t|(\mathcal{Z}(t))_{t\geq 0}) =\exp\left(-\int_{0}^{t+\growthone^{-1}\log(\sigma^{-1})}\sum_{i\in \mathcal{N}^{-}(N)}\nu(i,N)Z_{i}(s)\,ds\right).
	\end{equation}
	The variable change $u=s -\growthone^{-1}\log(\sigma^{-1})$ yields
	\begin{equation}\label{eqn_int1}
	\exp\left(-\int_{-\growthone^{-1}\log(\sigma^{-1})}^{t}\sum_{i\in \mathcal{N}^{-}(N)}\nu(i,N)Z_{i}(u+\growthone^{-1}\log(\sigma^{-1}))\,du\right).
	\end{equation}
	Now observe that for any $i\in \mathcal{N}^{-}(N)$, by multiplying by $1=\frac{e^{\growthone(u+\growthone^{-1}\log(\sigma^{-1}))}}{e^{\growthone(u+\growthone^{-1}\log(\sigma^{-1}))}}$,
	\begin{equation}\label{eqn_Zrepgen1}
	\nu(i,N)Z_{i}(u+\growthone^{-1}\log(\sigma^{-1}))=\frac{Z_{i}(u+\growthone^{-1}\log(\sigma^{-1}))}{e^{\growthone(u+\growthone^{-1}\log(\sigma^{-1}))}}\frac{\nu(i,N) e^{\growthone u}}{\sigma}.
	\end{equation}
	Consider the limit of \eqref{eqn_Zrepgen1} as $\nu^*\rightarrow 0$, which results in $\log(\sigma^{-1})\rightarrow \infty$. As the $(Z_i(t))_{i\in \mathcal{N}^{-}(N)}$ are independent of $(\nu(i,N))_{i\in \mathcal{N}^{-}(N)}$, \mic{Theorem \ref{thm_Zk2} informs us that, almost surely,
	\begin{equation}\label{eqn_convsum}
	\lim_{\nu^*\rightarrow 0}\frac{Z_{i}(u+\growthone^{-1}\log(\sigma^{-1}))}{e^{\growthone(u+\growthone^{-1}\log(\sigma^{-1}))}}=We^{\growthone u}\rvec_i
	\end{equation}}
	Upon considering $(\sigma/\nu(i,N))^{-1}$, we see $\nu(i,N)/\sigma<\infty$ for each $i$. Hence taking limits across \eqref{eqn_Zrepgen1} yields, with probability one,
	\begin{align}
	\lim_{\nu^*\rightarrow 0}\sum_{i\in \mathcal{N}^{-}(N)}\frac{Z_{i}(u+\growthone^{-1}\log(\sigma^{-1}))}{e^{\growthone(u+\growthone^{-1}\log(\sigma^{-1}))}}\frac{\nu(i,N) e^{\growthone u}}{\sigma}
	&=We^{\growthone u}\lim_{\nu^*\rightarrow 0}\sum_{i\in \mathcal{N}^{-}(N)} \frac{\nu(i,N)\rvec_i}{\sigma}.
	\end{align}
	By the definition of $\sigma$ this last sum is one and hence
	\begin{equation}\label{eqn_vZlim}
	\lim_{\nu^*\rightarrow 0}\sum_{i\in \mathcal{N}^{-}(N)}\nu(i,N)Z_{i}(u+\growthone^{-1}\log(\sigma^{-1}))=We^{\growthone u} \quad a.s.
	\end{equation}
	We seek to apply the dominated convergence theorem to the integral in \eqref{eqn_int1}. \mic{The limit demonstrated in \eqref{eqn_vZlim}} implies that for any realisation, we may find $x$ such that for $\nu^* \leq x$
	\begin{equation}\label{eqn_Zbnd}
	\sum_{i\in \mathcal{N}^{-}(N)}\nu(i,N)Z_{i}(u+\growthone^{-1}\log(\sigma^{-1}))\leq 2 We^{\growthone u}
	\end{equation}
	which is integrable over $(-\infty,t]$. If $\nu^*>x$ then $\sigma>0$. Thus the interval $[-\lambda^{-1}\log(\sigma^{-1}),t]$ is finite. As each $Z_i(t)$ is cadlag, it is bounded over finite intervals. Hence for all $(\nu^*,u)$ the integrand is dominated by an integrable function. Therefore, taking the limit of \eqref{eqn_int1} and using \eqref{eqn_vZlim} leads to, with probability one,
	\begin{align}
	\lim_{\nu^* \rightarrow 0}	\pr(T-\growthone^{-1}\log(\sigma^{-1})> t|(\mathcal{Z}(t))_{t\geq 0})&= \exp\left(-W e^{\growthone t}/\growthone\right).
	\end{align}
	Using that
	$$
	\mu = \growthone^{-1}\log(\growthone^2/\birthone)+\growthone^{-1}\log(\sigma^{-1})
	$$
	so that we centre appropriately, leads to the first stated result.
	
	For the second statement, we must derive an expression for 
	$$
	\ep\left[	\exp\left(-e^{t}W\growthone/\birthone\right)\right]
	$$
	From Lemma \ref{lemma_limBD}
	\begin{equation}
	W\growthone/\birthone\equald\sum_{i=1}^{z}\chi_{i} \xi_{i}'
	\end{equation}
	where $\chi_{i}\sim \mbox{Bernoulli}(\growthone/\birthone)$, $\xi'_{i}\sim \mbox{Exponential}(1)$ and are independent.  Thus 
	\begin{align}\label{eqn_binom}
	\ep\left[	\exp\left(-e^{t}W\growthone/\birthone\right)\right]=\left(\ep\left[\exp\left(-e^{t}\chi_{1}\xi'_1\right)\right]\right)^z.
	\end{align}	
	Using a one-step conditioning argument on $\chi_{1}$ and considering the Laplace transform for a unit rate exponential variable leads to the stated result.
\end{proof}

\mic{Before proceeding, we comment on a limitation of our proof approach. A key point in the proof of Theorem \ref{thm_timesgeneral} is that the stochastic process $(Z_i(t))_{1\leq i\leq  N-1,t\geq 0}$ is independent of the target seeding transition rates $(\nu(i,N))_{i\in \mathcal{N}^{-}(N)}$. This permits the limit displayed in \eqref{eqn_convsum} to hold. If we considered a limit where transition rates associated with non-target adjacent edges also tended to 0, the limit \eqref{eqn_convsum} would not hold. This is as Theorem \ref{thm_Zk2} is true for fixed $(\alpha(x))_{x=1}^{N-1},\ (\beta(x))_{x=1}^{N-1}$ and transition rates associated with non-target adjacent edges (the parameters controlling the growth of the populations at vertices $1,\ldots, N-1)$. Our inability to alter these parameters is also the reason, that in a precise sense, our results do not cover the formulation of transitions where type $x$ divide at rate $\alpha'(x)$, and then with probability $\nu'(x,y)$ a transition occurs to vertex $y$. In that formulation, taking the limit when $\nu'(x,N)$ tends to 0, for $x\in \mathcal{N}^{-}(N)$, alters the division rate of cells at vertices in $\mathcal{N}^{-}(N)$, again forbidding the use of Theorem \ref{thm_Zk2}. We hope to improve upon these limitations in future work. }

The first statement in Theorem \eqref{thm_timesgeneral} will be key to understanding the path distribution, discussed in Section \ref{sec_results_path}, which justifies its prominence. \mic{As it implies that, with probability one,
$$
\lim_{\nu^*\rightarrow 0}\pr(\exp\left[\growthone(T-\mu)\right]>t|(\mathcal{Z}(t))_{t\geq 0}))=\exp\left(-t W\growthone/\birthone\right), 
$$
the first statement in Theorem \eqref{thm_timesgeneral}  can be interpreted as indicating that  if we consider the conditional distribution of $T$ around $\mu$, then rescale time, an exponential distribution appears.}

As discussed in Section \ref{sec_times}, it is natural to consider the distribution of $T$ conditioned that it is finite (we cannot discuss which path populated the target vertex if $T=\infty$). However it is technically more convenient to condition on $\survone$, defined in \eqref{def_surv}. The following proposition states that in many relevant cases, namely large initial population, low death rate or small transition rates leaving vertex 1. these events are similar. Below, the superscript $c$ denotes the complement of the set.

\begin{prop}\label{prop_setdif}
	\begin{align}
	&\pr(\{T<\infty\}^c \cap \survone)=0.\\
	&\pr(\{T<\infty\}\cap S_1^{c})\leq
	\begin{cases}
	O((\deathone/\birthone)^z), \quad &\deathone \rightarrow 0
	\\
	O((\deathone/\birthone)^z), \quad &z\rightarrow \infty
	\\
	O(\nu_1), \quad &\nu_1 \rightarrow 0
	\end{cases}
	\end{align}
	where $\nu_1 = \sum_{x\in \mathcal{N}^{+}(1)}\nu(1,x)$.

\end{prop}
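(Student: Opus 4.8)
The plan is to treat the two displayed statements separately, and to further split the second into the ``extinction'' regimes ($\deathone\to0$, $z\to\infty$) and the ``rare transition'' regime ($\nu_1\to0$).

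For the first identity I would condition on the whole trajectory $(\mathcal{Z}(t))_{t\geq 0}$. Given this trajectory the founding of the target is an independent non-homogeneous Poisson process with intensity $\sum_{i\in\mathcal{N}^{-}(N)}\nu(i,N)Z_i(s)$, exactly as in the proof of Theorem \ref{thm_timesgeneral}, so that
$$
\pr(T=\infty\mid(\mathcal{Z}(t))_{t\geq 0})=\exp\Big(-\int_0^\infty\sum_{i\in\mathcal{N}^{-}(N)}\nu(i,N)Z_i(s)\,ds\Big).
$$
On $\survone$ we have $W>0$, and Theorem \ref{thm_Zk2} gives $e^{-\growthone s}Z_i(s)\to W\rvec_i$ with $\rvec_i>0$ for every target-adjacent vertex (the set $\mathcal{N}^{-}(N)$ being nonempty since the target is reachable from every vertex). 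Hence the integrand grows like $e^{\growthone s}$, the integral diverges, and the conditional probability of $\{T=\infty\}$ is $0$ on $\survone$. Taking expectations over $\survone$ then gives $\pr(\{T<\infty\}^c\cap\survone)=0$.

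For the second inequality the two extinction regimes are immediate: since $\{T<\infty\}\cap\survone^c\subseteq\survone^c$, the crude bound $\pr(\{T<\infty\}\cap\survone^c)\leq\pr(\survone^c)=(\deathone/\birthone)^z$ from \eqref{eqn_survoneprob} already yields $O((\deathone/\birthone)^z)$ both as $\deathone\to0$ and as $z\to\infty$. The rare-transition regime needs a separate argument. Because the root is a source and $N\neq1$, for the target ever to be populated at least one transition must leave vertex $1$ over the whole history, so $\{T<\infty\}\subseteq\{\text{at least one transition out of vertex }1\}$. As $Z_1$ evolves as a birth--death process that is unaffected by the transitions, conditional on $(Z_1(s))_{s\geq0}$ the number of transitions out of vertex $1$ is Poisson with mean $\nu_1 I$, where $I=\int_0^\infty Z_1(s)\,ds$. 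Using $1-e^{-x}\leq x$ and that $\survone^c$ is $(Z_1(s))_s$-measurable, this gives
$$
\pr(\{T<\infty\}\cap\survone^c)\leq\ep\big[\mathbf 1_{\survone^c}(1-e^{-\nu_1 I})\big]\leq\nu_1\,\ep\big[\mathbf 1_{\survone^c}I\big].
$$

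It remains to check that $\ep[\mathbf 1_{\survone^c}I]$ is finite (with all parameters other than $\nu_1$ held fixed), so that the right-hand side is $O(\nu_1)$. This is the one point requiring care, and I expect it to be the main obstacle: the indicator cannot simply be dropped, since on $\survone$ the process grows supercritically and $\ep[I]=\infty$. The clean route is the classical duality that a supercritical birth--death process conditioned on extinction is a subcritical birth--death process with the birth and death rates interchanged. Applying this lineage by lineage to the $z$ founding cells (each lineage dying with probability $\deathone/\birthone$, and each extinct lineage having expected integrated size $1/\growthone$ under the dual subcritical dynamics) yields $\ep[\mathbf 1_{\survone^c}I]=(\deathone/\birthone)^z\,z/\growthone<\infty$, which completes the $O(\nu_1)$ bound. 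Everything else reduces to Poisson conditioning together with the growth asymptotics of Theorem \ref{thm_Zk2}.
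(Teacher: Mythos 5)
Your proposal is correct. For the first identity your argument coincides with the paper's own proof: both condition on the trajectory, view the seeding of the target as a Cox process with intensity $\sum_{i\in\mathcal{N}^{-}(N)}\nu(i,N)Z_i(s)$, and use Theorem \ref{thm_Zk2} together with positivity of the entries of $\rvec$ to force divergence of the integral on $\survone$.

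For the second inequality you take a genuinely different route. The paper makes the same initial reduction you do, namely $\{T<\infty\}\subseteq\{\tau<\infty\}$ with $\tau$ the first transition out of vertex $1$, but then computes $\pr(\{\tau<\infty\}\cap \survone^{c})$ \emph{exactly}: it treats $(Z_1,Z^*)$ as a two-type branching process in which transitioned cells simply accumulate, writes the backward Kolmogorov equation for the generating function, solves it by separation of variables, and obtains $\pr(\{\tau<\infty\}\cap \survone^{c})=q_1^z-r_2^z$ with $q_1=\deathone/\birthone$ and $r_2$ an explicit root of a quadratic; all three asymptotic regimes are then read off from this single formula. You instead dispose of the two extinction regimes with the trivial bound $\pr(\survone^{c})=(\deathone/\birthone)^z$ from \eqref{eqn_survoneprob} --- which indeed suffices, since only an $O((\deathone/\birthone)^z)$ bound is claimed --- and handle the $\nu_1\rightarrow 0$ regime by Cox-process conditioning (using $1-e^{-\nu_1 I}\leq\nu_1 I$) together with the classical duality that a supercritical birth--death process conditioned on extinction is the subcritical process with birth and death rates swapped, yielding $\ep[\mathbb{I}_{\survone^{c}}I]=z(\deathone/\birthone)^z/\growthone<\infty$. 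The steps you flag as delicate are sound: $Z_1$ is autonomous because transitions have the form $(x)\rightarrow(x),(y)$ and so do not deplete vertex $1$; $\survone^{c}$ is measurable with respect to the trajectory of $Z_1$; and the lineage-by-lineage factorisation is justified by independence of the $z$ founding lineages. What the paper's route buys is an exact finite-$\nu_1$ expression that covers all regimes simultaneously; what yours buys is the avoidance of solving a Riccati equation, plus an explicit constant that is in fact sharp --- expanding the paper's $q_1^z-r_2^z$ to first order in $\nu_1$ gives exactly $\nu_1 z q_1^z/\growthone$, matching your bound to leading order.
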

\begin{proof}As $\pr(\{T<\infty\}^c \cap \survone)=\pr(T=\infty|\survone)\pr(\survone)$ and $\pr(\survone)>0$ we initially demonstrate 
	\begin{equation}\label{eqn_TandOmega}
	\pr(T<\infty|S_1)=1.
	\end{equation}
	Firstly recall that with $W$ as in Lemma \ref{lemma_limBD}, up to null events,
	$
	\{W>0\} = S_1.
	$
	Now $T$ is the first arrival in a Cox process with intensity $\sum_{x \in \mathcal{N}^{-}(N)}\nu(x,N)Z_{x}(s)$, and hence $T<\infty$ with probability 1 on $S_1$ if 
	\begin{equation}\label{eqn_intdiverg}
	\int_{0}^{\infty}\sum_{x \in \mathcal{N}^{-}(N)}\nu_{x,N}Z_{x}(s)ds=\infty 
	\end{equation}
	with probability 1 on $S_1$. Observe that \eqref{eqn_intdiverg} is true as, by Theorem \ref{thm_Zk2}, for any realisation we may find $t_1\geq 0$ such that
	$$
	\int_{t_1}^{\infty}\sum_{x \in \mathcal{N}^{-}(N)}\nu(x,N)Z_{x}(s)ds\geq 2^{-1}\int_{t_1}^{\infty}\sum_{x \in \mathcal{N}^{-}(N)}\nu(x,N)W \rvec_x e^{\growthone s} ds.
	$$
	As each $\rvec_x>0$, due to each entry of $\tilde v$ also being positive \cite{Janson:2004}, the right hand side of the above inequality is infinite a.s. on $S_1$ which gives \eqref{eqn_TandOmega}.  It remains to bound $\pr(\{T<\infty\}\cap S_1^{c})$.
	
	Let us introduce $Z^* (t) = \sum_{i\in \mathcal{N}^{+}(1)}Z_{i}(t)$ and 
	$$
	\tau = \inf\{t\geq 0 : Z^*(t)>0\}.
	$$
	Note that we cannot hope to populate the target vertex without $Z^* (t)$ becoming positive. Therefore, $\{T<\infty\}\subseteq \{\tau<\infty\}$ which implies
	\begin{equation}\label{eqn_Tupper}
	\pr(\{T<\infty\}\cap S_1^{c})\leq \pr(\{\tau<\infty\}\cap S_1^{c}).
	\end{equation}
	Hence our interest turns to the right hand side of the preceding inequality. The distribution of $\tau$ is invariant to the growth at vertices $i\in \mathcal{N}^{+}(1)$, and so we let $\alpha(i)=\beta(i)=0$ for such $i$. The process $(Z_1(t),Z^*(t))$ is a two-type branching process where cells of the second type (that contribute to $Z^*(t)$) simply accumulate. As
	\begin{align}\label{eqn_symdif2}
	\pr(\{\tau<\infty\}\cap S_1^{c})&=(1-\pr(\{\tau=\infty\}| S_1^{c}))\pr( S_1^{c})\\
	&=\pr( S_1^{c})-\lim_{t\rightarrow \infty}\pr(Z_1(t)=0,Z^*(t)=0),
	\end{align}
	and from \eqref{eqn_survoneprob} we know $\pr(S_1^{c})=\beta/\alpha$, we now solve for the joint distribution of $(Z_1(t),Z^*(t))$.
	
	\mic{For now assume $z=1$}. We introduce the generating functions
	\begin{align}
	G_1(x,y,t)&=\ep[x^{Z_1(t)}y^{Z^*(t)}|(Z_1(0),Z^*(0))=(1,0)]
	\\
	G^*(x,y,t)&=\ep[x^{Z_1(t)}y^{Z^*(t)}|(Z_1(0),Z^*(0))=(0,1)].
	\end{align}
	Immediately $G^*(x,y,t)=y$ for all $t$. The backward Kolmogorov equation for $G_1$ is
	\begin{align}
	\frac{\partial}{\partial t} G_1 = \birthone G_1^2 +\deathone+\nu_1 G_1 G^*-(\birthone+\deathone+\nu_1)G_1.
	\end{align}
	By solving this equation (separation of variables with a partial fraction decomposition), and using the initial condition we find
	$$
	\pr(Z_1(t)=0,Z^*(t)=0)=G_1(0,0,t)=\frac{r_2(1-e^{\mic{\birthone}(r_2-r_1)t})}{1-r_2 e^{\mic{\birthone}(r_2-r_1)t}/r_1}.
	$$
	With 
	\begin{align*}
	r_{1}=\frac{1}{2}\left(1+q_{1}+\nu_1'+\sqrt{(1+q_{1}+\nu_1')^2-4 q_{1}  }\right)
	\\
	r_{2}=\frac{1}{2}\left(1+q_{1}+\nu_1'-\sqrt{(1+q_{1}+\nu_1')^2-4 q_{1}  }\right)
	\end{align*}
	where $q_{1}=\deathone/\birthone$ and $\nu_1'=\nu_1/\birthone$. Observe that $(1+q_{1}+\nu_1')^2-4 q_{1}>(1-q_1)^2$ which implies $r_2-r_1<0$. Hence
	$$
	\lim_{t\rightarrow \infty}\pr(Z_1(t)=0,Z^*(t)=0)=r_2.
	$$
	We now have expressions for the final terms in \eqref{eqn_symdif2} for $z=1$. For $z\geq 1$, we can use independence, and this leads to 
	$$
	\pr(\{\tau<\infty\}\cap S_1^{c})=q_1^z-r_2^z.
	$$

	By examining the leading order with respect to the relevant variables we can conclude.

\end{proof}

We remark that if we instead wish to look at the case when the vertex $N$ population arises and does not go extinct then $\pr(\mbox{All populations go extinct})=\pr(\tilde T=\infty)$ \mic{where $\tilde T$ is the first time the target is founded by a cell whose progeny does not go extinct. Recall that our results for $T$ hold identically for $\tilde T$ so long as the mapping $\nu(x,N)\mapsto \nu(x,N)\lambda(x)/\alpha(x)$ for all $x\in \mathcal{N}^{-}(N)$ is applied} . The probability of population extinction is known to be given by an implicit equation involving the offspring distribution's generating function \cite{Athreya:2004}. This was the approach taken by \cite{Iwasa:2004}. The approximate solution for small transition rates was derived and agrees with the result presented above \mic{(see Eq. A.4a in Appendix A of \cite{Iwasa:2004})}.

The above proposition yields $\pr(T<\infty)\approx\pr(\survone)$. Using this with \eqref{eqn_survoneprob} gives \eqref{eqn_proboccurs} in the main text. We now give consider the distribution of $T$ given $S_1$ (which we use as proxy for conditioning on a finite hitting time).

\begin{corollary}\label{cor_condtime} With the notation of Theorem \ref{thm_timesgeneral},
	\begin{equation}
	\lim_{\nu^*\rightarrow 0}\pr(\growthone(T-\mu)>t|\survone)=\frac{\left[\growthone/\birthone (1+e^t)^{-1}+\deathone/\birthone\right]^z-(\deathone/\birthone)^z}{1-(\deathone/\birthone)^z}
	\end{equation}
\end{corollary}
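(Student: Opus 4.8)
The plan is to obtain the conditional limit by integrating the pathwise limit (conditional on the full trajectory) already established in the first statement of Theorem \ref{thm_timesgeneral}, exploiting the fact that the survival event $\survone$ is measurable with respect to the trajectory $(\mathcal{Z}(t))_{t\geq 0}$. Recall from the proof of Proposition \ref{prop_setdif} that, up to null events, $\survone=\{W>0\}$, where $W$ is the almost sure limit of $e^{-\growthone t}Z_1(t)$ from Lemma \ref{lemma_limBD}; in particular $\survone$ is a function of the trajectory, and $\pr(\survone)=1-(\deathone/\birthone)^z$ does not depend on $\nu^*$, since $Z_1$ evolves as a pure birth--death process unaffected by the transition rates.

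First I would write the joint probability by conditioning on the trajectory,
\begin{equation}
\pr(\growthone(T-\mu)>t,\,\survone)=\ep\!\left[\mathbf{1}_{\survone}\,\pr\!\left(\growthone(T-\mu)>t\mid (\mathcal{Z}(t))_{t\geq 0}\right)\right].
\end{equation}
The integrand is bounded by $1$ and, by the first statement of Theorem \ref{thm_timesgeneral}, converges almost surely as $\nu^*\to 0$ to $\mathbf{1}_{\survone}\exp(-e^{t}W\growthone/\birthone)$. Hence by the bounded convergence theorem,
\begin{equation}
\lim_{\nu^*\to 0}\pr(\growthone(T-\mu)>t,\,\survone)=\ep\!\left[\mathbf{1}_{\survone}\exp(-e^{t}W\growthone/\birthone)\right].
\end{equation}

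The key step is then to evaluate this expectation by splitting on the sign of $W$. On $\survone^c=\{W=0\}$ the factor $\exp(-e^{t}W\growthone/\birthone)$ equals $1$, so
\begin{equation}
\ep\!\left[\mathbf{1}_{\survone}\exp(-e^{t}W\growthone/\birthone)\right]=\ep\!\left[\exp(-e^{t}W\growthone/\birthone)\right]-\pr(\survone^c).
\end{equation}
The first term on the right is precisely the quantity $\ep[\exp(-e^{t}W\growthone/\birthone)]$ evaluated in the proof of Theorem \ref{thm_timesgeneral}, equal to $\big(\tfrac{\growthone/\birthone}{1+e^{t}}+\deathone/\birthone\big)^z$ and independent of $\nu^*$, while $\pr(\survone^c)=(\deathone/\birthone)^z$. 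Dividing through by the constant $\pr(\survone)=1-(\deathone/\birthone)^z$ yields the claimed expression.

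I expect the only real subtlety to be the justification of the interchange of limit and expectation, but this is mild: the conditional survival probabilities are uniformly bounded by $1$, so bounded convergence applies immediately once the almost sure pathwise limit of Theorem \ref{thm_timesgeneral} is invoked. The one point to flag carefully is that the pathwise limit $\exp(-e^{t}W\growthone/\birthone)$ is valid on all of the sample space, including $\survone^c$ (where $W=0$ forces the limit to be $1$); this is exactly what makes the $W=0$ contribution collapse to $\pr(\survone^c)$ and legitimizes the subtraction above.
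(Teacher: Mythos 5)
Your proof is correct and follows essentially the same route as the paper: both start from the trajectory-conditional limit in Theorem \ref{thm_timesgeneral}, pass the $\nu^*\to 0$ limit through the expectation by bounded convergence, and then evaluate $\ep[\exp(-e^{t}W\growthone/\birthone)]$ restricted to $\survone=\{W>0\}$. Your subtraction of the $\{W=0\}$ contribution followed by division by $\pr(\survone)$ is exactly the algebra underlying the paper's computation of $\ep[(1+e^{t})^{-K}\mid K>0]$ via the binomial-Erlang representation, so the two arguments coincide.
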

\begin{proof}
	Starting from the first statement in Theorem \ref{thm_timesgeneral}, we only need apply $\ep[\cdot|S_1]$.  The distribution of $W\growthone/\birthone$ can be rewritten as
	$$
	W\growthone/\birthone\equald  \sum_{i=1}^{K}\xi'_i
	$$
	where $\xi'_i$ are rate one exponential variables, and $K\sim \mbox{Binomial}(z,\growthone/\birthone)$, independent of all $\xi'_i$. Conditioning on $\survone$ ensures that $K>0$.

	Thus we must calculate\mic{
	\begin{align}
	\ep\left[	\exp\left(-e^{t}W\growthone/\birthone\right)|\survone\right]=\ep\left[(1+e^t)^{-K}|K>0\right].
	\end{align}}
	The binomial theorem leads to the stated expression.
\end{proof}

%
%

We now prove a corollary regarding $\mu$ (\mic{defined in \eqref{def_mugeneral}}), which provides justification to \eqref{eqn_medtime}.

\begin{corollary}\label{cor_medpath}
	Let $t_{1/2}$ be the median time for the vertex N population to arise, conditioned on $\survone$. That is $t_{1/2}$ satisfies
	\begin{equation}\label{eqn_med1}
	\pr(T >t_{1/2}|S_1)=\frac{1}{2}.
	\end{equation}
	Then 
	\begin{equation}
	\lim_{\nu^*\rightarrow 0}|t_{1/2}-(\mu-h(z))|=0
	\end{equation}
	where 
	\begin{equation}
	h(z)=-\growthone^{-1} \log\left(\frac{\growthone/\birthone}{2^{-1/z}(1+(\deathone/\birthone)^{z})^{1/z}-\deathone/\birthone}-1 \right).
	\end{equation}
	This also implies $t_{1/2}\sim \mu-h(z)$. Regarding $h(z)$ we have
	\begin{align}
	h(1)=0,\quad h(z) &= \growthone^{-1}\log\left(\frac{z \growthone}{\birthone }\right)+O(z)^{-1},\quad z \rightarrow \infty.
	\end{align}
\end{corollary}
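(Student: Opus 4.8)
The plan is to read off the median from the limiting conditional law already established in Corollary \ref{cor_condtime} and then to convert the pointwise convergence of distribution functions there into convergence of the associated medians. Write
$$G(t)=\frac{\left[\growthone/\birthone\,(1+e^t)^{-1}+\deathone/\birthone\right]^z-(\deathone/\birthone)^z}{1-(\deathone/\birthone)^z}$$
for the limit in Corollary \ref{cor_condtime}, so that $G_{\nu^*}(t):=\pr(\growthone(T-\mu)>t\mid\survone)\to G(t)$ pointwise as $\nu^*\to0$, where $\mu$ is as in \eqref{def_mugeneral}. Since the median commutes with the increasing affine map $t\mapsto\growthone(t-\mu)$, the quantity $\growthone(t_{1/2}-\mu)$ is exactly the median of $\growthone(T-\mu)$ given $\survone$; thus it suffices to show this median converges to the unique median $s^*$ of $G$, for then $\growthone(t_{1/2}-\mu)\to s^*$, i.e. $|t_{1/2}-(\mu+s^*/\growthone)|\to0$, and we will identify $h(z)=-s^*/\growthone$.

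First I would solve $G(s^*)=1/2$ in closed form. Using $\growthone+\deathone=\birthone$ one checks that $G$ is continuous with $G(t)\to1$ as $t\to-\infty$ and $G(t)\to0$ as $t\to+\infty$, and that $G$ is strictly decreasing, since $(1+e^t)^{-1}$ is strictly decreasing and the outer map $u\mapsto[(\growthone/\birthone)u+\deathone/\birthone]^z$ is strictly increasing; hence a unique median $s^*$ exists. Setting $G(s^*)=1/2$, rearranging and taking $z$-th roots gives
$$\frac{\growthone/\birthone}{1+e^{s^*}}+\frac{\deathone}{\birthone}=2^{-1/z}\bigl(1+(\deathone/\birthone)^z\bigr)^{1/z},$$
so that with $D=2^{-1/z}(1+(\deathone/\birthone)^z)^{1/z}-\deathone/\birthone$ we obtain $1+e^{s^*}=(\growthone/\birthone)/D$ and hence $s^*=\log\bigl((\growthone/\birthone)/D-1\bigr)$. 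Therefore $h(z)=-s^*/\growthone$ is precisely the stated expression.

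The step requiring the most care is the passage from pointwise convergence of the $G_{\nu^*}$ to convergence of their medians, and the only subtlety is that the centring $\mu$ itself depends on $\nu^*$; this is harmless because Corollary \ref{cor_condtime} already states the convergence for the centred variable. I would then use the standard quantile argument: fix $\epsilon>0$; strict monotonicity and continuity of $G$ at $s^*$ give $G(s^*+\epsilon)<1/2<G(s^*-\epsilon)$, so by pointwise convergence $G_{\nu^*}(s^*+\epsilon)<1/2<G_{\nu^*}(s^*-\epsilon)$ for all small $\nu^*$, and since each $G_{\nu^*}$ is non-increasing its median lies in $[s^*-\epsilon,s^*+\epsilon]$. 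Letting $\epsilon\downarrow0$ yields $\growthone(t_{1/2}-\mu)\to s^*$, which is the claimed limit; that $t_{1/2}\sim\mu-h(z)$ then follows since $\mu\to\infty$ as $\nu^*\to0$.

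Finally, for the behaviour of $h(z)$: substituting $z=1$ gives $D=\tfrac12(\growthone/\birthone)$, whence $(\growthone/\birthone)/D-1=1$ and $h(1)=0$. For $z\to\infty$ I would Taylor expand, using $\deathone<\birthone$ so that $(\deathone/\birthone)^z\to0$ exponentially, together with $2^{-1/z}=1-(\log2)/z+O(1/z^2)$ and $(1+(\deathone/\birthone)^z)^{1/z}=1+o(1/z)$; this gives $D=\growthone/\birthone-(\log2)/z+o(1/z)$, hence $(\growthone/\birthone)/D-1=\birthone(\log2)/(\growthone z)+o(1/z)$, and taking logarithms isolates the dominant term $\growthone^{-1}\log(z\growthone/\birthone)$, confirming $h(z)\sim\growthone^{-1}\log(z\growthone/\birthone)$.
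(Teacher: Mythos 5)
Your proposal is correct, and its skeleton coincides with the paper's own proof: both read the limiting conditional law off Corollary \ref{cor_condtime}, solve $G(s^*)=1/2$ in closed form to identify $s^*=-\growthone h(z)$ (your algebra for $D$ and the check $h(1)=0$ match the paper's), and obtain the large-$z$ behaviour by expansion. The one step where you genuinely depart from the paper is the passage from convergence of the distribution functions to convergence of the median. The paper handles this by asserting that $\growthone(t_{1/2}-\mu)$ converges ``as both $t_{1/2},\,\mu$ are monotone increasing'' in $\nu^*$; taken literally this is incomplete, since a difference of two monotone quantities need not converge. Your quantile argument --- continuity and strict monotonicity of $G$ give $G(s^*+\epsilon)<1/2<G(s^*-\epsilon)$, pointwise convergence transfers these inequalities to $G_{\nu^*}$ for all small $\nu^*$, and monotonicity of each $G_{\nu^*}$ then traps its median in $[s^*-\epsilon,s^*+\epsilon]$ --- is the standard, airtight way to convert convergence of distribution functions into convergence of quantiles, so your write-up is tighter than the paper's at precisely its weakest point. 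One caveat on the asymptotics: your expansion (like the paper's, if carried to the constant term) actually yields $h(z)=\growthone^{-1}\log\bigl(z\growthone/(\birthone\log 2)\bigr)+o(1)$, i.e.\ there is a residual constant $-\growthone^{-1}\log\log 2$; this is immaterial for the ratio asymptotic $h(z)\sim\growthone^{-1}\log(z\growthone/\birthone)$, which is all you claim, but note that neither your argument nor the paper's justifies the sharper additive form $h(z)=\growthone^{-1}\log(z\growthone/\birthone)+O(z)^{-1}$ appearing in the corollary statement, since that constant is not $O(z^{-1})$.
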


\begin{proof}
	We firstly note, as is apparent from the cumulative distribution function of $T$ (see \eqref{eqn_condtimegen} for instance), that $t_{1/2}$ is monotone increasing as $\nu^*$ decreases. By \eqref{eqn_med1}
	\begin{equation}
	\frac{1}{2}=\pr(\growthone(T-\mu)>\growthone(t_{1/2}-\mu)|\survone).
	\end{equation}
	Taking the $\nu^* \rightarrow 0$ limit we find $\growthone(t_{1/2}-\mu)\rightarrow x$, where the convergence is guaranteed as both $t_{1/2},\,\mu$ are monotone increasing. Further, from Corollary \ref{cor_condtime}, the limit $x$ satisfies
	\begin{equation}
	\frac{[\growthone/\birthone (1+e^x)^{-1}+\deathone/\birthone]^z-(\deathone/\birthone)^z}{1-(\deathone/\birthone)^z}=\frac{1}{2}.
	\end{equation}
	Solving for $x$ leads to $x=-\growthone h(z)$ as stated. For the asymptotic result we firstly consider 
	\begin{align}
	d(z) &=\frac{\growthone/\birthone}{2^{-1/z}(1+(\deathone/\birthone)^{z})^{1/z}-\deathone/\birthone}-2^{\frac{\birthone}{\growthone z}}
	\end{align}
	Expansions on the numerator and denominator show that $d(z)$ is $O(z^{-2})$. Then 
	\begin{align}
	-\growthone h(z) &= \log(2^{\frac{\birthone}{\growthone z}}-1+d(z))
	\\
	&=\log(2^{\frac{\birthone}{\growthone z}}-1)+\frac{d(z)}{2^{\frac{\birthone}{\growthone z}}-1}+\ldots
	\end{align}
	As $(2^{\frac{\birthone}{\growthone z}}-1)^{-1}=O(z)$ we have shown the claimed asymptotic behaviour.
	
\end{proof}

\subsection{Path graph and acyclic graphs}\label{sec_SIacyclic}
In this section we specialise the results of Sections \ref{sec_SIgrowth} and \ref{sec_SItimes} firstly to the case of a path graph and then to acyclic graphs.

We first consider a path graph with $N$ vertices and let $\tilde p_{1:x}=(1,2,\ldots x)$. 
In this setting the matrix $A$ from Section \ref{sec_SIgrowth} has entries
\mic{
\begin{equation}a_{ij}=
\begin{cases}
\lambda(i)\quad j=i
\\
\nu(j,i)\quad j=i-1
\\
0\quad \mbox{otherwise},
\end{cases}
\end{equation}
}
and the vector $\psi$ has entries $\psi_{i}=\alpha(i)+\beta(i)+\nu(i,i+1)$ for $i=1,\ldots, N-1$. 
We want a useful representation of the sequence $(\rvec_x)_{x=1}^{N-1}$, which we recall is given in terms of the entries of the suitably normalised left and right eigenvectors of $A$ (see \eqref{cond_evect_norm}). Because they are easier to work with (i.e. check they indeed are eigenvectors) we first introduce the unnormalised left and right vectors $u',\,v'$ with entries\tr{
\begin{equation}
u'_{i} =\begin{cases}
1,\quad &i=1
\\
0 \quad & 2\leq i\leq N-1
\end{cases}
\hspace{1.5cm}
v'_{i}=\prod_{j=2}^{i}\frac{\nu(j-1,j)}{\growthone-\lambda(j)}, \quad  1\leq i \leq N-1
\end{equation}
}
where the empty product is again set equal to 1. \mic{Note for $2\leq x \leq N-1$ we have 
	$$
	v'_{x}=\frac{w(\tilde p_{1:x})}{\growthone-\lambda(x)},
	$$ }
\mic{here $w(\cdot)$ is again the path weight defined in \eqref{defSI_pathweight}} . It may be directly verified that these are left and right eigenvectors for $A$ with eigenvalue $\growthone$ (\mic{for $v'$ it is helpful to note that $A v'=\growthone v'$ implies $\nu(i-1,i)v_{i-1}' +\lambda(i) v'_i= \lambda v'_i$)}  . Now we normalise these, so that they agree with \eqref{cond_evect_norm} via
\begin{equation}
\tilde u=\frac{\psi\cdot v'}{v'_1}u',\quad \tilde v=\frac{1}{\psi\cdot v'}v'.
\end{equation}
\mic{Our aim is to obtain $\Phi = \tilde u_1 \tilde v$, and as both $u'_1=v'_1=1$, we see that $\Phi = v'$   }. Using this, the following corollary demonstrates how the results of the previous section on population growth and time to reach the target simplifies in the path graph setting.

\begin{corollary}\label{cor_pathgraph}
	The results of Sections \ref{sec_SIgrowth} and \ref{sec_SItimes} hold with the explicit representations \mic{
	\begin{align}
 \rvec_x=\begin{cases}
1 \quad &x=1
\\
\frac{w(\tilde p_{1:x})}{\lambda-\lambda(x)} \quad &2\leq x \leq N-1
 \end{cases}
	\end{align}}
	$$
	\nu^* = \nu(N-1,N),
	$$
	and
	\begin{equation}
	\mu=\growthone^{-1}\log\left[\growthone^2\left(\birthone \phi_N\right)^{-1}\right].
	\end{equation}
\end{corollary}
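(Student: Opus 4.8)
The plan is to assemble the eigenvector computation already set up immediately before the statement and then to verify the two remaining identities, for $\nu^*$ and $\mu$, by direct substitution into the general formulas of Sections \ref{sec_SIgrowth} and \ref{sec_SItimes}. First I would record that a path graph is acyclic, so the first sufficient condition in Lemma \ref{lem_sufcond} applies and guarantees that $\growthone$ is the largest real eigenvalue of $A$ and that it is simple. This is exactly the hypothesis under which Theorem \ref{thm_Zk2} and Theorem \ref{thm_timesgeneral} were established, so every result of those sections is available in the path graph case and it only remains to make $\rvec$, $\nu^*$ and $\mu$ explicit.

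For $\rvec$, the substantive step is to confirm that $u'$ and $v'$ are genuine left and right $\growthone$-eigenvectors of $A$. For $u'=\mathbf{e}_1^{\mathrm{T}}$ this is immediate, since the first row of the path-graph matrix $A$ has only the diagonal entry $a_{11}=\growthone$ (vertex $1$ is a source, so there is no subdiagonal entry), giving $u'A=\growthone u'$. For $v'$ the relation $(Av')_i=\nu(i-1,i)v'_{i-1}+\lambda(i)v'_i=\growthone v'_i$ for $2\le i\le N-1$ is precisely the recursion $v'_i=\tfrac{\nu(i-1,i)}{\growthone-\lambda(i)}v'_{i-1}$ satisfied by the telescoping product defining $v'$. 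Given this, the normalisation \eqref{cond_evect_norm} together with $u'_1=v'_1=1$ forces $\rvec=\tilde u_1\tilde v=v'$, and reading off the entries of $v'$ yields $\rvec_1=1$ and $\rvec_x=w(\tilde p_{1:x})/(\growthone-\lambda(x))$ for $2\le x\le N-1$, using the identification of $v'_x$ with a path weight noted above.

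The remaining two identities are short substitutions. Since the only edge into the target is $(N-1,N)$, we have $\mathcal{N}^{-}(N)=\{N-1\}$, so $\nu^*=\max_{i\in\mathcal{N}^{-}(N)}\nu(i,N)=\nu(N-1,N)$. For $\mu$ I would start from its general form \eqref{def_mugeneral}, in which the denominator sum collapses to the single term $\nu(N-1,N)\rvec_{N-1}$. Substituting $\rvec_{N-1}=w(\tilde p_{1:N-1})/(\growthone-\lambda(N-1))$ and the path-weight definition \eqref{defSI_pathweight}, this term telescopes: appending the final factor $\nu(N-1,N)/(\growthone-\lambda(N-1))$ to $w(\tilde p_{1:N-1})$ reconstructs the full weight $w(\tilde p_{1:N})=\phi_N$. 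Hence $\sum_{i\in\mathcal{N}^{-}(N)}\nu(i,N)\rvec_i=\phi_N$, and plugging this into \eqref{def_mugeneral} gives $\mu=\growthone^{-1}\log[\growthone^2(\birthone\phi_N)^{-1}]$, as claimed.

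I do not expect a serious obstacle: every step is a direct verification or substitution. The only point demanding care is the index bookkeeping in the telescoping that identifies $\nu(N-1,N)\rvec_{N-1}$ with $\phi_N$, and in particular checking that the empty-product conventions line up so that the formula remains correct in the degenerate case $N=2$, where $\rvec_{N-1}=\rvec_1=1$, the product over intermediate vertices is empty, and one reads off $\phi_N=\nu(1,2)$ directly rather than through the telescoping.
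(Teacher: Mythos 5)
Your proposal is correct and follows essentially the same route as the paper: verify that $u'=\mathbf{e}_1^{\mathrm{T}}$ and the telescoping product $v'$ are left and right $\growthone$-eigenvectors of the path-graph matrix $A$, use the normalisation \eqref{cond_evect_norm} with $u'_1=v'_1=1$ to conclude $\rvec=v'$, and then obtain $\nu^*$ and $\mu$ by substitution into \eqref{def_mugeneral}, with the identity $\nu(N-1,N)\rvec_{N-1}=w(\tilde p_{1:N})=\phi_N$ exactly as the paper notes after the corollary. Your added checks (invoking condition 1 of Lemma \ref{lem_sufcond} to license the general results, and the $N=2$ empty-product case) are sound and only make explicit what the paper leaves implicit.
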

\mic{The appearance of $\phi_{N}$ in the expression for $\mu$ in Corollary \eqref{cor_pathgraph} is due to $\phi_{N} =w(\tilde p_{1:N})= \nu(N-1,N)\rvec_{N-1}$} \mic{(to be compared with the general definition of $\mu$ \eqref{def_mugeneral}). For referencing, note the definition for the weight of a path $p$, $w(p)$, and total weight of $N$, $\phi_N$, are in \eqref{defSI_pathweight} and \eqref{defSI_totalweight}.} We now move to the case where $G$ is acyclic.


Recall the concept of vertex lineage from Section \ref{sec_results_path} in the main text \mic{(which for a chosen cell, is a sequence recording the vertices of cells in the ancestral lineage of our chosen cell)}. We denote the number of cells at time $t$ with vertex lineage $p$ as $X(p,t)$. The number of cells at a vertex is related to the cells with vertex lineages ending at that vertex via
\begin{equation}\label{eqn_ZfromXgen}
Z_x (t) = \sum_{p \in \mathcal{P}_{1,x}} X(p,t).
\end{equation}


For any path $p$, as above  we let the first $j$ terms in $p$ be denoted $p_{1:j}$. The process $(X(p_{1:j},t))_{j=1}^{\pl+1}$ is a multitype branching process, and $X(p_{1:j},t)$ represents the number of cells that have progressed to step $j$ along path $p$ at time $t$. Births occur to individuals of type $i$ at at rate $\alpha(p_i)$, deaths at rate $\beta(p_i)$ and individuals of type $i$ transition to type $i+1$ at rate $\nu(p_i,p_{i+1})$. Thus we may use our results for path graphs with vertices labelled $1,\ldots,\pl+1$. Hence via Theorem \ref{thm_Zk2} and Corollary \ref{cor_pathgraph} we have the following. 
\begin{corollary}\label{cor_Xlim}
	\begin{equation}
	\lim_{t\rightarrow \infty}e^{-\growthone t}X(p,t)=  W w(p) \quad a.s.
	\end{equation}
	where
	$w(p)$ is the path weight given in \eqref{defSI_pathweight} and $W$ is as in Lemma \ref{lemma_limBD}. 
\end{corollary}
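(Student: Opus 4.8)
The plan is to reduce the claim to the path-graph case already settled in Corollary \ref{cor_pathgraph} (equivalently Theorem \ref{thm_Zk2}). Fix a root-to-$x$ path $p=(p_1,\dots,p_{\pl+1})$. The key observation, recorded just above the statement, is that the vector $(X(p_{1:j},t))_{j=1}^{\pl+1}$ is itself a Markov multitype branching process of exactly path-graph type: the cells counted by $X(p_{1:j},t)$ divide at rate $\alpha(p_j)$, die at rate $\beta(p_j)$, and feed the next coordinate $X(p_{1:j+1},t)$ at rate $\nu(p_j,p_{j+1})$, while the whole cascade is seeded at its first coordinate by $X(p_{1:1},t)=Z_1(t)$. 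First I would make this identification rigorous. The only delicate point is the terminal coordinate $j=\pl+1$: a transition out of the endpoint vertex $x=p_{\pl+1}$ replaces the dividing cell by itself together with a cell carrying a strictly longer vertex lineage, and so it leaves $X(p,t)$ unchanged. Hence outgoing transitions at $x$ do not remove mass from $X(p,t)$, and the embedded chain is genuinely a path graph on the relabelled vertices $1,\dots,\pl+1$; if one wishes $x$ to be covered directly by Theorem \ref{thm_Zk2} it is harmless to append a phantom further vertex, since the added outgoing transition does not affect $X(p,t)$.

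With the embedding in place I would invoke Corollary \ref{cor_pathgraph}. Because every path graph is acyclic, condition (1) of Lemma \ref{lem_sufcond} is automatically satisfied, so the eigenvalue hypothesis underlying Theorem \ref{thm_Zk2} holds and $e^{-\growthone t}X(p_{1:j},t)$ converges almost surely. The decisive structural fact is that the limiting random variable is the \emph{same} $W$ for every coordinate, and indeed for every path $p$: the cascade is driven entirely by its first coordinate $X(p_{1:1},t)=Z_1(t)$, whose normalised limit is the $W$ of Lemma \ref{lemma_limBD}. This is exactly the mechanism in the proof of Theorem \ref{thm_Zk2}, where the a priori unknown $\hat{W}$ is pinned down as $\tilde{u}_1 W$ by the root population. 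It then remains only to identify the limiting constant at the terminal coordinate. Reading off the relevant entry of the limiting (right) eigenvector for the embedded path graph and telescoping the resulting product of transition rates over fitness costs — the same computation that produced the vector $v'$ and the relation $\phi_N=w(\tilde p_{1:N})$ in the path-graph analysis — yields the path weight $w(p)$ of \eqref{defSI_pathweight}, giving $\lim_{t\to\infty}e^{-\growthone t}X(p,t)=W\,w(p)$ almost surely.

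The main obstacle is the careful bookkeeping around the terminal coordinate and the precise constant. One must confirm that transitions leaving the final vertex cause no loss in $X(p,t)$ (so that the embedded object really is a path graph rather than a path graph with extra attrition), and then track exactly which fitness-cost factors $\growthone-\lambda(p_i)$ survive in the limiting eigenvector entry, so that the constant is the path weight as claimed. A useful internal consistency check I would run is to sum the resulting limits over all $p\in\mathcal{P}_{1,x}$ using \eqref{eqn_ZfromXgen} and verify agreement with the vertex-level limit $W\rvec_x$ supplied by Theorem \ref{thm_Zk2}. Once the identification of the constant with $w(p)$ and the sharing of the single driving variable $W$ across all paths are established, the statement is immediate from the almost sure convergence furnished by Corollary \ref{cor_pathgraph}.
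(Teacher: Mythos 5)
Your overall strategy --- realise that $(X(p_{1:j},t))_{j=1}^{\pl+1}$ is an autonomous path-graph branching process (using that transitions $(x)\rightarrow (x),(y)$ never deplete a tracked coordinate), append a phantom target so that the terminal coordinate is covered by Theorem \ref{thm_Zk2}, and read the constant off Corollary \ref{cor_pathgraph} --- is exactly the route the paper takes, and your structural observations (non-depletion at the endpoint, acyclicity giving condition (1) of Lemma \ref{lem_sufcond}, the single driving variable $W$ pinned down by $X(p_{1:1},t)=Z_1(t)$) are all correct. The gap is in the final identification, where you assert that the terminal eigenvector entry ``telescopes'' to $w(p)$. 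It does not. By the paper's own computation for path graphs, $v'_{x}=w(\tilde p_{1:x})/(\growthone-\lambda(x))$: the eigenvector entry at a vertex carries the fitness-cost factors of \emph{all} vertices $2,\dots,x$ inclusive, whereas the weight $w(p)$ in \eqref{defSI_pathweight} stops its product at the penultimate vertex. So what your argument, carried out honestly, delivers is
\begin{equation}
\lim_{t\rightarrow\infty}e^{-\growthone t}X(p,t)=\frac{W\,w(p)}{\growthone-\lambda(p_{\pl+1})}\quad a.s.,
\end{equation}
and it additionally requires $\lambda(p_{\pl+1})<\growthone$, which is automatic when $p$ ends at an intermediate vertex but not when $p$ ends at the target, whose fitness is unconstrained. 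The relation $\phi_N=w(\tilde p_{1:N})$ that you invoke to justify dropping the last cost factor is not about the population \emph{at} the terminal vertex: $\phi_N=\nu(N-1,N)\rvec_{N-1}$ is the asymptotic immigration rate \emph{into} the (untracked) target, which is precisely why no factor $\growthone-\lambda(N)$ appears in it; conflating that hitting-rate quantity with the eigenvector entry of a tracked coordinate is where the factor is lost.

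Tellingly, the consistency check you propose but do not execute would have exposed this: summing your claimed limits over $p\in\mathcal{P}_{1,x}$ via \eqref{eqn_ZfromXgen} gives $W\phi_x$, whereas Theorem \ref{thm_Zk2} with Corollary \ref{cor_pathgraph} (equivalently Theorem \ref{thm_Zk1}) gives $W\rvec_x=W\phi_x/(\growthone-\lambda(x))$; these disagree unless $\growthone-\lambda(x)=1$. In fairness, the statement as printed shares this defect --- taken literally it contradicts Corollary \ref{cor_pathgraph} already in the path-graph case, where $X(\tilde p_{1:x},t)=Z_x(t)$ --- so the missing factor $1/(\growthone-\lambda(p_{\pl+1}))$ originates in the source, and the same slip propagates into the paper's proof of Corollary \ref{cor_acyclicgrowth}. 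But a proof that arrives at the printed constant by asserting, rather than performing, the telescoping is not a proof: executed carefully, your plan establishes the corrected limit displayed above (from which Corollary \ref{cor_acyclicgrowth} and Theorem \ref{thm_Zk1} do follow consistently), not the statement as given.
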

Analogously to Corollary \ref{cor_pathgraph} we have the following statement regarding the growth of the population at the initial and intermediate vertices.
\begin{corollary}\label{cor_acyclicgrowth}
	The results of Sections \ref{sec_SIgrowth} and \ref{sec_SItimes} hold with the explicit representations
	\mic{
		\begin{align}
	\rvec_x=\begin{cases}
	1 \quad &x=1
	\\
	\frac{1}{\lambda-\lambda(x)} 	\sum_{p\in\mathcal{P}_{1,x}}w(p)\quad &2\leq x \leq N-1
	\end{cases}
	\end{align}}
		\begin{equation}
	\mu= \growthone^{-1}\log\left(\growthone^2\left[\birthone  \sum_{p\in\mathcal{P}_{1,N}}w(p)\right]^{-1}\right).
	\end{equation}
	where	$w(p)$ is the path weight given in \eqref{defSI_pathweight}.
\end{corollary}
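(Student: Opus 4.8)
The plan is to reduce the acyclic case to the path-graph case already settled in Corollary~\ref{cor_pathgraph}, exploiting the lineage decomposition $Z_x(t)=\sum_{p\in\mathcal{P}_{1,x}}X(p,t)$. Because $G$ is finite and acyclic, $\mathcal{P}_{1,x}$ is a finite set of simple paths and the eigenvalue condition of Lemma~\ref{lem_sufcond} ($\lambda^{*}=\growthone$, simple) holds, so Theorem~\ref{thm_Zk2} is available and the finitely many almost-sure limits may be summed termwise.

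First I would fix an intermediate vertex $x$ with $2\le x\le N-1$ and a root-to-$x$ path $p=(p_1,\dots,p_{l+1})$ with $p_{l+1}=x$. As observed just before Corollary~\ref{cor_Xlim}, the vector $(X(p_{1:j},t))_{j=1}^{l+1}$ evolves as a path-graph branching process along $p$: the lineage-$p$ cells divide at rate $\alpha(x)$, die at rate $\beta(x)$ and are fed from $X(p_{1:l},t)$ at rate $\nu(p_l,x)$, while transitions off $p$ only spawn longer lineages and never decrement $X(p,\cdot)$. Applying Theorem~\ref{thm_Zk2} and Corollary~\ref{cor_pathgraph} to this auxiliary path graph, with $x$ playing the role of a non-root, non-sink (intermediate) vertex, yields almost surely
\[
\lim_{t\to\infty}e^{-\growthone t}X(p,t)=W\,\frac{w(p)}{\growthone-\lambda(x)},
\]
the terminal factor $(\growthone-\lambda(x))^{-1}$ being exactly the contribution reinstated by the endpoint's fitness cost (the path-weight convention omits the fitness of a path's final vertex). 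Here $W$ is the single random variable of Lemma~\ref{lemma_limBD}: every lineage process shares the common root population $X((1),t)=Z_1(t)$, whose rescaled limit is $W$, so the same $W$ multiplies every component.

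Summing over the finitely many $p\in\mathcal{P}_{1,x}$ gives $\lim_{t\to\infty}e^{-\growthone t}Z_x(t)=W(\growthone-\lambda(x))^{-1}\sum_{p\in\mathcal{P}_{1,x}}w(p)$ almost surely, and matching against Theorem~\ref{thm_Zk2} identifies $\rvec_x=(\growthone-\lambda(x))^{-1}\sum_{p\in\mathcal{P}_{1,x}}w(p)$ for $2\le x\le N-1$; the value $\rvec_1=1$ is immediate from $Z_1(t)=X((1),t)$. For $\mu$ I would start from the general form \eqref{def_mugeneral}, namely $\mu=\growthone^{-1}\log(\growthone^{2}/[\birthone\sigma])$ with $\sigma=\sum_{i\in\mathcal{N}^{-}(N)}\nu(i,N)\rvec_i$, and use the edge-extension identity $w((q_1,\dots,q_m,N))=w(q)\,\nu(i,N)/(\growthone-\lambda(i))$ valid for every $q=(q_1,\dots,q_m)\in\mathcal{P}_{1,i}$ and $i\in\mathcal{N}^{-}(N)$. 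Since $N$ is a sink and paths are simple, appending the final edge $(i,N)$ is a bijection between $\mathcal{P}_{1,N}$ and the pairs (root-to-$i$ path, terminal edge into $N$); summing the identity over these pairs gives $\sigma=\sum_{i\in\mathcal{N}^{-}(N)}\nu(i,N)\rvec_i=\sum_{p\in\mathcal{P}_{1,N}}w(p)=\phi_N$, and hence $\mu=\growthone^{-1}\log(\growthone^{2}[\birthone\sum_{p\in\mathcal{P}_{1,N}}w(p)]^{-1})$, as claimed.

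The main obstacle I anticipate is getting the terminal fitness-cost factor of the second paragraph exactly right. Because the path-weight convention deliberately drops the fitness of a path's final vertex—so that $w(p)$ matches the seeding rate when $p$ ends at the target—one must take care that a path terminating at an intermediate vertex reinstates the factor $(\growthone-\lambda(x))^{-1}$, which is precisely the difference between Corollary~\ref{cor_Xlim} (target-terminated, limit $Ww(p)$) and the coefficient needed here. The rigorous device is to append a dummy sink after $x$, turning $x$ into an intermediate vertex of an $(l+2)$-vertex path graph without altering the law of $X(p,\cdot)$; once this bookkeeping is fixed, the termwise summation and the $\mu$-identity are routine.
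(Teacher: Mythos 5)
Your proof is correct and follows essentially the same route as the paper's: decompose $Z_x(t)=\sum_{p\in\mathcal{P}_{1,x}}X(p,t)$ via \eqref{eqn_ZfromXgen}, apply the path-graph results (Theorem \ref{thm_Zk2} and Corollary \ref{cor_pathgraph}) to each lineage process, and sum the finitely many almost-sure limits termwise. If anything you are more careful than the paper's own proof, which displays the lineage limits without the terminal factor $(\growthone-\lambda(x))^{-1}$ (an apparent typo, since the stated $\rvec_x$ requires it) and leaves implicit both the dummy-sink device and the identification $\sum_{i\in\mathcal{N}^{-}(N)}\nu(i,N)\rvec_i=\phi_N$ that yields the claimed form of $\mu$.
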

Note that as the empty product is defined to be 1, $\phi_1=1$.
\begin{proof}[Proof of Corollary \ref{cor_acyclicgrowth}]
	Due to Theorem \ref{thm_Zk2} we must demonstrate
	\begin{equation}
	\lim_{t\rightarrow \infty}e^{-\growthone t}Z_x (t)= W \sum_{p\in\mathcal{P}_{1,x}}w(p) \quad a.s.
	\end{equation}
	By \eqref{eqn_ZfromXgen} it is enough to show
	\begin{equation}
	\pr\left(\bigcap _{p\in\mathcal{P}_{1,x}} \lim_{t\rightarrow \infty} e^{-\growthone t}X(p,t)=W\right)=1.
	\end{equation}
	Observe that for any events $(A_i)_{i=1}^{n}$ such that for all $i$, $\pr(A_i)=1$ then $\pr(\cap_{i}^{n}A_i)=1$. Use this with events $A_p = \{ \lim_{t\rightarrow \infty} e^{-\growthone t}X(p,t)=W\}$ to conclude. 
\end{proof}

One particular consequence, by coupling Corollaries \ref{cor_condtime} and \ref{cor_acyclicgrowth} is that for $\nu^*$ small
\begin{equation}\label{eqn_timeapprox_cond}
\pr(T>t|S_1)\approx \frac{\left[\growthone/\birthone (1+e^{\growthone t}\phi_{N}\birthone/\growthone^2)^{-1}+\deathone/\birthone\right]^z-(\deathone/\birthone)^z}{1-(\deathone/\birthone)^z}
\end{equation}
which is the conditional version of \eqref{eqn_timeapprox}. Differentiating yields an approximation for the conditional density
\begin{equation}\label{eqn_pdfcond}
f_{T|T<\infty}(t)\approx \frac{z e^{\lambda t}\phi_N\left(\frac{\growthone/\birthone}{1+e^{\lambda t}\phi_N \alpha/\lambda^2}+\deathone/\birthone\right)^{z-1}}{\left(1-(\beta/\alpha)^z\right)\left(1+\frac{\alpha e^{\lambda t}\phi_N}{\lambda^2}\right)^2}.
\end{equation}

\subsection{Distribution of the path to the target}\label{sec_SI_pathproof}
We continue to take $G$ acyclic. Note that $|\mathcal{P}_{1,N}|$ is finite, and so we let 
$
n=|\mathcal{P}_{1,N}|.
$
The elements of $\mathcal{P}_{1,N}$ can be enumerated and we use the following notation to do so
$
\mathcal{P}_{1,N}=\{p^{(1)},\ldots, p^{(n)}\}.
$
For each $p^{(i)}\in\mathcal{P}_{1,N}$ we let: the path lengths be $\pl^{(i)}=|p^{(i)}|-1$, $u^{(i)} = \nu(p^{(i)}_{\pl^{(i)}},N)$ be the final transition rates, and $\eta^{(i)}=w(p^{(i)})/u^{(i)}$. Consider $p^{(i)}\in \mathcal{P}_{1,N}$ and recall that $X(p^{(i)}_{1:j},t)$ represents the number of cells that have progressed to step $j$ along path $i$ at time $t$. If we apply the first statement in Theorem \ref{thm_timesgeneral} to the branching process $(X(p^{(i)}_{1:j},t)))_{j=1}^{\pl^{(i)}+1}$ we can deduce that under a time rescaling, the time to traverse path $i$ is approximately exponentially distributed, conditional on the population growth along that path. Therefore the question of which path the target population arises from becomes equivalent to the minimum of a set of exponential random variables. To simplify notation we let,
\begin{equation}
T^{(i)} =T(p^{(i)})
\end{equation}
where $T(p^{(i)})$ is defined in \eqref{def_pathtime}. \mic{Also recall the notation 
	\begin{equation}\label{def_SInotptime}
	T(\neg p)=\min\{T(q):q\in \mathcal{P}_{1,N}\backslash\{p\}\}.
	\end{equation}}
	We can now prove the precise version of Theorem \ref{thm_paths}
\begin{theorem}\label{thm_paths2} 
 Assume the limits 
$
\lim_{\nu^*\rightarrow 0}\frac{u^{(i)}}{u^{(j)}}
$
exist, but may be infinite, and further that there exists $i^*\in \{1,\ldots ,n\}$ such that 
$
\lim_{\nu^*\rightarrow 0}\frac{u^{(i)}}{u^{(i^*)}}<\infty
$ for all $i=1,\ldots n$.  Then, for $p\in \mathcal{P}_{1,N}$ and $t\in \mathbb{R}$,
	\begin{equation}
\lim_{\nu^*\rightarrow 0}\pr(T(\neg p)-T(p)>t|S_1)=\lim_{\nu^*\rightarrow 0} \frac{w(p)}{w(p)+e^{\growthone t}\sum_{q\in \mathcal{P}_{1,N}\backslash\{p\}}w(q)}.
\end{equation}
\end{theorem}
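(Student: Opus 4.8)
The plan is to condition on the lineage-resolved growth of the population, reduce target seeding along each path to a family of conditionally independent Cox processes, and then combine the single-path conclusion of Theorem \ref{thm_timesgeneral} with a minimum-of-exponentials computation. Write $p=p^{(k)}$, so $T(p)=T^{(k)}$ and $T(\neg p)=\min_{i\neq k}T^{(i)}$. The starting observation is that, conditional on the entire history of the lineage-resolved populations $(X(q,t))$, the transitions seeding the target along distinct root-to-target paths are driven by disjoint cell populations (each cell carries a single vertex lineage), and a target-seeding transition $(x)\rightarrow (x),(N)$ leaves the penultimate cell in place. Hence by the standard marking argument, given the growth history the seedings along path $i$ form an inhomogeneous Poisson process of intensity $u^{(i)}X(p^{(i)}_{1:\pl^{(i)}},t)$, and these are conditionally independent across $i$. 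Consequently $T^{(1)},\ldots,T^{(n)}$ are conditionally independent first-arrival times.

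First I would treat each path in isolation. For fixed $i$ the tuple $(X(p^{(i)}_{1:j},t))_{j=1}^{\pl^{(i)}+1}$ is exactly a path-graph branching process with target-seeding rate $u^{(i)}$, so the first (conditional) statement of Theorem \ref{thm_timesgeneral} applies, with the relevant growth constant supplied by Corollary \ref{cor_Xlim}. Setting $\mu^{(i)}=\growthone^{-1}\log\big(\growthone^2/(\birthone\, w(p^{(i)}))\big)$ and $V^{(i)}=\exp[\growthone(T^{(i)}-\mu^{(i)})]$, that theorem gives, as $\nu^*\rightarrow 0$, conditional convergence of $V^{(i)}$ to an exponential variable of rate $W\growthone/\birthone$ — crucially the \emph{same} $W$ for every path, since all paths share the single root population of Lemma \ref{lemma_limBD}, and the same rate, because $\mu^{(i)}$ is chosen precisely to absorb the path-specific constant $W\,w(p^{(i)})$ appearing in the limiting seeding intensity. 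Combining this with the conditional independence above, the vector $(V^{(i)})_{i=1}^n$ converges conditionally to an i.i.d. family of $\mathrm{Exponential}(W\growthone/\birthone)$ variables.

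Next I would rewrite the target event in these rescaled coordinates. Since $T^{(i)}=\mu^{(i)}+\growthone^{-1}\log V^{(i)}$ and $e^{\growthone(\mu^{(k)}-\mu^{(i)})}=w(p^{(i)})/w(p^{(k)})$, the event $\{T(\neg p)-T(p)>t\}$ becomes $\{V^{(i)}>(w(p^{(i)})/w(p^{(k)}))\,e^{\growthone t}\,V^{(k)}\text{ for all }i\neq k\}$. For i.i.d. rate-$\theta$ exponentials, conditioning on $V^{(k)}=v$ and using independence gives $\exp\big(-\theta e^{\growthone t}v\sum_{i\neq k}w(p^{(i)})/w(p^{(k)})\big)$; integrating $v$ against the rate-$\theta$ density yields
\begin{equation}
\frac{1}{1+e^{\growthone t}\sum_{i\neq k}w(p^{(i)})/w(p^{(k)})}=\frac{w(p)}{w(p)+e^{\growthone t}\sum_{q\in \mathcal{P}_{1,N}\setminus\{p\}}w(q)}.
\end{equation}
The common rate $\theta=W\growthone/\birthone$ cancels, so this conditional limit is deterministic; it therefore survives taking expectations and equals the limit both unconditionally and conditionally on $\survone$, which is the claim.

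The hard part will be making the two interchanges of limit rigorous and seeing where the hypotheses enter. As in the proof of Theorem \ref{thm_timesgeneral}, passing $\nu^*\rightarrow 0$ inside the conditional probability needs a dominated-convergence argument for the seeding intensities using the almost sure growth asymptotics of Corollary \ref{cor_acyclicgrowth}; the subsequent passage inside $\ep[\,\cdot\mid \survone]$ is immediate once the conditional limit is identified as the deterministic constant above. The technical hypotheses — that each $u^{(i)}/u^{(j)}$ has a (possibly infinite) limit and that some $u^{(i^*)}$ dominates in the sense $\lim u^{(i)}/u^{(i^*)}<\infty$ — are exactly what guarantee that the centering differences, equivalently the ratios $w(p^{(i)})/w(p^{(k)})=(u^{(i)}/u^{(k)})(\eta^{(i)}/\eta^{(k)})$ (the $\eta$'s being independent of $\nu^*$), converge in $[0,\infty]$, so the minimum-of-exponentials computation has a well-defined limit and no path's scale is lost; the dominating index $i^*$ furnishes the common reference scale against which convergence is uniform. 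This is precisely why the identity is stated with the limit retained on the right-hand side, so that ratios may degenerate to $0$ or $\infty$ while the equality still holds.
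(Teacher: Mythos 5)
Your proposal is correct and follows essentially the same route as the paper's proof: condition on the lineage-resolved growth $(\mathcal{X}(s))_{s\geq 0}$, use the conditional independence of the $T^{(i)}$ as first arrivals of Cox processes, apply the first statement of Theorem \ref{thm_timesgeneral} (via Corollary \ref{cor_pathgraph}) path-by-path to get conditional exponential limits, finish with a minimum-of-exponentials computation, and lift the deterministic conditional limit through the tower rule over $S_1$. The only difference is cosmetic: you centre each $T^{(i)}$ by its own $\mu^{(i)}$ and push the scale ratios $u^{(i)}/u^{(k)}$ into the event, whereas the paper centres all paths by the common $\mu^{(i^*)}$ so those ratios ($\kappa^{(i)}$) appear in the limiting exponential rates instead — a change of variables apart, these are the same argument.
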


\begin{proof}
	Consider initially the case when $n=2$, that is there are only two paths from vertex 1 to vertex $N$. These two paths must diverge before $N$. If the population growth along these paths until the target vertex \mic{(but excluding the population growth at the target vertex itself)} is given, then $T^{(1)}$ and $T^{(2)}$ are the first arrival times in two independent non-homogeneous Poisson processes.  Returning to arbitrary $n$, to use this fact we define the population numbers along all the paths up until the target vertex as
	$$
	\mathcal{X}(s)=(X(p^{(i)}_{1:j},s))_{i=1,j=1}^{n,\pl^{(i)}}.
	$$
	So that we can use this conditional independence, and prior results, we note
	\begin{equation}\label{eqn_towerrule}
	\pr(T(\neg p)-T(p)>t|S_1)=
	\frac{\ep[\mathbb{I}_{S_1}\pr(T(\neg p)-T(p)>t|(\mathcal{X}(s))_{s\geq 0})]}{\pr(S_1)}.
	\end{equation}
	We now focus on the argument of the above expectation. Let the path under consideration be path 1 ($p=p^{(1)}$) and
	\begin{equation}\label{def_kappa}
	\kappa^{(i)}=\lim_{\nu^*\rightarrow 0}\frac{u^{(i)}}{u^{(i^*)}}.
	\end{equation}
	We know, \mic{by Theorem \ref{thm_timesgeneral}, } that each $T^{(i)}$ appropriately centred converges. If we centre each $T^{(i)}$ by the same quantity, then the ordering is preserved. That is, clearly
	$
	T^{(1)}<T^{(2)} \iff T^{(1)}-\mu^{(i)}<T^{(2)}-\mu^{(i)}$	for any $\mu^{(i)}$.
	As $T^{(i)}$ is the target time along the path graph $p^{(i)}$, Corollary \ref{cor_pathgraph} gives the  appropriate centring. In particular with
	\begin{equation}
	\mu^{(i)} = 	\growthone^{-1}\log(\growthone^2[\birthone\eta^{(i)} u^{(i)}]^{-1})
	\end{equation}
	we have, \mic{by the first statement of Theorem \ref{thm_timesgeneral} and Corollary \ref{cor_pathgraph}},
	\begin{equation}
	\lim_{\nu^*\rightarrow 0}\pr(\growthone(T^{(i)}-\mu^{(i)})>x|(\mathcal{X}(s))_{s\geq 0}) =\exp(-e^x W \growthone/\birthone)\, a.s.
	\end{equation}
	For order preservation (of the $T^{(i)}$s) we take $\mu^{(i^*)}$ as a common centring of all $T^{(i)}$, and use $T^{(i)}-\mu^{(i^*)}=T^{(i)}-\mu^{(i)} + \mu^{(i)} -\mu^{(i^{*})}$ and 
	\begin{equation}
	\lim_{\nu^*\rightarrow 0}\mu^{(i)}-\mu^{(i^*)} = \growthone^{-1} \log\left(\frac{\eta^{(i^*)}}{\eta^{(i)} \kappa^{(i)}} \right).
	\end{equation}
	Note the above expression may be $-\infty$. Using this, and exponentiating (which again preserves order), leads to
	\begin{equation}
	\lim_{\nu^*\rightarrow 0}\pr(\exp[\growthone(T^{(i)}-\mu^{(i^*)})]>t|(\mathcal{X}(s))_{s\geq 0}) =\exp\left(-\frac{t W\growthone \eta^{(i)}\kappa^{(i)}}{\birthone\eta^{(i^*)}}\right)\, a.s.
	\end{equation}
	At this point we note that if $\kappa^{(i)}=0$, then for fixed $(\mathcal{X}(s))_{s\geq 0}$, the random variables $\exp[\growthone(T^{(i)}-\mu^{(i^*)})]$ converges to $+\infty$ almost surely. Recall the $T^{(i)}$ are conditionally independent, and hence convergence of the marginal distributions implies joint convergence. Therefore for $(x_{j})_{j=1}^{n}$ with each $x_j\geq 0$, almost surely we have
	\begin{equation}
	\lim_{\nu^*\rightarrow 0}\pr\left[\left(\exp[\growthone(T^{(j)}-\mu^{(i^*)})]>x_{j}\right)_{j=1}^{n}|(\mathcal{X}(s))_{s\geq 0}\right]=\prod_{j=1}^{n}\exp \left( -\frac{x_{j}W \kappa^{(j)}\eta^{(j)}\growthone}{\eta^{(i^*)}\birthone}\right).
	\end{equation}
	Including the shift $t$, we have for fixed $(\mathcal{X}(s))_{s\geq 0}$ the random vector $(\exp[\growthone(T^{(1)}+t-\mu^{(i^*)})],\exp[\growthone(T^{(j)}-\mu^{(i^*)})])_{j=2}^{n}$ jointly tends in distribution to the random vector $(U_i)_{i=1}^{n}$. \mic{Note that only $T^{(1)}$ is shifted, due to the path under consideration being $p^{(1)}$}. If $W=0$, all the $U_i=+\infty$ a.s., instead if $W>0$ each $U_{i}$ is independent and for $2\leq i \leq n$ has distribution 
	\begin{equation}
	U_i \equald \begin{cases}
	+\infty  \mbox{ if } \kappa^{(i)}=0
	\\
	E_i\mbox{ if } 0<\kappa^{(i)}<\infty
	\end{cases}
	\end{equation}
	where \mic{$E_i$ is an exponential random variable with parameter $\frac{W \kappa^{(i)}\eta^{(i)}\growthone}{\eta^{(i^*)}\birthone}$}. For $i=1$, instead we have $U_1\equald +\infty$ if $\kappa^{(1)}=0$, else $U_1$ is distributed as an exponential with parameter $\frac{e^{-\growthone t}W \kappa^{(1)}\eta^{(1)}\growthone}{\eta^{(i^*)}\birthone}$. As, up to null events, $\{W>0\}=S_1$, and using standard results for the ordering of exponential variables (looking at the minimum of the $U_i$), we have
	\mic{
	\begin{align}\label{eqn_condlimt}
	\lim_{\nu^*\rightarrow 0}\mathbb{I}_{S_1}\pr(T(p)+t<T(\neg p)|(\mathcal{X}(s))_{s\geq 0})=\mathbb{I}_{S_1}\frac{\kappa^{(1)}\eta^{(1)}}{\kappa^{(1)}\eta^{(1)}+ e^{\growthone t}\sum_{j=2}^{n}\kappa^{(j)}\eta^{(j)}}.
	\end{align} 
}
	The right hand side of \eqref{eqn_condlimt} is an element of $[0,1]$. $0$ is obtained when $\kappa^{(1)}=0$ (the final transition rate along the path is small, \mic{in the sense of the limit \eqref{def_kappa}}, relative to other paths) and $1$ is obtained if $\kappa^{(i)}=\delta_{1,i}$ (the final transition rate along the path is the largest relative to other paths). 	Coupling the above with \eqref{eqn_towerrule} gives the result.
\end{proof}
%
%


\subsection{Cyclic graphs}\label{sec_cyc2}

Suppose $G$ is as before, i,e. Section \ref{sec_SI_pathproof}, but without the acyclic assumption. For cyclic graphs in place of $\mathcal{P}_{1,N}$ we must consider $\mathcal{W}_{1,N}$, which we define to be the set of walks between the root and vertex $N$. The definition of vertex lineages holds identically for walks. Therefore for $\walk\in \mathcal{W}_{1,N}$ we let
\begin{equation}\label{def_Tw}
T(\walk) = \inf\{t\geq 0: X(\walk,t)>0\}
\end{equation}
where $X(\walk,t)$ is as given in Section \ref{sec_results_path}.  Here we show that so long as the number of transitions between vertices is finite, which corresponds to finitely many back transitions, we may map a graph containing cycles to an acyclic one and then use the results given above.

The set of walks between the root and the target vertex of length at most $i-1$ will be denoted
$$
\mathcal{W}_{1,N}^{(i)}=\{\walk\in \mathcal{W}_{1,N}: |\walk|\leq i\}.
$$
Further for a graph $G=(V,E)$, let the vertex parameters be the birth and death rates associated to the vertices of $G$ and the edge parameters the transition rates associated with the edges.
\begin{prop}\label{prop_acyctocyc}
	For a given graph $G=(V,E)$ with associated vertex and edge parameters, \mic{and $i\geq 1$ which is the upper-length of walks that we consider on $G$}, we can construct an acyclic graph $G'=(V',E')$ (dependent on $i$), which coupled with appropriate vertex and edge parameters, permits a bijection $g:\mathcal{W}_{1,N}^{(i)}\mapsto \mathcal{P}'_{1,|V'|}$. This bijection possesses the property that for $\walk\in \mathcal{W}_{1,N}^{(i)}$
	$$
	T(\walk) \equald T'({g(\walk)})
	$$
	where $T'({g(\walk)})$ is defined as in \eqref{def_Tw} but with the process on $G'$ with its associated parameters.
\end{prop}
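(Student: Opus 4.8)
The plan is to construct $G'$ by \emph{unrolling} $G$ according to the number of steps taken along a walk, so that a walk which revisits a vertex of $G$ is sent to a path through \emph{distinct} copies of that vertex in $G'$. Concretely I would create, for each vertex $x$ and each admissible step index $j$, a copy $(x,j)$ recording that $x$ is the $j$-th vertex visited; using that the root is a source I keep a single root copy, and using that the target is a sink I merge all copies of $N$ into a single sink $N'$. Edges run only forward in the step index: $(x,j)\to(y,j+1)$ is present exactly when $(x,y)\in E$, with the convention that any edge into $N$ lands in the single sink $N'$. Since every edge strictly increases the step index (or enters the sink), $G'$ is acyclic, the root copy is a source and $N'$ is a sink; a topological relabelling then makes the root vertex $1$ and $N'$ the vertex $|V'|$. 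Because at most $i-1$ steps are possible, only finitely many copies arise, so $G'$ is finite ($O(Ni)$ vertices). The vertex parameters of $(x,j)$ are set to $\alpha(x),\beta(x)$ and the edge parameters are inherited from the corresponding edge of $G$.

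Next I would exhibit the bijection. A walk $\walk=(\walk_1,\dots,\walk_{\pl+1})\in\mathcal{W}_{1,N}^{(i)}$ with $\walk_{\pl+1}=N$ is sent to the path $g(\walk)=\big((\walk_1,1),(\walk_2,2),\dots,(\walk_{\pl},\pl),N'\big)$ in $G'$. Injectivity is immediate, since $g(\walk)$ records the entire vertex sequence together with its step indices. Surjectivity onto $\mathcal{P}'_{1,|V'|}$ follows by reading any root-to-target path of $G'$ step by step and deleting the indices: the forward-edge rule guarantees that the recovered vertex sequence is a genuine walk of $G$ of length at most $i-1$, and it begins at the root and ends at $N$. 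Hence $g$ is the required bijection; note that spurious copies not lying on any root-to-target path (for instance copies of the root at steps $>1$, which have no incoming edges because the root is a source) do not affect $\mathcal{P}'_{1,|V'|}$.

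The distributional identity $T(\walk)\equald T'(g(\walk))$ is the substantive step. Here I would invoke the description of the vertex-lineage process established in Section \ref{sec_SIacyclic}: for a fixed lineage $q$, the family $(X(q_{1:j},t))_j$ is itself a path-graph branching process in which step-$j$ cells divide at rate $\alpha(q_j)$, die at rate $\beta(q_j)$, and transition to step $j+1$ at rate $\nu(q_j,q_{j+1})$. The key point is that this law depends only on the ordered sequence of rates encountered along $q$, not on the ambient (possibly cyclic) structure of the underlying graph. By construction the rate sequence along $\walk$ in $G$ coincides termwise with the rate sequence along $g(\walk)$ in $G'$, so the two vertex-lineage branching processes are equal in law. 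Since $T(\walk)$ and $T'(g(\walk))$ are the first times the respective lineage counts become positive, they are therefore equal in distribution.

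The main obstacle I anticipate is not the combinatorial unrolling but justifying that, on the original cyclic graph, the dynamics along a single fixed vertex lineage genuinely reduce to the path-graph process with the stated rates. In particular one must check that transitions leaving the lineage---including those that re-enter a vertex already visited earlier in the walk---produce cells with a \emph{different} vertex lineage and hence never contribute to $X(\walk,t)$, so that cycles in $G$ are invisible once a lineage is fixed. Once this bookkeeping is made precise, matching rates termwise across $g$ closes the argument, and finiteness of $G'$ lets the acyclic results of Sections \ref{sec_SIgrowth}--\ref{sec_SI_pathproof} apply to $G'$ verbatim.
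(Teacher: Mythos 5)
Your proof is correct, and it establishes the proposition by a genuinely different construction than the paper's. The paper builds $G'$ as a prefix tree: the walks in $\mathcal{W}_{1,N}^{(i)}$ are enumerated, and each new walk is grafted onto the graph built so far at the longest prefix it shares with a previously processed walk, all branches terminating at a common sink (Fig.\ \ref{fig_cyctoacyc}); since vertices of that $G'$ correspond to distinct walk prefixes, every vertex lies on a root-to-target path and the one-path-per-walk bijection is immediate from the construction. You instead unroll $G$ by step index, with vertices $(x,j)$ and only forward edges $(x,j)\to(y,j+1)$; this makes acyclicity and the bijection completely transparent and keeps $G'$ of size $O(Ni)$, whereas the trie can have exponentially many (in $i$) vertices, one per distinct walk prefix. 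The price is two points you should make explicit: (i) copies lying on no root-to-target path should simply be deleted, so that $G'$ also satisfies the standing assumption that every vertex is reachable from the root and reaches the target (needed when the acyclic results are later applied to $G'$); (ii) in your $G'$ two distinct walks can share an interior vertex (same vertex of $G$ visited at the same step), so a vertex count $Z'_v(t)$ is no longer a single lineage count --- this is harmless because $T'$ is defined through the lineage counts $X'(\cdot,t)$, not vertex counts, and because the prefix structure of lineages (hence also the joint law of the hitting times, which is what Proposition \ref{prop_nowalks} actually uses) is still preserved, but it deserves a sentence. Finally, both arguments rest on the same key fact, which you state explicitly and the paper leaves implicit in its requirement that rates match along $g(\walk)$: the lineage process $(X(\walk_{1:j},t))_{j}$ is a path-graph branching process whose law depends only on the ordered sequence of rates along $\walk$, since a cell transitioning off the lineage --- even back into a vertex of $G$ visited earlier --- acquires a lineage that is not a prefix of $\walk$ and can never rejoin, while the transitioning parent (transitions being of the form $(x)\rightarrow(x),(y)$) stays on the lineage. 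The ``bookkeeping'' you flag as the main obstacle is exactly this observation, and your treatment of it is no less complete than the paper's own sketch.
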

Before giving the proof we briefly demonstrate this on the graphs shown in Fig.\ \ref{fig_cyctoacyc}. This elucidates the general idea.

\noindent\begin{minipage}[t]{1\linewidth}

	\begin{center}
		\begin {tikzpicture}[-latex ,auto ,node distance =3.1 cm and 2cm ,on grid ,
		semithick ,
		state/.style ={ circle ,top color =white , bottom color = processblue!20 ,
			draw,processblue , text=blue , minimum width =.11 cm}]
		\node[state,label={[label distance=0.35cm]90:$G$}] (1)  at (0,3){$1$};
		\node[state] (2) [below left of =  1] {$2$};
		\node[state] (3) [below right of =  1] {$3$};
		\node[state] (4) [below right of =  2] {$4$};

		\path (1) edge node[above =0.15 cm] {} (2);
		\path (1) edge node[above =0.15 cm] {} (3);
		\path (2) edge node[above =0.15 cm] {} (3);
		\path (3) edge node[above =0.15 cm] {} (2);
		\path (2) edge node[above =0.15 cm] {} (4);
		\path (3) edge node[above =0.15 cm] {} (4);

	\end{tikzpicture}
	\qquad 
	\begin {tikzpicture}[-latex ,auto ,node distance =2 cm and 2cm ,on grid ,
	semithick ,
	state/.style ={ circle ,top color =white , bottom color = processblue!20 ,
		draw,processblue , text=blue , minimum width =.11 cm}]
	\node[state,label={[label distance=0.35cm]90:$G'$}] (1)  at (0,3){$1$};
	\node[state] (2) [below left of =  1] {$2$};
	\node[state] (3) [below left of =  2] {$3$};
	\node[state] (4) [below right of =  1] {$4$};
	\node[state] (5) [below right of =  4] {$5$};
	\node[state] (6)  at (0,-1.5){$6$};

	\path (1) edge node[above =0.15 cm] {} (2);
	\path (1) edge node[above =0.15 cm] {} (4);
	\path (2) edge node[above =0.15 cm] {} (3);
	\path (2) edge node[above =0.15 cm] {} (6);
	\path (4) edge node[above =0.15 cm] {} (5);
	\path (4) edge node[above =0.15 cm] {} (6);
	\path (3) edge node[above =0.15 cm] {} (6);
	\path (5) edge node[above =0.15 cm] {} (6);
	
\end{tikzpicture}
\captionof{figure}{}
\label{fig_cyctoacyc}
\end{center}
\vspace{.3cm}
\end{minipage}
Here 
$$
\mathcal{W}_{1,N}^{(4)}=\{(1,2,4),(1,2,3,4),(1,3,4),(1,3,2,4)\}.
$$
It is clear these may be mapped to the paths in $G'$
$$
\mathcal{P}'_{1,|V'|}=\{(1,2,6),(1,2,3,6),(1,4,6),(1,4,5,6)\}.
$$
Further $T((1,2,4))\equald T'((1,2,6))$ so long as the transition rates associated with edges $(1,2)$ and $(2,4)$ on $G$ are equal to the rates with $(1,2)$ and $(2,6)$ on $G'$ and that the birth and death rates at vertices $1$ and $2$ are equal in both cases.

\begin{proof}[Proof of Proposition \ref{prop_acyctocyc}]
To avoid cumbersome notation and as the idea is illustrated in Fig.\ \ref{fig_cyctoacyc} we sketch an algorithm. We enumerate the set $\mathcal{W}_{1,N}^{(i)}=\{\walk^{(1)},\walk^{(2)},\ldots \walk^{(\mathcal{M})}\}$, where $\mathcal{M}=|\mathcal{W}_{1,N}^{(i)}|$. Again we use the notation $l^{(j)}$ for the length of the $j$th walk, that is $l^{(j)}=|\walk^{(j)}|-1$. The graph $G'$ is constructed iteratively. Let $V'_1=\{1,2,\ldots, l^{(1)},N'\}$, where $N'$ is a placeholder and will ultimately be $|V'|$. Similarly $E'_1=\{(1,2),\ldots,(l^{(1)},N')\}$ and $p^{(1)}=(1,2,\ldots, l^{(1)},N')$.

Then for each further $\walk^{(j)}$, $2\leq j\leq \mathcal{M}$, from the previously considered walks select \mic{the walk that is identical to $\walk^{(j)}$ for the greatest number of vertices}, say $\walk^{(k)}$, $1\leq k <j$. If $\walk^{(j)}$ and $\walk^{(k)}$ agree up until their $m$th element, and $l^{(j)}-m>0$ then add $l^{(j)}-m$ elements to $V'_{j-1}$ (new vertices starting with $|V'_{j-1}|$). This creates $V'_j$. Denote the new vertices as $(v^{(j)}_x)_{x=1}^{l^{(j)}-m}$. \mic{As $\walk^{(k)}$ has previously been considered it will have an associated path $p^{(k)}$, whose final element we will require to construct $E_{j}$: }to $E_{j-1}$ add the edges $(p^{(k)}_m,v^{(j)}_1)$, $\{(v^{(j)}_x,v^{(j)}_{x+1})\}_{x=1}^{l^{(j)}-m-1}$ and $(v^{(j)}_{l^{(j)}-m},N')$. This creates $E_j$. For $p^{(j)}$ we let $p^{(j)}_{1:m}=p^{(k)}_{1:m}$ and then add the new vertices of $V'_j$, and terminate the path at $N'$. If $l^{(j)}=m$ (and so the walk $\walk^{(j)}$ agrees with $\walk^{(k)}$ until its penultimate element) let $V'_{j}=V'_{j-1}$ and add the edge $(p^{(k)}_m,N')$ to $E'_{j-1}$. Then $p^{(j)}$ is equal to $p^{(k)}_{1:m}$ concatenated with $N'$. Finally our constructed graph is $G'=(V',E')=(V'_{\mathcal{M}},E'_{\mathcal{M}})$ with $N'=|V'|$.

Identifying $g(\walk^{(j)})=p^{(j)}$, we see that if $\walk^{(j)},\, \walk^{(k)}$ agree until their $m$th element then so also will $g(\walk^{(j)}),\, g(\walk^{(k)})$, but after this the paths diverge. We further require that the birth and death rates for the population type represented by the $k$th element of $\walk$ should be the same as the $k$th element of $g(\walk)$. Similarly for the transition rates between the $k$th and $k+1$th element of $\walk$ and $g(\walk)$.

\end{proof}
Coupling the above proposition with the results of Section \ref{sec_SIacyclic} allows one to characterise
$$
\min_{\walk\in \mathcal{W}_{1,N}^{(k)}} T(\walk)
$$
in terms of the weights of the walks. Further for a set of walks $\mathcal{W}_{1,N}^{(i)}$, \mic{Theorem \ref{thm_paths2}} holds but now with the walk weights, defined analogously to the definition given in \eqref{defSI_pathweight}. \mic{Note that the Proposition \ref{prop_acyctocyc} holds regardless of whether the root is a source vertex, or whether the vertex 1 population is the most fit. However, if transitions are permitted into the root vertex, then for any given walk, there is no guarantee that the cells associated with the first element of that walk are the most fit. This would leave us unable to apply the results derived in earlier sections.}

While the above proposition allows us to consider back transitions, the following result demonstrates that when the transition rates between vertices are small, the first initiation of the target is dominated by the paths, as opposed to walks containing cycles. This offers a secondary justification for mainly focusing on acyclic graphs. To state the result let us define $q:\mathcal{W}_{1,N}\mapsto\mathcal{P}_{1,N}$ as the operation reducing walks to their respective paths. This may be accomplished recursively by searching for the first cycle in $\walk$ and replacing this with the first element in the cycle. Then repeating the same procedure on the reduced walk until it is a path. This operation is the same as that which maps trajectories to their corresponding trajectory class in \cite{Tadrowski:2018}. Further, for $p\in \mathcal{P}_{1,N}$ let
$$
\mathcal{W}^{(i)}[p] = \{\walk\in \mathcal{W}_{1,N}^{(i)}: q[\walk]=p,\walk\neq p\}
$$
that is the walks of length at most $i-1$ which when reduced to paths are $p$, but excluding $p$ itself. Then with
$$
\nu_{\max}=\max\{ \nu(i,j): (i,j)\in E\}
$$
we have the following.

\begin{prop}\label{prop_nowalks}
For $i\geq 1$, $p\in \mathcal{P}_{1,N}$ such that $\mathcal{W}^{(i)}[p]$ is non-empty, and all $\walk\in \mathcal{W}^{(i)}[p]$
$$
\lim_{\nu_{\max}\rightarrow 0}\lim_{\nu^*\rightarrow 0} \pr(T(p)< T(\walk)|S_1)=1.
$$
\end{prop}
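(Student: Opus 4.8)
The plan is to separate the two limits. The inner limit $\nu^*\to 0$ is handled by the path/walk distribution theorem already established, which collapses $\pr(T(p)<T(\walk)\mid S_1)$ to a ratio of weights; the outer limit $\nu_{\max}\to 0$ is then a deterministic estimate on that ratio. Precisely, I would first prove
$$
\lim_{\nu^*\to 0}\pr\!\left(T(p)<T(\walk)\mid S_1\right)=\frac{w(p)}{w(p)+w(\walk)},
$$
and then reduce the whole proposition to showing $w(\walk)/w(p)\to 0$ as $\nu_{\max}\to 0$.

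For the inner limit I would invoke the walk version of Theorem \ref{thm_paths2}, which is available because $p$ and $\walk$ both have length at most $i$ and $p$ is itself a (degenerate) walk with $q[p]=p$. Via Proposition \ref{prop_acyctocyc} the pair $\{p,\walk\}$ maps bijectively to two distinct paths $g(p),g(\walk)$ in an acyclic graph $G'$ with $T(p)\equald T'(g(p))$ and $T(\walk)\equald T'(g(\walk))$, so the comparison can be read off on $G'$. The hypotheses of Theorem \ref{thm_paths2} are easy to verify here: since the target is a sink it is visited exactly once, so cycle removal never touches the final edge, and $p,\walk$ share their last vertex before $N$; hence the two final transition rates coincide, $u^{(1)}=u^{(2)}=\nu(\walk_{\ell'},N)$, and one may take $i^*=1$ with all ratios $\kappa^{(j)}=1$. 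Applying the theorem at $t=0$ to this two-path system (with $w$ read as the walk weight) gives the displayed inner limit; note that the shared final rate cancels in the ratio $w(\walk)/w(p)$, so this limit is a genuine number rather than a $0/0$ artefact.

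The content then lies in the estimate $w(\walk)/w(p)\to 0$. I would prove a multiplicative factorisation $w(\walk)=w(p)\cdot R$ by following the reduction $q$ one cycle at a time. Removing a single cycle $a\to v_1\to\cdots\to v_k\to a$, which has $m=k+1\geq 2$ edges (the graph is simple, so there are no self-loops and $m\geq 2$), divides the weight by exactly
$$
\frac{\nu(a,v_1)\,\nu(v_1,v_2)\cdots\nu(v_k,a)}{(\growthone-\lambda(a))(\growthone-\lambda(v_1))\cdots(\growthone-\lambda(v_k))},
$$
a quotient of $m$ transition rates over $m$ fitness costs. Getting this count right is the main obstacle: the anchor vertex $a$ occurs twice in the unreduced segment but once after reduction, contributing one extra fitness-cost factor that must be matched against the extra transition rates. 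Because the root is a source and the target a sink, no cycle can be anchored at vertex $1$ or meet $N$, so every anchor $a$ is an interior vertex carrying a genuine factor $\growthone-\lambda(a)$ and the tally is as stated.

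Finally, set $\delta=\min_{2\leq x\leq N-1}(\growthone-\lambda(x))>0$, which is fixed independently of the transition rates. Every transition rate is at most $\nu_{\max}$ and every fitness cost at least $\delta$, so each single-cycle factor is at most $(\nu_{\max}/\delta)^{m}\leq(\nu_{\max}/\delta)^{2}$. Since $\walk\neq p$ there is at least one cycle to remove, whence $R\leq(\nu_{\max}/\delta)^{2}\to 0$ as $\nu_{\max}\to 0$. Therefore $w(\walk)/w(p)=R\to 0$ and the inner-limit value $w(p)/(w(p)+w(\walk))=(1+R)^{-1}\to 1$, which is the claim. All of the difficulty is concentrated in the factorisation and counting of the preceding paragraph; the remaining steps are direct applications of Proposition \ref{prop_acyctocyc} and Theorem \ref{thm_paths2}.
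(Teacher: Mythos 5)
Your proof is correct and takes essentially the same route as the paper's: map $p$ and $\walk$ to paths in an acyclic graph via Proposition \ref{prop_acyctocyc}, use the path-distribution theorem to collapse the inner limit $\nu^*\to 0$ to the weight ratio $w(p)/(w(p)+w(\walk))$, and then observe that $w(\walk)/w(p)$ is a product of transition rates over fitness costs coming from the cycles, hence vanishes as $\nu_{\max}\to 0$. Your extra care — checking the hypotheses of Theorem \ref{thm_paths2} (the shared final edge forces $\kappa^{(j)}=1$) and the cycle-by-cycle count showing each removed cycle contributes equally many rate factors and cost factors — fills in details the paper's proof asserts without elaboration, but does not change the argument.
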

\begin{proof}
We construct an acyclic graph with appropriate vertex and edge parameters as in Proposition \ref{prop_acyctocyc} and use the mapping $g$ from there with domain $\mathcal{W}^{(i)}_{1,N}$. As before let $\mathcal{P}'_{1,|V'|} $ be the codomain of $g$. Then for path $p\in\mathcal{W}^{(i)}_{1,N}$ and walk $\walk\in \mathcal{W}^{(i)}[p]$, $g(p),\, g(\walk)\in \mathcal{P}'_{1,|V'|}$. Note due to the manner in which the edge parameters are chosen $\{\nu(i,N):\,i\in \mathcal{N}^{-}(N)\}=\{\nu(i,|V'|):\,i\in \mathcal{N}^{-}(|V'|)\}$ and hence $\nu^*$ remains unchanged. Similarly $\nu_{\max}$ is unchanged. Therefore,
\begin{equation}
\lim_{\nu^*\rightarrow 0} \pr(T(p)< T(\walk)|S_1)=\lim_{\nu^*\rightarrow 0} \pr(T'(g(p))< T'(g(\walk))|S_1).
\end{equation}	
Furthermore each vertex and edge in $p$ is also present in $\walk$. Hence the weight of $g(\walk)$ satisfies $w(g(\walk))=w(g(p))\tilde w(g(\walk))$ where $\tilde w(g(\walk))$ is the product of transition rates along the edges existing in cycles in $\walk$ divided by the costs of vertices in the cycles. Then by Theorem \ref{thm_paths}
\begin{equation}
\lim_{\nu^*\rightarrow 0} \pr(T'(g(p))< T'(g(\walk))|S_1)=\frac{1}{1+\tilde w(g(\walk))}.
\end{equation}
As $\tilde w(g(\walk)) $ contains transition rates, this leads to the stated result upon taking $\nu_{\max}\rightarrow 0$.
\end{proof}


\begin{redsect}

\subsection{Simulation for alternative transition formulation}
See Fig.\ \ref{fig_alt_trans} for a comparison of simulations with transitions following $(x)\rightarrow (y)$ with the hitting time results derived in Section \ref{sec_results} of the main text.

\begin{figure}[t]
	\centering
	\includegraphics[width=.6\linewidth,height=6cm]{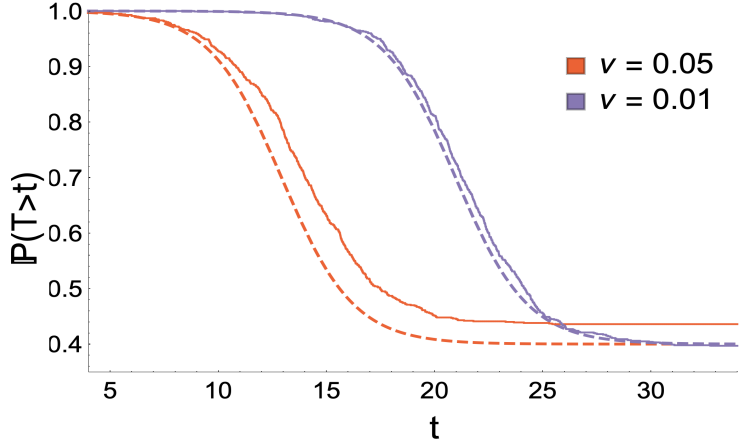}
	\caption{ \mic{Alternative transition formulation: Here we compare analogous simulations to Fig \ref{fig_growthtime}(b), except that transitions follow $(x) \rightarrow (y)$, with the theoretical approximation of \eqref{eqn_timeapprox}. Dashed lines give the theoretical result, joined lines represents the empirical distribution obtained from 1000 realisations. Despite \eqref{eqn_timeapprox} begin derived under the formulation when transitions follow $(x)\rightarrow (x),(y)$, there is excellent agreement between the simulations and theory. Parameters: $\alpha=\alpha(2)=\alpha(3)=\alpha(4)=1,\beta=0.4,\beta(2)=\beta(3)=1.3,\beta(4)=0.6, z=1$; legend displays value of $\nu=\nu(1,2)=\nu(2,3)=\nu(3,4)$ used.}}
	\label{fig_alt_trans}
\end{figure}



\subsection{Derivation of select formulas in Section \ref{sec_twodrugs}}\label{sec_deriv_twodrug}
Here we outline the derivation of the expressions contained in Section \ref{sec_twodrugs}, in particular formulas \eqref{cond_decreasetime}\eqref{eqn_mintime}\eqref{eqn_twodrug_prob}. These expressions were derived with the assistance of the Mathemetica software \cite{Mathematica}.

We start with \eqref{cond_decreasetime}. Let the set of stepwise paths be $\mathcal{P}_{\mathrm{SE}}$ and the set of directed paths be $\mathcal{P}_{\mathrm{DE}}$. For a path $p=(p_1,\ldots p_{l+1})$ with number of edges (path length) $l$, write  $c(p)=\prod_{i=2}^{l}(\growthone - \lambda(p_i))$. Note for any path considered here, if we follow the path on Fig.\ \ref{fig_twodrugstates}, then $c(p)$ is the product of the vertex labels, taken over all intermediate vertices (excluding the root and the target). Further, let us introduce $r=n_\rmd/n_\rms$ and $f = (n_\rmd+n_\rms)/n_{\mathrm{Tot}}$. This implies $n_\rmdd = (1-f)n_\rmt,\,n_\rms=f n_\rmt /(1+r),\, n_\rmd = r f n_\rmt /(1+r)$. Then 
\begin{align}\label{eqn_phitwodrug}
\phi_{N} &= \frac{n_\rmd}{n_\rmd+n_\rmdd}\frac{n_\rmdd}{n_\rms+n_\rmdd}m^2\nu^2\sum_{p\in \mathcal{P}_\mathrm{SE}} (c(p))^{-1}+\frac{n_\rmdd}{n_\rmdd+n_\rmd}\nu^2 m \sum_{p\in \mathcal{P}_\mathrm{DE}} (c(p))^{-1}
\\
&=\left[1+\frac{(1-f)(1+r)}{rf}\right]^{-1}\left[1+\frac{f}{(1+r)(1-f)}\right]^{-1} m^2\nu^2\sum_{p\in \mathcal{P}_\mathrm{SE}} (c(p))^{-1}\,+
\\ 
&\hspace{4mm}\left[1+\frac{r }{(1+r)(1-f)}\right]^{-1} \nu^2 m \sum_{p\in \mathcal{P}_\mathrm{DE}} (c(p))^{-1}.
\end{align}
Using the approximate formula \eqref{eqn_medtime}, we have $t_{1/2}(r)=-\lambda^{-1}\log(\lambda^{-1} \phi_{N} z))$. Also recall $z=\gamma n_s= \gamma f n_\rmt /(1+r)$. Differentiating $t_{1/2}(r)$ and evaluating at $r=0$ leads to 
$$
\frac{d}{d r}t_{1/2}(0) = \frac{\frac{1}{1-f}-\frac{fm(d+s)^3}{ds(d^2+3ds+s^2)}}{\growthone}.
$$
Increasing the relative size of the single drug compartment speeds up resistance when $\frac{d}{d r}t_{1/2}(0)<0$ or equivalently when
$$
\frac{ds(d^2+3ds+s^2)}{1-f}>f m(d+s)^2.
$$
Expansions in $f$ and $s/d$ (recall $f\ll 1, s/d\ll 1$), keeping only leading order terms and using the definition of $f$ leads to \eqref{cond_decreasetime}.

Turning to \eqref{eqn_mintime}, we wish to find $r$ such that $\frac{d}{d r}t_{1/2}(r)=0$. The numerator of $\frac{d}{d r}t_{1/2}(r)$ is quadratic in $r$. After solving for both roots and expanding in powers of $s/d$, we see the leading term for one of the roots is -1. Thus the root of interest is the other root, which to leading order is the expression in \eqref{eqn_mintime}.

For \eqref{eqn_twodrug_prob}, we use our main result \eqref{eqn_pathsthm}. With $\phi_N$ as given above in \eqref{eqn_phitwodrug}, this yields
\begin{align}
\pr(T^{\rm{SE}}<T^{\rm{DE}})& \approx \frac{ \frac{n_\rmd}{n_\rmd+n_\rmdd}\frac{n_\rmdd}{n_\rms+n_\rmdd} m^2\nu^2\sum_{p\in \mathcal{P}_\mathrm{SE}} (c(p))^{-1}}{\phi_N}
\\
& = \left( 1+\frac{(n_\rmdd+n_\rms)}{n_\rmd m} \frac{d^3 s +3 s^2 d^2 +d s^3}{(d+s)^3}\right)^{-1}.
\end{align}
Upon expanding in $s/d$, to leading order we have \eqref{eqn_twodrug_prob}.

\end{redsect}


\bibliographystyle{vancouver}
\bibliography{cancer4}

\end{document}